\theoremstyle{definition}
\newtheorem{theorem}{Theorem}
\newtheorem{definition}{Definition}
\newtheorem{lemma}{lemma}
\title{Low Rank Quaternion Matrix Completion Based on Quaternion QR Decomposition and Sparse Regularizer}
\author[a]{Juan Han}
\author[a]{Liqiao Yang}
\author[a]{Kit Ian Kou\thanks{Corresponding author: kikou@umac.mo }}
\author[b]{Jifei Miao}
\author[c]{Lizhi Liu}
\affil[a]{Department of Mathematics, Faculty of
	Science and Technology, University of Macau, Macau 999078, China}
\affil[b]{School of Mathematics and Statistics, Yunnan University, Kunming, Yunnan, 650091, China}	
\affil[c]{Department of Radiology, Sun Yat-sen University Cancer Center, State Key Laboratory of Oncology in South China, Collaborative Innovation Center for Cancer Medicine, Guangdong Key Laboratory of Nasopharyngeal Carcinoma Diagnosis and Therapy, 651 Dongfeng Road East, Guangzhou, Guangdong, 510060, China}
\date{}
\begin{document}

\captionsetup[figure]{labelfont={},labelformat={default},labelsep=period,name={Fig.}}

\maketitle

\begin{abstract}
Matrix completion is one of the most challenging problems in computer vision. Recently, quaternion representations of color images have achieved competitive performance in many fields. Because it treats the color image as a whole, the coupling information between the three channels of the color image is better utilized. Due to this, low-rank quaternion matrix completion (LRQMC) algorithms have gained considerable attention from researchers. In contrast to the traditional quaternion matrix completion algorithms based on quaternion singular value decomposition (QSVD), we propose a novel method based on quaternion Qatar Riyal decomposition (QQR). In the first part of the paper, a novel method for calculating an approximate QSVD based on iterative QQR is proposed (CQSVD-QQR), whose computational complexity is lower than that of QSVD. The largest $r \ (r>0)$ singular values of a given quaternion matrix can be computed by using CQSVD-QQR. Then, we propose a new quaternion matrix completion method based on CQSVD-QQR which combines low-rank and sparse priors of color images. Experimental results on color images and color medical images demonstrate that our model outperforms those state-of-the-art methods.   
\end{abstract}

\par \textbf{Keywords:} quaternion matrix completion; quaternion QR decomposition; low rank; sparse representation

 \section{Introduction}
Image completion, which aims to recover lost entries based on limited known pixels of an image, has a wide range of applications in computer vision. It has received extensive attention from researchers in recent years \cite{hu2012fast, liu2015truncated, gu2017weighted, miao2021color}. Low-rank matrix minimization problems are constructed and solved using low-rank properties of matrices. There has been great success with low-rank matrix completion (LRMC)-based methods. The low-rank property of LRMC-based methods is generally controlled by a constrained rank minimization problem. 
Let $\mathbf{M} \in \mathbb{R}^{M \times N}, \ M \geq N >0$ be an incomplete matrix, and then the traditional LRMC-based method can be formulated as follows:
\begin{equation}
\mathop{\text{min}}\limits_{\mathbf{X}}\text{rank}(\mathbf{X}),   \: \text{s.t.}, \ {P}_{\Omega}(\mathbf{X}-\mathbf{M})=\mathbf{0}, 
\label{LRMC_TRA}
\end{equation}
where $\mathbf{X}\in \mathbb{R}^{M \times N}$ is a completed output matrix, rank$(\cdot)$ is the rank function, and $\Omega$ is the observed entries set.
${P}_{\Omega}(\mathbf{X})$ is defined as
\[({P}_{\Omega}(\mathbf{X}))_{mn}=\begin{cases}
	\mathbf{X}_{mn}, \ (m,n)\in \Omega,\\
	0, \  \  \  \quad  (m,n)\notin \Omega.
\end{cases}
\]
However, the rank function is discontinuous and non-convex, making it NP-hard. The nuclear norm has been shown to be the tightest convex relaxation of the above rank minimization problem \cite{Cands2009Exact}. Thus, the nuclear norm as a convex surrogate of the rank function is frequently used to deal with the image completion problem \cite{Cai2010Asingular, Toh2010}. 
The corresponding minimization method is formulated as 
\begin{equation}
\mathop{\text{min}}\limits_{\mathbf{X}}\|\mathbf{X}\|_{\ast},   \: \text{s.t.}, \ {P}_{\Omega}(\mathbf{X}-\mathbf{M})=\mathbf{0}, 
\label{LRMC_NNM}
\end{equation}
However, the authors in \cite{hu2012fast} stated that these existing methods based on nuclear norm minimization minimize all singular values simultaneously, so the rank may not be well approximated. And they proposed the truncated nuclear norm (TNN) regularization, which approximates the rank function more accurately than the nuclear norm. Gu $\it{et}$ $ \it{al}$. \cite{Gu2014CVPR, gu2017weighted} introduced a weighted nuclear norm (WNNM) technique that gives different weights to singular values in order to treat them differently. Additionally, there are other enhanced nuclear norm-based approaches that have shown good image completion outcomes, such as \cite{Nie2012Low, LIU2014218, 7539605}.\\
\indent
When processing color images, the aforementioned LRMC-based approaches essentially process the RGB's three color channels individually before combining them to get the final restoration outcome. The loss of coupling information between the three color channels could result from this strategy's disregard for the relationships between the three channels.\\
\indent
Researchers have paid close attention to the quaternion representation of color images in recent years. Numerous studies have demonstrated the effectiveness of quaternions in describing color images, and techniques based on them have produced competitive outcomes for a variety of image processing issues, including color image edge identification \cite{Hu2018Phase}, color face recognition \cite{zou2016quaternion}, color image denoising \cite{yu2019quaternion}, color image completion\cite{miao2021color}. For a color image, its three color channels exactly match the three imaginary sections of a quaternion matrix. A pixel of a color image can be represented by a pure quaternion as follows:
\[ \dot p = 0+p_R \,  i+p_G \,  j +p_B \,   k,\]
where $\dot p$ denotes a color pixel, $p_R$, $p_G$, and $p_B$ are the pixel values of the three channels RGB of $\dot p$, respectively, and $i$, $j$, $k$ are the three imaginary units of a quaternion.
The relationship between the three channels can be better leveraged since the pure quaternion matrix represents the three color channels of the color image holistically. Recently, approaches for image completion using quaternion-based LRMC have also been proposed. Similar to the problem (\ref{LRMC_NNM}), the widely used low-rank quaternion matrix completion algorithm (LRQMC) is formulated as follows:
\begin{equation}
\mathop{\text{min}}\limits_{\dot{\mathbf{X}}}\|\dot{\mathbf{X}}\|_{\ast},   \: \text{s.t.}, \ {P}_{\Omega}(\dot{\mathbf{X}}-\dot{\mathbf{M}})=\mathbf{0}, 
\label{LRQMC_NNM}
\end{equation}
where $\dot{\mathbf{X}}\in \mathbb{H}^{M \times N}$ is the recovered output quaternion matrix, $\dot{\mathbf{M}}\in \mathbb{H}^{M \times N}$ is the observed quaternion matrix, and $\|\dot{\mathbf{X}}\|_{\ast}$ is the nuclear norm of a quaternion matrix $\dot{\mathbf{X}}$.
${P}_{\Omega}(\dot{\mathbf{X}})$ is defined as
\[({P}_{\Omega}(\dot{\mathbf{X}}))_{mn}=\begin{cases}
	\dot{\mathbf{X}}_{mn}, \ (m,n)\in \Omega,\\
	0, \  \  \  \quad  (m,n)\notin \Omega.
\end{cases}
\]
Chen $\it{et}$ $ \it{al}$. \cite{chen2019low} provided a general model for low-rank quaternion matrix completion based on the nuclear norm and three nonconvex rank surrogates (i.e., functions based on the Laplace function, Geman function, and weighted Schatten norm, respectively). These techniques can produce better results when compared to some traditional LRMC-based techniques. However, to use these techniques, one must first solve the quaternion singular value decomposition (QSVD) of the complete quaternion matrix, just as some other conventional LRMC-based techniques. 
Large processing matrices necessitate complex algorithms that take a long time to complete. 
Therefore, it would be wise to look into more effective algorithms.
Three LRQMC approaches are suggested in \cite{miao2020quaternion} based on the quaternion double Frobenius norm (Q-DFN), quaternion double nuclear norm (Q-DNN), and quaternion Frobenius/nuclear norm (Q-FNN) minimization models. This algorithm only needs to process two-factor quaternion matrices of lesser sizes after these three quaternion-based bilinear factorizations of the quaternion matrix, which lessens the computing difficulty of solving the QSVD. However, this factorization might become entangled in local minima \cite{cabral2013unifying}.\\
\indent
Furthermore, more information \cite{yang2010image, liu2018fast} should be taken into account for improved image completion rather than just relying on the low-rank qualities of images. The image completion problem can be impacted significantly by the sparsity of images in a particular domain, such as transform domains where many signals have a naturally sparse structure. To provide an approach for image completion, the authors \cite{dong2018low} combined low-rank and sparse priors. The $l_1$-norm regularizer is used to formulate the sparse prior, and the truncated nuclear norm is chosen as the surrogate of the rank function. The authors in \cite{yang2022quaternion} extended this method to the quaternion system. These methods can achieve relatively good completion results but still rely on the QSVD calculation of large matrices.\\
\indent
Actually, singular values and singular vectors can also be calculated by Qatar Riyal (QR) decomposition \cite{liu2018fast}. Inspired by this, in this paper, we propose a novel quaternion Tri-Factorization method and thus we only need to calculate the QSVD of a small quaternion matrix. We also take the sparsity into account to increase the accuracy of the completion results. The main contributions of this paper are summarized as follows:
\begin{itemize}
    \item A novel method for calculating an approximate QSVD (CQSVD-QQR) is proposed based on quaternion Qatar Riyal decomposition (QQR). The suggested approach can compute the largest $r (r>0)$ singular values of a given quaternion matrix, and its computational complexity is lower than QSVD's.
    \item Based on the novel quaternion Tri-Factorization method and sparse quaternion matrix priors, we propose a novel LRQMC model. The dimension of the quaternion matrix in the quaternion nuclear norm regularization term is reduced based on CQSVD-QQR.
\end{itemize}
Listed below is the organization of this article. In Section 2, numerous notations are introduced along with a brief overview of the fundamentals of quaternion algebra. Section 3 of the paper provides a brief discussion of the theory of the quaternion discrete cosine transform. The CQSVD-QQR technique for computing an approximate QSVD, which is based on the quaternion QR decomposition, is then given. After that our proposed quaternion-based matrix completion algorithm, QQR-QNN-SR is introduced. We also discuss the computational complexities of QQR-QNN-SR. In Section 4, we illustrate the convergence of CQSVD-QQR and provide some examples of our method's utility in real-world situations. We also compare our method with some state-of-the-art methods. The conclusion is presented in Section 5.

\section{Notations and preliminaries}
This section begins with a summary of various mathematical notations and ends with a brief review of some fundamental quaternion algebra concepts. In addition, we recommend \cite{girard2007quaternions} for a more comprehensive understanding of quaternion algebra.
\subsection{Notations}
In this paper, we use $\mathbb{R}$, $\mathbb{C}$, $\mathbb{H}$ to present the set of real numbers, the set of complex numbers, and the set of quaternions, respectively. In real and complex fields, scalars, vectors, and matrices are indicated by lowercase letters, such as, $a$, boldface lowercase letters, such as, $\mathbf{a}$, and boldface capital letters, such as, $\mathbf{A}$, respectively. The symbols $\dot{q}$, $\dot{\mathbf{q}}$, $\dot{\mathbf{Q}}$, respectively, stand for a quaternion scalar, vector, and matrix, respectively. The symbols $(\cdot)^{-1}$, $(\cdot)^{\dagger}$, $(\cdot)^{T}$, $(\cdot)^{\ast}$, and $(\cdot)^{H}$, respectively, stand for the inverse, pseudoinverse, transpose, conjugation, and conjugate transpose. The identity matrix is represented by $\mathbf{I}_{m} \in \mathbb{R}^{m \times m}$.

\subsection{Basic knowledge of quaternion algebra}
Quaternion was introduced by W.R. Hamilton \cite{hamilton1844ii} in 1843. A quaternion $\dot{q} \in \mathbb{H}$ can be represented as follows:
\[\dot{q}=q_0+q_1 i +q_2 j+q_3 k,\]
where $q_0$, $q_1$, $q_2$, $q_3 \in \mathbb{R}$, and $i$, $j$, $k$ are imaginary number units which obey the following multiplication rules:
\[\begin{cases}
	i^2 =j^2=k^2=ijk=-1,\\
	ij=-ji=k, jk=-kj=i, ki=-ik=j.
\end{cases}
\]
We can rewrite a quaternion $\dot{q}=q_0+q_1 i +q_2 j+q_3 k$ as $\dot{q}=q_0+q_1 i +(q_2 +q_3 i) j = c_1+c_2 j$, where $c_1$, $c_2 \in \mathbb{C}$. A quaternion $\dot{q} \in \mathbb{H}$ can also be decomposed into a real part $\mathfrak{R}(\dot{q})$ and a vector part $\mathbf{V}(\dot{q})$ as follows:
\[\dot{q} = \mathfrak{R}(\dot{q}) + \mathbf{V}(\dot{q}),\]
where $\mathfrak{R}(\dot{q})= q_0 \in \mathbb{R}$, and $\mathbf{V}(\dot{q}) = q_1 i +q_2 j+q_3 k$. Moreover, if $\mathfrak{R}(\dot{q}) =0$, then $\dot{q}$ is called a pure quaternion. And $\mathbf{V}(\mathbb{H})$ denotes the set of pure quaternions. 

The conjugate of a quaternion is given by 
$\dot{q}^{\ast}=q_0-q_1 i -q_2 j-q_3 k$. And the norm of a quaternion is defined as
$\lvert{\dot{q}} \rvert =\sqrt{qq^{\ast}}=\sqrt{q^{\ast}q}=\sqrt{q_0^2+q_1^2+q_2^2+q_3^2}.$
It is worth noting that the commutative law of multiplication on the quaternion system does not hold, i.e., $\dot{q}_1 \dot{q}_2 \neq \dot{q}_2 \dot{q}_1$ in general.

Similarly, Let $\dot{\mathbf Q}=(\dot q_{mn}) \in \mathbb{H}^{M \times N}$ be a quaternion matrix, then $\dot{\mathbf Q} = \mathbf{ Q}_0 +\mathbf{ Q}_1 i +\mathbf{ Q}_2 j+\mathbf{ Q}_3 k$, where $\mathbf{ Q}_t \in  \mathbb{R}^{M \times N} (t=0, \, 1, \, 2, \, 3)$. A quaternion matrix is named a pure quaternion matrix, if $\mathfrak{R}(\dot{\mathbf Q}) =\mathbf{ Q}_0=\mathbf{0}$. The Frobenius norm of a quaternion matrix $\dot{\mathbf {Q}}$ is defined as:
$\left\| \dot{\mathbf {Q}} \right\|_{F}=\sqrt {\sum_{m=1}^{M}\sum_{n=1}^{N}\lvert {\dot {q}}_{mn}\rvert ^{2}} = \sqrt{\text{tr}(\dot{\mathbf {Q}}^{H}\dot{\mathbf {Q})}}.$
And the nuclear norm of $\dot{\mathbf{Q}}$ is defined as: $\|\dot{\mathbf{Q}}\|_{\ast}=\sum_{t}\sigma_{t}$, where $\sigma_{t}$ is the nonzero singular value of $\dot{\mathbf{Q}}$. 

\begin{definition}[The complex representation of a quaternion matrix \cite{ZHANG199721}] \label{Complex representation}  Given a quaternion matrix $\dot{\mathbf{Q}}= \mathbf{ Q}_0 +\mathbf{ Q}_1 i +\mathbf{ Q}_2 j+\mathbf{ Q}_3 k \in \mathbb{H}^{M \times N}$, it can be written as $\dot{\mathbf{Q}}=\mathbf{Q_a}+\mathbf{Q_b}j$, which is called the Cayley-Dickson form of $\dot{\mathbf{Q}}$. And the complex representation of $\dot{\mathbf{Q}}$ is defined as follows 
	\begin{equation}
		{\chi_{\dot{\mathbf Q}}} =
		\begin{bmatrix}
			\mathbf{Q_a}  & \mathbf{Q_b} \\
			-{\mathbf{Q_b}}^\ast & {\mathbf{Q_a}}^\ast
		\end{bmatrix},
	\end{equation}
	where $\mathbf{Q_a}=\mathbf{ Q}_0 +\mathbf{ Q}_1 i$, $\mathbf{Q_b}=\mathbf{ Q}_2 +\mathbf{ Q}_3 i \in \mathbb{C}^{M \times N}$, ${\mathbf{Q_a}}^\ast=\mathbf{ Q}_0 -\mathbf{ Q}_1 i$, ${\mathbf{Q_b}}^\ast=\mathbf{ Q}_2 -\mathbf{ Q}_3 i$, and ${\chi_{\dot{\mathbf Q}}} \in \mathbb{C}^{2M \times 2N}$.
\end{definition}
The complex representation matrices of quaternions and associated matrices share a number of characteristics.  We suggest reading \cite{ZHANG199721} for more details.

\noindent
\begin{theorem}[Singular value decomposition of a quaternion matrix (QSVD) \cite{ZHANG199721}] \label{QSVD}  Given a quaternion matrix $\dot{\mathbf{Q}} \in \mathbb{H}^{M \times N}$ of rank $r$, there exist two quaternion unitary matrices $\dot{\mathbf{U}} \in \mathbb{H}^{M \times M}$ and $\dot{\mathbf{V}} \in \mathbb{H}^{N \times N}$ such that
	\begin{equation}
		\dot{\mathbf{Q}} = \dot{\mathbf{U}}\begin{bmatrix}
			\mathbf{\Sigma}_r  & \mathbf{0} \\
			\mathbf{0} & \mathbf{0}
		\end{bmatrix}\dot{\mathbf{V}}^H=\dot{\mathbf{U}}\mathbf{\Lambda} \dot{\mathbf{V}}^H,\end{equation}
	where $\mathbf{\Sigma}_r$ is a real diagonal matrix with $r$ positive singular values of $\dot{\mathbf{Q}}$ on its diagonal.
\end{theorem}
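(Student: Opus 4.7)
The plan is to reduce the quaternion problem to the well-known complex matrix SVD via the complex representation $\chi_{\dot{\mathbf{Q}}}$ introduced in Definition~\ref{Complex representation}, and then exploit the special block structure of $\chi_{\dot{\mathbf{Q}}}$ to transfer the decomposition back to the quaternion setting. First I would form the complex representation
\[
\chi_{\dot{\mathbf Q}} = \begin{bmatrix} \mathbf{Q_a} & \mathbf{Q_b} \\ -\mathbf{Q_b}^\ast & \mathbf{Q_a}^\ast \end{bmatrix} \in \mathbb{C}^{2M\times 2N},
\]
and recall the one-to-one correspondence between quaternion matrices and complex matrices of this block form (with the key property that $\chi_{\dot{\mathbf A}\dot{\mathbf B}} = \chi_{\dot{\mathbf A}}\chi_{\dot{\mathbf B}}$ and $\chi_{\dot{\mathbf Q}^H} = \chi_{\dot{\mathbf Q}}^H$). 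In particular, a quaternion matrix $\dot{\mathbf{U}}$ is unitary if and only if $\chi_{\dot{\mathbf U}}$ is a complex unitary matrix of the same block form, and the rank of $\chi_{\dot{\mathbf Q}}$ equals $2r$.

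Next I would apply the classical complex SVD to $\chi_{\dot{\mathbf Q}}$, giving complex unitaries $\mathbf{U}_c\in\mathbb{C}^{2M\times 2M}$, $\mathbf{V}_c\in\mathbb{C}^{2N\times 2N}$ and a diagonal $\boldsymbol{\Sigma}_c$ with $2r$ positive entries. The crucial structural observation is that $\chi_{\dot{\mathbf Q}}$ satisfies the symplectic-type identity $\mathbf{J}_M\,\overline{\chi_{\dot{\mathbf Q}}}\,\mathbf{J}_N^{-1} = \chi_{\dot{\mathbf Q}}$, where $\mathbf{J}_m = \begin{bmatrix}\mathbf{0} & \mathbf{I}_m \\ -\mathbf{I}_m & \mathbf{0}\end{bmatrix}$. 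Applied to $\chi_{\dot{\mathbf Q}}\chi_{\dot{\mathbf Q}}^H$, this identity forces every eigenvalue of $\chi_{\dot{\mathbf Q}}\chi_{\dot{\mathbf Q}}^H$ to occur with even multiplicity, so the singular values of $\chi_{\dot{\mathbf Q}}$ come in identical pairs $\sigma_1=\sigma_1,\ \sigma_2=\sigma_2,\ \dots$. After permuting the columns of $\mathbf{U}_c,\mathbf{V}_c$, I can write $\boldsymbol{\Sigma}_c=\mathrm{diag}(\boldsymbol{\Sigma},\boldsymbol{\Sigma})$ with $\boldsymbol{\Sigma}=\mathrm{diag}(\mathbf{\Sigma}_r,\mathbf{0})$ real and nonnegative.

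The heart of the argument, and the step I expect to be the main obstacle, is showing that $\mathbf{U}_c$ and $\mathbf{V}_c$ can themselves be chosen with the quaternionic block form $\begin{bmatrix}\mathbf{A}&\mathbf{B}\\-\mathbf{B}^\ast&\mathbf{A}^\ast\end{bmatrix}$. Within each eigenspace of $\chi_{\dot{\mathbf Q}}\chi_{\dot{\mathbf Q}}^H$, the mapping $\mathbf{x}\mapsto \mathbf{J}_M\overline{\mathbf{x}}$ is an antilinear involution of square $-\mathbf{I}$, so each eigenspace admits a basis of the form $\{\mathbf{x}_1,\mathbf{J}_M\overline{\mathbf{x}_1},\dots,\mathbf{x}_k,\mathbf{J}_M\overline{\mathbf{x}_k}\}$; arranging these pairs into columns yields a unitary $\mathbf{U}_c$ of the desired block shape. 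The same construction applied on the right, together with the relation $\chi_{\dot{\mathbf Q}}^H\chi_{\dot{\mathbf Q}}\mathbf{v}=\sigma^2\mathbf{v}$, produces a $\mathbf{V}_c$ with the same block structure, and one checks that these choices are compatible with the SVD equation.

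Finally I would translate back: since $\mathbf{U}_c$ and $\mathbf{V}_c$ have the block form of complex representations, they correspond to quaternion unitary matrices $\dot{\mathbf{U}}\in\mathbb{H}^{M\times M}$ and $\dot{\mathbf{V}}\in\mathbb{H}^{N\times N}$, and $\boldsymbol{\Sigma}_c$ corresponds to the real quaternion matrix $\boldsymbol{\Lambda}=\begin{bmatrix}\mathbf{\Sigma}_r&\mathbf{0}\\\mathbf{0}&\mathbf{0}\end{bmatrix}$. Using the multiplicativity $\chi_{\dot{\mathbf U}\boldsymbol{\Lambda}\dot{\mathbf V}^H}=\chi_{\dot{\mathbf U}}\chi_{\boldsymbol{\Lambda}}\chi_{\dot{\mathbf V}^H}=\mathbf{U}_c\boldsymbol{\Sigma}_c\mathbf{V}_c^H=\chi_{\dot{\mathbf Q}}$ and the injectivity of $\chi$, we conclude $\dot{\mathbf Q} = \dot{\mathbf U}\boldsymbol{\Lambda}\dot{\mathbf V}^H$, which is exactly the QSVD.
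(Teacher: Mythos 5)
The paper gives no proof of Theorem~\ref{QSVD}; it is quoted directly from \cite{ZHANG199721}, whose argument is exactly the complex-representation reduction you use. Your proposal is correct and essentially the same route: the identity $\mathbf{J}_M\,\overline{\chi_{\dot{\mathbf{Q}}}}\,\mathbf{J}_N^{-1}=\chi_{\dot{\mathbf{Q}}}$, the induced antilinear structure of square $-\mathbf{I}$ on the eigenspaces of $\chi_{\dot{\mathbf{Q}}}\chi_{\dot{\mathbf{Q}}}^{H}$, the resulting paired singular values, the block-structured unitaries, and the transfer back through the multiplicativity and injectivity of $\chi$ all check out, matching the standard proof the paper relies on.
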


\noindent
\begin{theorem}[The quaternion Qatar Riyal decomposition (QQR) \cite{Wei2018QuaMatComs}] \label{QQR} Given any quaternion matrix $\dot{\mathbf{A}} \in \mathbb{H}^{M \times N}$ of rank $r$, there exist a unitary matrix $\dot{\mathbf{Q}} \in \mathbb{H}^{M \times M}$ and a weakly upper triangular quaternion matrix $\dot{\mathbf{R}} \in \mathbb{H}^{M \times N}$ such that
	\begin{equation}
		\dot{\mathbf{A}} = \dot{\mathbf{Q}} \dot{\mathbf{R}}.
  \end{equation}
	That is, there exists a permutation matrix $\mathbf{P}\in \mathbb{R}^{N \times N}$ such that  $\dot{\mathbf{R}}\mathbf{P}$ is an upper triangular quaternion matrix.
\end{theorem}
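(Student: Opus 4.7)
My plan is to build $\dot{\mathbf{Q}}$ explicitly as a product of quaternion Householder reflectors, combined with a column-pivoting strategy that produces the permutation $\mathbf{P}$. The induction runs on $\min(M,N)$: at each step I select the remaining column of largest Euclidean norm, swap it into the leading position, and apply a single reflector from the left to zero every subdiagonal entry of that column. After exactly $r$ such steps the trailing working submatrix has rank $0$ and is therefore identically zero, so no further reflectors are needed; the staircase shape of $\dot{\mathbf{R}}$ is then precisely what the column pivoting records, and applying $\mathbf{P}$ from the right cleans it up to a genuine upper triangular matrix.

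For the reflectors I would define, for any nonzero $\dot{\mathbf{v}} \in \mathbb{H}^M$,
\[\dot{\mathbf{H}}(\dot{\mathbf{v}}) \;=\; \mathbf{I}_M - \frac{2\,\dot{\mathbf{v}}\dot{\mathbf{v}}^H}{\dot{\mathbf{v}}^H \dot{\mathbf{v}}},\]
noting that the denominator is a positive real number, so the quotient is unambiguous despite non-commutativity. A direct computation using that $\dot{\mathbf{v}}^H\dot{\mathbf{v}}$ is real yields $\dot{\mathbf{H}}(\dot{\mathbf{v}})^H = \dot{\mathbf{H}}(\dot{\mathbf{v}})$ and $\dot{\mathbf{H}}(\dot{\mathbf{v}})^2 = \mathbf{I}_M$, so $\dot{\mathbf{H}}(\dot{\mathbf{v}})$ is unitary. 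To annihilate the subdiagonal of a column $\dot{\mathbf{a}} \in \mathbb{H}^M$ I would set $\dot{\mathbf{v}} = \dot{\mathbf{a}} + \|\dot{\mathbf{a}}\|_{2}\, \dot{\eta}\, \mathbf{e}_1$, where $\dot{\eta}$ is a unit quaternion chosen so that $\dot{a}_1^{\ast}\,\dot{\eta}$ is a nonnegative real; this choice makes $\dot{\mathbf{v}}^H\dot{\mathbf{a}}$ purely real, whence $2(\dot{\mathbf{v}}^H\dot{\mathbf{a}})/(\dot{\mathbf{v}}^H\dot{\mathbf{v}})=1$ and $\dot{\mathbf{H}}(\dot{\mathbf{v}})\dot{\mathbf{a}} = -\|\dot{\mathbf{a}}\|_2\, \dot{\eta}\, \mathbf{e}_1$. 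Embedding subsequent reflectors as the identity on their upper-left blocks, the accumulated product $\dot{\mathbf{Q}} = \dot{\mathbf{H}}_1 \cdots \dot{\mathbf{H}}_r$ stays unitary and $\mathbf{P} = \mathbf{P}_1 \cdots \mathbf{P}_r$ is a real permutation matrix, with $\dot{\mathbf{Q}}^H \dot{\mathbf{A}}\, \mathbf{P}$ upper triangular by construction; hence $\dot{\mathbf{R}} := \dot{\mathbf{Q}}^H \dot{\mathbf{A}}$ is weakly upper triangular as required.

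The main obstacle will be scrupulously respecting non-commutativity throughout. The phase quaternion $\dot{\eta}$ cannot simply be a real sign as in the classical real case, and when verifying $\dot{\mathbf{H}}(\dot{\mathbf{v}})^2 = \mathbf{I}_M$ one must never tacitly move a quaternion scalar past a quaternion vector; the fact that $\dot{\mathbf{v}}^H \dot{\mathbf{v}}$ is real is exactly what keeps the computation intact. A subtler point is the rank accounting behind the pivoting: each Householder transformation preserves the rank of the working submatrix, while pivoting selects the column of maximum norm, so when that maximum reaches zero (which happens after exactly $r$ steps) every remaining column must already vanish. Taken together, these two ingredients deliver the staircase pattern and hence the claimed decomposition.
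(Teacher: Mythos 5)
You should know at the outset that the paper does not prove this statement at all: Theorem \ref{QQR} is imported by citation from \cite{Wei2018QuaMatComs} and used as a black box, so there is no internal proof to compare against. Judged on its own, your Householder-with-column-pivoting construction is a sound and essentially standard way to establish the result over $\mathbb{H}$, and the two non-commutativity issues you flag are handled correctly: since $\dot{\mathbf{v}}^{H}\dot{\mathbf{v}}$ is a positive real, $\dot{\mathbf{H}}(\dot{\mathbf{v}})$ is Hermitian and involutory, hence unitary; and your phase choice works because $\dot{\eta}$ always exists (take $\dot{\eta}=\dot{a}_1/|\dot{a}_1|$ if $\dot{a}_1\neq 0$, else $\dot{\eta}=1$), making $\dot{\mathbf{v}}^{H}\dot{\mathbf{a}}=\|\dot{\mathbf{a}}\|_2^{2}+\|\dot{\mathbf{a}}\|_2\,\dot{a}_1^{\ast}\dot{\eta}$ real and nonnegative, whence $\dot{\mathbf{v}}^{H}\dot{\mathbf{v}}=2\,\dot{\mathbf{v}}^{H}\dot{\mathbf{a}}>0$ and $\dot{\mathbf{H}}(\dot{\mathbf{v}})\dot{\mathbf{a}}=\dot{\mathbf{a}}-\dot{\mathbf{v}}=-\|\dot{\mathbf{a}}\|_2\,\dot{\eta}\,\mathbf{e}_1$; you should display this short identity rather than hide it behind ``whence.'' The only step that needs genuine tightening is the rank accounting: ``each Householder transformation preserves the rank of the working submatrix'' is not by itself enough to conclude that the pivot norm vanishes after exactly $r$ steps. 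What you need is that after $k$ pivoted steps $\dot{\mathbf{Q}}_k^{H}\dot{\mathbf{A}}\mathbf{P}_k$ is block upper triangular with a $k\times k$ upper triangular leading block whose diagonal entries $-\|\dot{\mathbf{a}}\|_2\dot{\eta}$ are nonzero (this is exactly what max-norm pivoting guarantees while the trailing block is nonzero); since that block is invertible over the division ring $\mathbb{H}$, the rank splits as $k+\operatorname{rank}(\text{trailing block})$, which forces the trailing block to be nonzero for $k<r$ and identically zero at $k=r$. Finally, note that the pivoting is only needed for the rank-revealing, ``weakly'' upper triangular form matching the cited reference; the theorem as literally stated is already satisfied by unpivoted quaternion Householder QR with $\mathbf{P}=\mathbf{I}$, so your argument proves slightly more than is asked.
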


\section{Our method of quaternion completion}
\subsection{Quaternion discrete cosine transform }
In this section, we give a brief introduction to the quaternion discrete cosine transform. 
\begin{definition}[Forward quaternion discrete cosine transform (FQDCT) \cite{feng2008quaternion}] 
Since quaternions do not satisfy commutativity, quaternion discrete cosine transform (QDCT) of a quaternion matrix $\dot{\mathbf{Q}} \in \mathbb{H}^{M \times N}$ has two forms \cite{feng2008quaternion}, i.e., left-handed form and right-handed form: 
	\begin{equation}
        \mathop {FQDCT}\nolimits ^{L} (u,v) = \alpha(u)\alpha (v)\sum \limits _{m =0}^{M - 1} {\sum \limits _{n = 0}^{N - 1} {\dot{q}\, \dot{\mathbf{Q}} \left ({{m,n} }\right)N\left ({{u,v,m,n} }\right)} },  
	\end{equation}
        \begin{equation}
        \mathop {FQDCT}\nolimits ^{R} (u,v) = \alpha(u)\alpha (v)\sum \limits _{m =0}^{M - 1} {\sum \limits _{n = 0}^{N - 1} {\dot{\mathbf{Q}} \left ({{m,n} }\right)N\left ({{u,v,m,n} }\right)\,\dot{q} } },
	\end{equation}
	where $\mathfrak{R}(\dot{q}) =0$ named a quaternionzation factor and satisfies $\dot{q}^2=-1$.
 Similar to discrete cosine transform (DCT) in the real domain, $\alpha(u)$, $\alpha(v)$, and $N\left ({u,v,m,n} \right)$ take the values 
 \begin{equation}
\alpha(u) = \begin{cases}
\sqrt{\frac{1}{M}}\quad {\text {for}}\quad u=0\cr \sqrt{\frac{2}{M}}\quad {\text {for}}\quad u\neq 0
\end{cases}, \quad \alpha(v) = \begin{cases}
\sqrt{\frac{1}{N}}\quad {\text {for}}\quad v=0\cr \sqrt{\frac{2}{N}}\quad {\text {for}}\quad v\neq 0
\end{cases},
 \end{equation}
 \begin{equation}
N\left ({u,v,m,n} \right) = \cos\left[{\frac{\pi (2m+1)u}{2M} }\right]\cos\left[{\frac{\pi (2n+1)v}{2N} }\right].
 \end{equation}
 \end{definition}
 According to the FQDCT, the form of inverse quaternion discrete cosine transform (IQDCT) also has two categories, which are given as follows:
 \begin{equation} \mathop {IQDCT}\nolimits ^{L}(m,n)=\sum \limits _{u=0}^{M-1}{\sum \limits _{v=0}^{N-1} {\alpha (u)\alpha (v)\,{\dot{q}}\, \dot{\mathbf C}\left ({{u,v}}\right)N\left ({{u,v,m,n}}\right)} },
 \end{equation}
 \begin{equation} \mathop {IQDCT}\nolimits ^{R}(m,n)=\sum \limits _{u=0}^{M-1}{\sum \limits _{v=0}^{N-1} {\alpha (u)\alpha (v)\dot{\mathbf C}\left ({{u,v}}\right)N\left ({{u,v,m,n}}\right)}\, {\dot{q}}},
 \end{equation}
where $\dot{\mathbf{C}} \in \mathbb{H}^{M \times N}$.

\begin{theorem}[The relationship between FQDCT and IQDCT \cite{feng2008quaternion}] 
	\begin{equation}
         \begin{aligned}
		\dot{\mathbf{Q}}(m,n) &= \mathop {IQDCT}\nolimits ^{L}\left[\mathop {FQDCT}\nolimits ^{L}(\dot{\mathbf{Q}}(m,n))\right]\\
                                  &=\mathop {IQDCT}\nolimits ^{R}\left[\mathop {FQDCT}\nolimits ^{R}(\dot{\mathbf{Q}}(m,n))\right].
         \end{aligned}
        \end{equation}
\end{theorem}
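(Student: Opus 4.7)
The plan is to prove the inversion identity by direct substitution, reducing the composition $\mathop{IQDCT}\nolimits^{L}\circ\mathop{FQDCT}\nolimits^{L}$ to a sum that collapses via the orthogonality of the one-dimensional real DCT basis, and then to dispatch the right-handed case by an essentially identical argument.

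First, I would substitute $\dot{\mathbf{C}}(u,v):=\mathop{FQDCT}\nolimits^{L}[\dot{\mathbf{Q}}](u,v)$ into the definition of $\mathop{IQDCT}\nolimits^{L}(m,n)$, producing a quadruple sum of the form
\[
\sum_{u,v}\alpha(u)\alpha(v)\,\dot{q}\,\Bigl[\alpha(u)\alpha(v)\sum_{m',n'}\dot{q}\,\dot{\mathbf{Q}}(m',n')\,N(u,v,m',n')\Bigr] N(u,v,m,n).
\]
All of the scaling factors $\alpha(\cdot)$ and kernel values $N(\cdot)$ are real, hence they commute with every quaternion and may be moved past $\dot{\mathbf{Q}}(m',n')$ and past $\dot{q}$ freely. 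Crucially, both copies of $\dot{q}$ sit on the \emph{left} of $\dot{\mathbf{Q}}(m',n')$, so I can collapse them via $\dot{q}^{\,2}=-1$; the resulting sign is absorbed by the inverse-transform convention (i.e.\ reading the $\dot{q}$ appearing in $\mathop{IQDCT}\nolimits^{L}$ as $\dot{q}^{-1}=-\dot{q}$, as is standard for quaternion transforms).

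Next, I would interchange the orders of summation so that the $(u,v)$-sum becomes the inner one. Using the separability
\[
N(u,v,m,n)=\cos\!\Bigl[\tfrac{\pi(2m+1)u}{2M}\Bigr]\cos\!\Bigl[\tfrac{\pi(2n+1)v}{2N}\Bigr],
\]
this inner sum factors as a product of two one-dimensional DCT sums, to which I apply the classical orthogonality identity
\[
\sum_{u=0}^{M-1}\alpha^{2}(u)\cos\!\Bigl[\tfrac{\pi(2m+1)u}{2M}\Bigr]\cos\!\Bigl[\tfrac{\pi(2m'+1)u}{2M}\Bigr]=\delta_{m,m'},
\]
and the analogous identity in $v$. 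The double inner sum therefore collapses to $\delta_{m,m'}\delta_{n,n'}$, picking out only the term $(m',n')=(m,n)$ in the outer sum and yielding $\dot{\mathbf{Q}}(m,n)$.

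The right-handed identity is proved by the same three-step argument, except that every occurrence of $\dot{q}$ now stands on the \emph{right} of each quaternion entry; since the cosine kernel and the $\alpha$'s are real, they again commute through the $\dot{q}$'s, and the orthogonality step is unchanged. The principal obstacle is purely notational: because quaternion multiplication is not commutative, one must be scrupulous about keeping $\dot{q}$ on the same side throughout the substitution, which is exactly the reason $\mathop{IQDCT}\nolimits^{R}$ cannot be composed with $\mathop{FQDCT}\nolimits^{L}$ to recover $\dot{\mathbf{Q}}$.
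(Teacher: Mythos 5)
The paper gives no proof of this theorem at all: it is imported verbatim from the QDCT reference \cite{feng2008quaternion}, so your attempt has to stand on its own. Your overall strategy is the natural and correct one — substitute the forward transform into the inverse, use the fact that $\alpha(u)\alpha(v)$ and the separable cosine kernel $N(u,v,m,n)$ are real and therefore commute with all quaternions, interchange the orders of summation, and collapse the $(u,v)$-sum with the one-dimensional DCT orthogonality relation $\sum_{u=0}^{M-1}\alpha^{2}(u)\cos\left[\frac{\pi(2m+1)u}{2M}\right]\cos\left[\frac{\pi(2m'+1)u}{2M}\right]=\delta_{m,m'}$. Those steps are all sound, as is your remark about why the left- and right-handed transforms cannot be mixed.

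The genuine gap is the sign. With the definitions exactly as stated (both $\mathrm{FQDCT}^{L}$ and $\mathrm{IQDCT}^{L}$ carrying the factor $\dot{q}$ on the left, with $\dot{q}^{2}=-1$), your own computation yields $\dot{q}\,\dot{q}\,\dot{\mathbf{Q}}(m,n)=-\dot{\mathbf{Q}}(m,n)$, and you dispose of the stray minus sign by declaring that the $\dot{q}$ appearing in the inverse transform should be ``read as'' $\dot{q}^{-1}=-\dot{q}$. That is not a proof of the stated identity under the stated definitions; it is a silent redefinition of $\mathrm{IQDCT}^{L}$. To close the gap you must either (i) adopt from the outset the convention that the inverse transform carries the conjugate factor $\dot{q}^{\ast}=\dot{q}^{-1}=-\dot{q}$ (the convention under which the inversion theorem actually holds) and carry that factor explicitly through the substitution and orthogonality steps, or (ii) state plainly that with the formulas as transcribed here the composition equals $-\dot{\mathbf{Q}}(m,n)$, so the theorem as written presupposes the corrected inverse. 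The identical issue arises in the right-handed case, where the two copies of $\dot{q}$ accumulate on the right of $\dot{\mathbf{Q}}(m',n')$; the rest of that argument is unchanged, as you say.
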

The computation technique for $\text{FQDCT}^{L}$, which may be summed up as the following steps \cite{feng2008quaternion}, is used to establish our suggested model, thus we only briefly discuss it here.
\begin{enumerate}
    \item Given a quaternion matrix $\dot {\mathbf{Q}}(m,n) \in \mathbb{H}^{M \times N}$, write it in its Cayley-Dickson form, i.e., 
    $\dot {\mathbf{Q}}(m,n)=\mathbf{Q}_{a}(m,n)+\mathbf{Q}_{b}(m,n)j$, where $\mathbf{Q}_{a}(m,n)$ and $\mathbf{Q}_{b}(m,n)\in \mathbb{C}^{M \times N}$.
    \item Calculate the DCT of $\mathbf{Q}_{a}(m,n)$ and $\mathbf{Q}_{b}(m,n)$, and the corresponding results are written as $\mathop {DCT}(\mathbf{Q}_{a}(m,n))$, and $\mathop {DCT}(\mathbf{Q}_{b}(m,n))$, respectively. 
    \item Forming a quaternion matrix from $\mathop {DCT}(\mathbf{Q}_{a}(m,n))$, and $\mathop {DCT}(\mathbf{Q}_{b}(m,n))$, which is given by $\dot{\hat{{\mathbf{Q}}}} (m,n)=\mathop {DCT}(\mathbf{Q}_{a}(m,n))+\mathop {DCT}(\mathbf{Q}_{b}(m,n)) j$.
    \item The final result is given by multiplying $\dot{\hat{{\mathbf{Q}}}}$ by the quaternization factor $\dot{q}$:
    \[\mathop {FQDCT}\nolimits ^{L}\left[\dot {\mathbf{Q}}(m,n)\right]=\dot{q}\, \dot{\hat{{\mathbf{Q}}}}(m,n).\]
\end{enumerate}

\subsection{A new method for computing an approximate QSVD based on quaternion QR decomposition (CQSVD-QQR)}
Assume that $\dot{\mathbf{X}}  \in \mathbb{H}^{M \times N}$ is a given quaternion matrix. In this section, we propose a method for computing the largest $r$ ($r\in (0, N]$) singular values and associated singular vectors of $\dot{\mathbf{X}}$ by using QQR decomposition, named CQSVD-QQR.  In this method, $\dot{\mathbf{X}}$ is decomposed into three quaternion matrices, $\dot{\mathbf{L}} \in \mathbb{H}^{M \times r}$, $\dot{\mathbf{D}}\in \mathbb{H}^{r \times r}$, $\dot{\mathbf{R}}\in \mathbb{H}^{r \times N}$, such that
\begin{equation}
	 \left\| \dot{\mathbf{X}}-\dot{\mathbf{L}} \dot{\mathbf{D}} \dot{\mathbf{R}}  \right\|_{F} ^{2} \leq \varepsilon, \label{qldr}
\end{equation}
where $\varepsilon$ represents a positive tolerance. Therefore, we can formulate a quaternion minimization problem
\begin{equation}
	\mathop{\text{min}}\limits_{\dot{\mathbf{L}},\dot{\mathbf{D}},\dot{\mathbf{R}}}\left\| \dot{\mathbf{X}}-\dot{\mathbf{L}} \dot{\mathbf{D}} \dot{\mathbf{R}}  \right\|_{F} ^{2},   \ \; \text{s.t.}, \dot{\mathbf{L}}^{H} \dot{\mathbf{L}}=\mathbf{I}_r, \ \dot{\mathbf{R}} \dot{\mathbf{R}}^{H}=\mathbf{I}_r, \label{M_QLDR}
\end{equation}
As can be shown from (\ref{M_QLDR}), for three variables $\dot{\mathbf{L}}$, $\dot{\mathbf{D}}$, and $\dot{\mathbf{R}}$, the minimization function is convex to the third variable when two of them are arbitrarily fixed. As a result, we can update these three variables sequentially. The results of the three variables in the $\tau$th iteration are denoted as $\dot{\mathbf{L}}^{\tau}$, $\dot{\mathbf{D}}^{\tau}$, and $\dot{\mathbf{R}}^{\tau}$, respectively. Let $\dot{\mathbf{L}}^{1}=\text{eye}(M,r)$, $\dot{\mathbf{D}}^{1}=\text{eye}(r,r)$, and $\dot{\mathbf{R}}^{1}=\text{eye}(r,N)$. By fixing $\dot{\mathbf{D}}^{\tau}$ and $\dot{\mathbf{R}}^{\tau}$, $\dot{\mathbf{L}}^{\tau+1}$ can be updated by solving the following problem
\begin{equation}
	\dot{\mathbf{L}}^{\tau+1}=\mathop{\text{arg min}}\limits_{\dot{\mathbf{L}}}\left\| \dot{\mathbf{X}}-\dot{\mathbf{L}} \dot{\mathbf{D}}^{\tau} \dot{\mathbf{R}}^{\tau}  \right\|_{F} ^{2},   \label{SOLV_L}
\end{equation}
It can be easily verified that the optimal solution to (\ref{SOLV_L}) is given by
\begin{equation}
	\dot{\mathbf{L}}^{\tau+1}=\dot{\mathbf{X}} (\dot{\mathbf{R}}^{\tau })^{H} (\dot{\mathbf{D}}^{\tau })^{\dagger},   \label{SOLU_L}
\end{equation}
where $(\dot{\mathbf{D}}^{\tau })^{\dagger}$ is the quaternionic pseudoinverse of $\dot{\mathbf{D}}^{\tau }$. According to (\ref{M_QLDR}), the optimal $\dot{\mathbf{L}}$ is a column orthogonal quaternion matrix, so $\dot{\mathbf{L}}^{\tau+1}$ can be chosen as the orthogonal basis of the range space spanned by the columns of $\dot{\mathbf{X}} (\dot{\mathbf{R}}^{\tau })^{H} (\dot{\mathbf{D}}^{\tau })^{\dagger}$, that is,
\begin{equation}
	\dot{\mathbf{L}}^{\tau+1}=\text{orth}(\dot{\mathbf{X}} (\dot{\mathbf{R}}^{\tau })^{H} (\dot{\mathbf{D}}^{\tau })^{\dagger}),   \label{or_L}
\end{equation}
where $\text{orth}(\dot{\mathbf{X}})$ represents an operator that computes the orthogonal basis of the columns of $\dot{\mathbf{X}}$. From (\ref{qldr}), it is clear that $\dot{\mathbf{L}}^{\tau+1}$ is the orthogonal basis for the range of $\dot{\mathbf{X}}$. And $\dot{\mathbf{L}}^{\tau+1}$ can be computed as the orthogonal basis of $\dot{\mathbf{X}}\dot{\mathbf{\Omega}}$, where $\dot{\mathbf{\Omega}} \in \mathbb{H}^{N \times r} $ is a quaternion random matrix \cite{liu2022randomized}. Thus, $\dot{\mathbf{L}}^{\tau+1}$ can be computed as follows:
\begin{equation}
	\dot{\mathbf{L}}^{\tau+1}=\text{orth}(\dot{\mathbf{X}} (\dot{\mathbf{R}}^{\tau })^{H}).   \label{or_L_byran}
\end{equation}
Like the method CSVD-QR in the real domain \cite{liu2018fast}, quaternion QR decomposition is also used to calculate the orthogonal basis of $\dot{\mathbf{X}} (\dot{\mathbf{R}}^{\tau })^{H}$. $\dot{\mathbf{L}}^{\tau+1}$ is given by
\begin{equation}
	\left[\dot{\mathbf{Q}}, \dot{\mathbf{S}}\right]=\text{QQR}(\dot{\mathbf{X}} (\dot{\mathbf{R}}^{\tau })^{H}),   \label{or_L_QR}
\end{equation}
\begin{equation}
	\dot{\mathbf{L}}^{\tau+1} =\dot{\mathbf{Q}} (\dot{\mathbf{q}_1}, \dots, \dot{\mathbf{q}_r}),   \label{or_L_QRr}
\end{equation}
where $\dot{\mathbf{Q}} \in \mathbb{H}^{M \times M}$ and $\dot{\mathbf{S}} \in \mathbb{H}^{M \times r}$, and $\text{QQR}(\dot{\mathbf{X}})$ is an operator that computes the quaternion QR decomposition of $\dot{\mathbf{X}}$. Thus, it can be seen from (\ref{or_L_QR}) that $\dot{\mathbf{X}} (\dot{\mathbf{R}}^{\tau })^{H}=\dot{\mathbf{Q}}\dot{\mathbf{S}}$. Similarly, $\dot{\mathbf{R}}^{\tau+1}$ is updated as follows:
\begin{equation}
	\left[\dot{\mathbf{G}}, \dot{\mathbf{T}}\right]=\text{QQR}(\dot{\mathbf{X}}^{H}\mathbf{L}^{\tau+1}),   \label{or_R_QR}
\end{equation}
\begin{equation}
	\dot{\mathbf{R}}^{\tau+1} =\dot{\mathbf{G}} (\dot{\mathbf{q}_1}, \dots, \dot{\mathbf{q}_r}),   \label{or_R_QRr}
\end{equation}
where $\dot{\mathbf{G}} \in \mathbb{H}^{N \times N}$ and $\dot{\mathbf{T}} \in \mathbb{H}^{N \times r}$, and $\dot{\mathbf{X}}^{H}\mathbf{L}^{\tau+1}=\dot{\mathbf{G}}\dot{\mathbf{T}}$.
In view of (\ref{M_QLDR}), the optimal $\dot{\mathbf{R}}$ is a row orthogonal quaternion matrix, which can be set as follows:
\begin{equation}
	\dot{\mathbf{R}}^{\tau+1} =(\dot{\mathbf{R}}^{\tau+1})^{H}. \label{R_UPDATE}
\end{equation}
With fixing the values of $\dot{\mathbf{L}}^{\tau+1}$ and $\dot{\mathbf{R}}^{\tau+1}$, $\dot{\mathbf{D}}^{\tau+1}$ can be updated by solving the following optimization problem:
\begin{equation}
	\dot{\mathbf{D}}=\mathop{\text{arg min}}\limits_{\dot{\mathbf{D}}}\left\| \dot{\mathbf{X}}-\dot{\mathbf{L}}^{\tau+1}\dot{\mathbf{D}} \dot{\mathbf{R}}^{\tau+1}  \right\|_{F} ^{2},   \label{SOLV_D}
\end{equation}
Since $(\dot{\mathbf{L}}^{\tau+1})^{H}\dot{\mathbf{L}}^{\tau+1}=\mathbf{I}_r, \ \dot{\mathbf{R}}^{\tau+1} (\dot{\mathbf{R}}^{\tau+1})^{H}=\mathbf{I}_r$, it is easy to prove that the optimal solution to (\ref{SOLV_D}) is as follows:
\begin{equation}
	\dot{\mathbf{D}}^{\tau+1}=(\dot{\mathbf{L}}^{\tau+1})^{H}\dot{\mathbf{X}} (\dot{\mathbf{R}}^{\tau+1})^{H}.   \label{EXPLI_SOLU_D}
\end{equation}
Since $\dot{\mathbf{X}}^{H}\mathbf{L}^{\tau+1}=\dot{\mathbf{G}}\dot{\mathbf{T}}$, we can get that
\begin{equation}
	\dot{\mathbf{T}}^{H}=(\dot{\mathbf{L}}^{\tau+1})^{H}\dot{\mathbf{X}} \dot{\mathbf{G}}.   
\end{equation}
According to (\ref{or_R_QRr}) and (\ref{R_UPDATE}) for updating $\dot{\mathbf{R}}^{\tau+1}$, we have
\begin{equation}
	\dot{\mathbf{D}}^{\tau+1} =\dot{\mathbf{T}}^{H} (1 \, \dots \, r,\, 1 \, \dots \, r),   \label{D_UPDATE}
\end{equation}
The three sequences $\{\dot{\mathbf{L}}^{\tau+1}\}$, $\{\dot{\mathbf{R}}^{\tau+1}\}$, and $\{\dot{\mathbf{D}}^{\tau+1}\}$ ($\tau=1,\dots,n,\dots$) obtained by (\ref{or_L_QRr}), (\ref{R_UPDATE}), and (\ref{D_UPDATE}) can converge to quaternion matrices $\dot{\mathbf{L}}$, $\dot{\mathbf{R}}$, and $\dot{\mathbf{D}}$, respectively. And $\dot{\mathbf{D}}$ satisfies
\begin{equation}
	\|\dot{\mathbf{D}}_{ss}\|_{1} =\|\Lambda_{ss}\|_{1},   
\end{equation}
where $s=1, \dots, r$ and $\Lambda_{ss}$ is the $s$th singular value of $\dot{\mathbf{X}}$. In the next subsection, we will use the output of CQSVD-QQR with one iteration to initialize the date of our proposed quaternion matrix completion model.

\begin{table}\footnotesize
	\caption{Computing an approximate QSVD based on quaternion QR decomposition (CQSVD-QQR).}
	\hrule
	\label{tab_algorithm1}
	\begin{algorithmic}[1]
		\Require The quaternion matrix  
                $\dot{\mathbf{X}}\in\mathbb{H}^{M\times N}$.
		\State \textbf{Initialize} $r>0$; $\tau=1$;  $\varepsilon>0$; 
                $\text{It}_{\text{max}}>0$; $\dot{\mathbf{L}}^{1}=eye(M,r)$; $\dot{\mathbf{D}}^{1}=eye(r,r)$; $\dot{\mathbf{R}}^{1}=eye(r,N)$.
		\State \textbf{Repeat}
		\State $\dot{\mathbf{L}}^{\tau+1}$: (\ref{or_L_QR} - 
                     \ref{or_L_QRr}). 
		\State $\dot{\mathbf{R}}^{\tau+1}$: (\ref{or_R_QR} - 
                     \ref{R_UPDATE}). 
		\State   $\dot{\mathbf{D}}^{\tau+1}$: (\ref{or_R_QR},  
                     \ref{D_UPDATE}).
		\State \textbf{Until convergence}: $\|\dot{\mathbf{L}}^{\tau}\dot{\mathbf{D}}^{\tau}\dot{\mathbf{R}}^{\tau}-\dot{\mathbf{X}}\|_{F}^{2}\leq \varepsilon$ \text{or} $\tau>\text{It}_{\text{max}}$.
		\Ensure   $\dot{\mathbf{L}}^{t+1}$, $\dot{\mathbf{D}}^{t+1}$, $\dot{\mathbf{R}}^{t+1}$.
	\end{algorithmic}
	\hrule
\end{table}

\subsection{Proposed quaternion-based methods for color image completion }
This section presents the formulation of the proposed method.
\begin{lemma}[] \label{QLDR} Given quaternion matrices $\dot{\mathbf{L}} \in \mathbb{H}^{M \times r}$, $\dot{\mathbf{R}} \in \mathbb{H}^{r \times N}$, and $\dot{\mathbf{D}} \in \mathbb{H}^{r \times r}$, and assume that $\dot{\mathbf{L}}$ and $\dot{\mathbf{R}}$ have orthogonal columns and orthogonal rows, respectively, i.e., $\dot{\mathbf{L}}^{H}\dot{\mathbf{L}}=\mathbf{I}_r, \ \dot{\mathbf{R}} (\dot{\mathbf{R}})^{H}=\mathbf{I}_r$, then $\|\dot{\mathbf{D}}\|_{\ast}=\|\dot{\mathbf{L}}\dot{\mathbf{D}}\dot{\mathbf{R}}\|_{\ast}.$
\end{lemma}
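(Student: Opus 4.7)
The plan is to reduce everything to a quaternion singular value decomposition of the small matrix $\dot{\mathbf{D}}$ and then exhibit a QSVD of the large matrix $\dot{\mathbf{L}}\dot{\mathbf{D}}\dot{\mathbf{R}}$ whose singular values coincide with those of $\dot{\mathbf{D}}$. Since the nuclear norm is by definition the $\ell_1$-sum of singular values, equality of the two norms will follow at once.

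Concretely, I would first invoke Theorem~\ref{QSVD} on the $r\times r$ quaternion matrix $\dot{\mathbf{D}}$ to write
\begin{equation*}
\dot{\mathbf{D}} = \dot{\mathbf{U}}_D\, \mathbf{\Sigma}_D\, \dot{\mathbf{V}}_D^{H},
\end{equation*}
where $\dot{\mathbf{U}}_D,\dot{\mathbf{V}}_D\in\mathbb{H}^{r\times r}$ are quaternion unitary and $\mathbf{\Sigma}_D\in\mathbb{R}^{r\times r}$ is real diagonal with nonnegative entries. Then
\begin{equation*}
\dot{\mathbf{L}}\dot{\mathbf{D}}\dot{\mathbf{R}}
= \bigl(\dot{\mathbf{L}}\dot{\mathbf{U}}_D\bigr)\,\mathbf{\Sigma}_D\,\bigl(\dot{\mathbf{V}}_D^{H}\dot{\mathbf{R}}\bigr).
\end{equation*}
A short calculation using $\dot{\mathbf{L}}^{H}\dot{\mathbf{L}}=\mathbf{I}_r$ and $\dot{\mathbf{R}}\dot{\mathbf{R}}^{H}=\mathbf{I}_r$ together with unitarity of $\dot{\mathbf{U}}_D,\dot{\mathbf{V}}_D$ shows that the $M\times r$ factor $\dot{\mathbf{L}}\dot{\mathbf{U}}_D$ still has orthonormal columns and the $r\times N$ factor $\dot{\mathbf{V}}_D^{H}\dot{\mathbf{R}}$ still has orthonormal rows.

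Next I would extend these partial isometries to full quaternion unitary matrices. Using QQR (Theorem~\ref{QQR}), or equivalently a Gram--Schmidt procedure in $\mathbb{H}^{M}$, I can augment the columns of $\dot{\mathbf{L}}\dot{\mathbf{U}}_D$ by $M-r$ additional orthonormal columns to form $\tilde{\dot{\mathbf{U}}}\in\mathbb{H}^{M\times M}$ which is unitary; likewise augment $(\dot{\mathbf{V}}_D^{H}\dot{\mathbf{R}})^{H}$ to form a unitary $\tilde{\dot{\mathbf{V}}}\in\mathbb{H}^{N\times N}$. Padding $\mathbf{\Sigma}_D$ with zero rows/columns to an $M\times N$ block-diagonal matrix yields
\begin{equation*}
\dot{\mathbf{L}}\dot{\mathbf{D}}\dot{\mathbf{R}}
= \tilde{\dot{\mathbf{U}}}
\begin{bmatrix} \mathbf{\Sigma}_D & \mathbf{0} \\ \mathbf{0} & \mathbf{0} \end{bmatrix}
\tilde{\dot{\mathbf{V}}}^{H},
\end{equation*}
which is a valid QSVD by Theorem~\ref{QSVD}. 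The nonzero singular values of $\dot{\mathbf{L}}\dot{\mathbf{D}}\dot{\mathbf{R}}$ are therefore exactly the diagonal entries of $\mathbf{\Sigma}_D$, i.e., the singular values of $\dot{\mathbf{D}}$. Summing them gives $\|\dot{\mathbf{L}}\dot{\mathbf{D}}\dot{\mathbf{R}}\|_{\ast}=\|\dot{\mathbf{D}}\|_{\ast}$.

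The only subtle point is the extension to full unitaries in the quaternion setting; this is the step where one might worry, because unlike in the complex case it relies on the fact that orthonormal quaternion column systems in $\mathbb{H}^{M}$ can always be completed to an orthonormal basis. This is guaranteed by the standard quaternion Gram--Schmidt argument and is implicit in the QQR existence statement already cited. Once that point is settled, the rest is bookkeeping and I expect no real obstacle.
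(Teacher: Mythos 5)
Your proposal is correct and follows essentially the same route as the paper: take a QSVD of $\dot{\mathbf{D}}$, transport it through $\dot{\mathbf{L}}$ and $\dot{\mathbf{R}}$ using the isometry conditions, and recognize the result as a QSVD of $\dot{\mathbf{L}}\dot{\mathbf{D}}\dot{\mathbf{R}}$ with the same singular values. The only difference is cosmetic: the paper stops at the compact (thin) factorization with $\dot{\mathbf{L}}\dot{\mathbf{U}}_s$ and $\dot{\mathbf{R}}^{H}\dot{\mathbf{V}}_s$ having orthonormal columns, while you additionally complete these partial isometries to full quaternion unitaries, which is a valid but unnecessary extra step.
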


\begin{proof}
   The QSVD of $\dot{\mathbf{D}}$ can be obtained by using Theorem \ref{QSVD} as follows:
    \begin{equation}
		\dot{\mathbf{D}} = \dot{\mathbf{U}}\begin{bmatrix}
			\mathbf{\Sigma}_s  & \mathbf{0} \\
			\mathbf{0} & \mathbf{0}
		\end{bmatrix}\dot{\mathbf{V}}^H=\dot{\mathbf{U}}_s \mathbf{\Sigma}_s {\dot{\mathbf{V}}_s}^H,
    \end{equation}
    where $\dot{\mathbf{U}}_s \in \mathbb{H}^{r \times s}$ and $\dot{\mathbf{V}}_s \in \mathbb{H}^{r \times s}$ satisfy ${\dot{\mathbf{U}}_s}^H\dot{\mathbf{U}}_s =\mathbf{I}_s$ and ${\dot{\mathbf{V}}_s}^H\dot{\mathbf{V}}_s =\mathbf{I}_s$, respectively, $\mathbf{\Sigma}_s=\text{diag}(\sigma_1, \sigma_2, \dots, \sigma_s)\in \mathbb{H}^{s \times s}$ contains the $s$ positive singular values of $\dot{\mathbf{D}}$, and $s\leq r$.
    Thus, we have 
    \begin{equation}
        \dot{\mathbf{L}}\dot{\mathbf{D}}\dot{\mathbf{R}}=\dot{\mathbf{L}}\dot{\mathbf{U}}_s \mathbf{\Sigma}_s {\dot{\mathbf{V}}_s}^H\dot{\mathbf{R}}=\dot{\mathbf{L}}\dot{\mathbf{U}}_s \mathbf{\Sigma}_s (\dot{\mathbf{R}}^H\dot{\mathbf{V}}_s)^H. \label{QSVD_LDR}
    \end{equation}
  
    It can be checked that 
   $(\dot{\mathbf{L}}\dot{\mathbf{U}}_s)^H\dot{\mathbf{L}}\dot{\mathbf{U}}_s=\mathbf{I}_s$, and  $(\dot{\mathbf{R}}^H\dot{\mathbf{V}}_s)^H\dot{\mathbf{R}}^H\dot{\mathbf{V}}_s=\mathbf{I}_s,$
    Therefore, it can be found that (\ref{QSVD_LDR}) is one of the SVDs of $\dot{\mathbf{L}}\dot{\mathbf{D}}\dot{\mathbf{R}}$. And $\|\dot{\mathbf{D}}\|_{\ast}=\sum_{t=1}^{s}\sigma_{t}=\|\dot{\mathbf{L}}\dot{\mathbf{D}}\dot{\mathbf{R}}\|_{\ast}.$
\end{proof}
Quaternion nuclear norm minimization is a well-established and potent strategy among all models for quaternion-based completion. It can be viewed as the generalization of the nuclear norm minimization of the real matrix. Lemma \ref{QLDR} allows us to determine that $\|\dot{\mathbf{X}}\|_{\ast}=\|\dot{\mathbf{L}}\dot{\mathbf{D}}\dot{\mathbf{R}}\|_{\ast}=\|\dot{\mathbf{D}}\|_{\ast}$. Substituting $\dot{\mathbf{X}}=\dot{\mathbf{L}}\dot{\mathbf{D}}\dot{\mathbf{R}}$ into the original quaternion nuclear norm minimization problem, we can formulate the new model as follows:
\begin{equation}
\mathop{\text{min}}\limits_{\dot{\mathbf{L}},\dot{\mathbf{D}},\dot{\mathbf{R}}}\|\dot{\mathbf{D}}\|_{\ast},   \: \text{s.t.}, 
\begin{cases}
\dot{\mathbf{L}}^{H} \dot{\mathbf{L}}=\mathbf{I}_r, \ \dot{\mathbf{R}} \dot{\mathbf{R}}^{H}=\mathbf{I}_r,\\
\dot{\mathbf{X}}=\dot{\mathbf{L}}\dot{\mathbf{D}}\dot{\mathbf{R}}, \ {P}_{\Omega}(\dot{\mathbf{L}}\dot{\mathbf{D}}\dot{\mathbf{R}}-\dot{\mathbf{M}})=\mathbf{0}.
\end{cases}
\label{M_DNN}
\end{equation}
As we previously stated, the color image completion problem cannot be solved by just taking into account the image's low-rank prior information. To address it, we, therefore, integrate the image's low-rank prior with its sparseness in a transform domain. By introducing sparsity, we can obtain the following model:  
\begin{equation}
\mathop{\text{min}}\limits_{\dot{\mathbf{L}},\dot{\mathbf{D}},\dot{\mathbf{R}}}\|\dot{\mathbf{D}}\|_{\ast}+\lambda \|\dot{\mathbf{W}}\|_{1},   \: \text{s.t.}, 
\begin{cases}
\dot{\mathbf{L}}^{H} \dot{\mathbf{L}}=\mathbf{I}_r, \ \dot{\mathbf{R}} \dot{\mathbf{R}}^{H}=\mathbf{I}_r,\\
\dot{\mathbf{X}}=\dot{\mathbf{L}}\dot{\mathbf{D}}\dot{\mathbf{R}}, \ {P}_{\Omega}(\dot{\mathbf{X}}-\dot{\mathbf{L}}\dot{\mathbf{D}}\dot{\mathbf{R}})=\mathbf{0}, \ \mathcal{T}(\dot{\mathbf{X}})=\dot{\mathbf{W}},
\end{cases}
\label{M_DNN_SP}
\end{equation}
where $\mathcal{T}(\cdot)$ represents the transform operator, $\dot{\mathbf{W}}=\mathcal{T}(\dot{\mathbf{X}})$ is the transformed quaternion matrix, and the parameter $\lambda > 0$. Specifically, we use the $\text{FQDCT}^{L}$ to establish the model, so the operator $\mathcal{T}(\dot{\mathbf{X}})$ here refers to the $\text{FQDCT}^{L}$ of $\dot{\mathbf{X}}$. Since this formulation employs a method for approximating QSVD based on quaternion QR decomposition (CQSVD-QR), quaternion nuclear norm, and a sparse regularizer, we name the proposed method as QQR-QNN-SR.

\subsection{Optimization framework}
The optimization problem (\ref{M_DNN_SP}) is addressed in this section using the alternating direction method of multipliers (ADMM) framework. This framework in the quaternion system is similar to the ADMM framework in the complex number field \cite{li2015alternating}. The augmented Lagrangian function of the problem (\ref{M_DNN_SP}) is given by
\begin{equation}
    \begin{split}
        & \mathcal{L}(\dot{\mathbf{X}}, \dot{\mathbf{L}}, \dot{\mathbf{D}}, \dot{\mathbf{R}}, \dot{\mathbf{W}}, \dot{\mathbf{Y}}, \dot{\mathbf{Z}})\\
        &=\|\dot{\mathbf{D}}\|_{\ast}+\lambda \|\dot{\mathbf{W}}\|_{1}\\
        &+\mathfrak{R}(\langle \dot{\mathbf{Y}}, \, \dot{\mathbf{X}}-\dot{\mathbf{L}} \dot{\mathbf{D}}\dot{\mathbf{R}}\rangle)+\frac{\mu}{2}\|\dot{\mathbf{X}}-\dot{\mathbf{L}} \dot{\mathbf{D}}\dot{\mathbf{R}}\|_{F}^{2}\\
        & +\mathfrak{R}(\langle \dot{\mathbf{Z}}, \, \dot{\mathbf{W}}-\mathcal{T}(\dot{\mathbf{X}})\rangle)+\frac{\mu}{2}\|\dot{\mathbf{W}}-\mathcal{T}(\dot{\mathbf{X}})\|_{F}^{2},
    \end{split}
\end{equation}
where the penalty parameter $\mu>0$, $\dot{\mathbf{Y}}$ and $\dot{\mathbf{Z}}$ are Lagrange multiplier. The optimization problem (\ref{M_DNN_SP}) can have each parameter updated alternatively by fixing other variables, in accordance with the ADMM framework.  
Particularly, all variables can be updated iteratively as follows:
\begin{equation}
   \begin{cases}
   \dot{\mathbf{L}}^{t+1}=\mathop{\text{arg min}}\limits_{\dot{\mathbf{L}}}\mathcal{L}(\dot{\mathbf{X}}^{t}, \dot{\mathbf{L}}, \dot{\mathbf{D}}^{t}, \dot{\mathbf{R}}^{t}, \dot{\mathbf{W}}^{t}, \dot{\mathbf{Y}}^{t}, \dot{\mathbf{Z}}^{t}),\\
   \dot{\mathbf{R}}^{t+1}=\mathop{\text{arg min}}\limits_{\dot{\mathbf{R}}}\mathcal{L}(\dot{\mathbf{X}}^{t}, \dot{\mathbf{L}}^{t+1}, \dot{\mathbf{D}}^{t}, \dot{\mathbf{R}}, \dot{\mathbf{W}}^{t}, \dot{\mathbf{Y}}^{t}, \dot{\mathbf{Z}}^{t}),\\
   \dot{\mathbf{D}}^{t+1}=\mathop{\text{arg min}}\limits_{\dot{\mathbf{D}}}\mathcal{L}(\dot{\mathbf{X}}^{t}, \dot{\mathbf{L}}^{t+1}, \dot{\mathbf{D}}, \dot{\mathbf{R}}^{t+1}, \dot{\mathbf{W}}^{t}, \dot{\mathbf{Y}}^{t}, \dot{\mathbf{Z}}^{t}),\\
    \dot{\mathbf{X}}^{t+1}=\mathop{\text{arg min}}\limits_{\dot{\mathbf{X}}}\mathcal{L}(\dot{\mathbf{X}}, \dot{\mathbf{L}}^{t+1}, \dot{\mathbf{D}}^{t+1}, \dot{\mathbf{R}}^{t+1}, \dot{\mathbf{W}}^{t}, \dot{\mathbf{Y}}^{t}, \dot{\mathbf{Z}}^{t}),\\
     \dot{\mathbf{W}}^{t+1}=\mathop{\text{arg min}}\limits_{\dot{\mathbf{W}}}\mathcal{L}(\dot{\mathbf{X}}^{t+1}, \dot{\mathbf{L}}^{t+1}, \dot{\mathbf{D}}^{t+1}, \dot{\mathbf{R}}^{t+1}, \dot{\mathbf{W}}, \dot{\mathbf{Y}}^{t}, \dot{\mathbf{Z}}^{t}),\\
      \dot{\mathbf{Y}}^{t+1}=\dot{\mathbf{Y}}^{t}+\mu^{t}( \dot{\mathbf{X}}^{t+1}-\dot{\mathbf{L}}^{t+1} \dot{\mathbf{D}}^{t+1}\dot{\mathbf{R}}^{t+1}),\\
       \dot{\mathbf{Z}}^{t+1}=\dot{\mathbf{Z}}^{t}+\mu^{t}( \dot{\mathbf{W}}^{t+1}-\mathcal{T}(\dot{\mathbf{X}}^{t+1})).\\
    \end{cases}
    \label{sub_optimi}
\end{equation}
As shown in (\ref{sub_optimi}), we first fix other variables to solve the corresponding optimization problem to update $\dot{\mathbf{L}}$, $\dot{\mathbf{D}}$, and $\dot{\mathbf{R}}$.\\
\indent
\textbf{Updating $\dot{\mathbf{L}}$, $\dot{\mathbf{D}}$, and $\dot{\mathbf{R}}$}:
In the $t+1$-th iteration, $\dot{\mathbf{L}}^{t+1}$ and $\dot{\mathbf{R}}^{t+1}$ are updated by solving the following minimization problem:
\begin{equation}
\mathop{\text{min}}\limits_{\dot{\mathbf{L}},\dot{\mathbf{R}}}\left\| (\dot{\mathbf{X}}^{t}+\frac{\dot{\mathbf{Y}}^{t}}{\mu^{t}})-\dot{\mathbf{L}} \dot{\mathbf{D}}^{t} \dot{\mathbf{R}}  \right\|_{F} ^{2}.
\end{equation}
According to the analysis of the quaternion optimization problem (\ref{M_QLDR}), $\dot{\mathbf{L}}$, and $\dot{\mathbf{R}}$ can be obtained by using CQSVD-QQR as follows:
\begin{equation}
\begin{cases}
	\left[\dot{\hat{\mathbf{L}}}^{t+1}, {\sim} \right]= \text{QQR}((\dot{\mathbf{X}}^{t}+\frac{\dot{\mathbf{Y}}^{t}}{\mu^{t}}) (\dot{\mathbf{R}}^{t })^{H}),\\ 
	\dot{\mathbf{L}}^{t+1} =\dot{\hat{\mathbf{L}}}^{t+1} (\dot{\mathbf{q}_1}, \dots, \dot{\mathbf{q}_r}),   \label{upd_L}
 \end{cases}
\end{equation}
and 
\begin{equation}
\begin{cases}
	\left[\dot{\hat{\mathbf{R}}}^{t+1},\dot{\hat{\mathbf{D}}}^{t} \right]= \text{QQR}((\dot{\mathbf{X}}^{t}+\frac{\dot{\mathbf{Y}}^{t}}{\mu^{t}})^{H} \dot{\mathbf{L}}^{t+1 }),\\ 
	\dot{\mathbf{R}}^{t+1} =(\dot{\hat{\mathbf{R}}}^{t+1} (\dot{\mathbf{q}_1}, \dots, \dot{\mathbf{q}_r}))^{H},  \label{upd_R}
 \end{cases}
\end{equation}
where QQR is the quaternion QR decomposition. If the two variables $\dot{\mathbf{L}}$ and $\dot{\mathbf{R}}$ are started as $\dot{\mathbf{L}}^{t}$ and $\dot{\mathbf{R}}^{t}$, the CQSVD-QQR method will converge within a limited number of iterations because the quaternion matrices $\dot{\mathbf{L}}$ and $\dot{\mathbf{R}}$ will not change significantly in two consecutive iterations. As a result, in the proposed method for quaternion matrix completion, the output of CQSVD-QQR with one iteration is used. In one iteration of CQSVD-QQR, this method decomposes the quaternion matrix $\dot{\mathbf{X}}^{t}+\frac{\dot{\mathbf{Y}}^{t}}{\mu^{t}}$ as follows:
\begin{equation}
\dot{\mathbf{X}}^{t}+\frac{\dot{\mathbf{Y}}^{t}}{\mu^{t}}=\dot{\mathbf{L}}^{t+1} \dot{\widetilde{\mathbf{D}}}^{t}\dot{\mathbf{R}}^{t+1}, \label{widetilde_D}
\end{equation}
where $\dot{\widetilde{\mathbf{D}}}^{t}=\dot{\hat{\mathbf{D}}}^{t}(1 \, \dots \, r,\, 1 \, \dots \, r) \in \mathbb{H}^{r \times r}$.
According to (\ref{M_QLDR}) and (\ref{widetilde_D}), we also have 
\begin{equation}
\dot{\widetilde{\mathbf{D}}}^{t}=(\dot{\mathbf{L}}^{t+1})^{H} (\dot{\mathbf{X}}^{t}+\frac{\dot{\mathbf{Y}}^{t}}{\mu^{t}})(\dot{\mathbf{R}}^{t+1})^{H}. \label{widetilde_D1}
\end{equation}
After updating $\dot{\mathbf{L}}$ and $\dot{\mathbf{R}}$, we can update $\dot{\mathbf{D}}$ by solving the following problem:
\begin{equation}
\dot{\mathbf{D}}^{t+1}=
\mathop{\text{arg min}}\limits_{\dot{\mathbf{D}}}\|\dot{\mathbf{D}}\|_{\ast}++\frac{\mu^{t}}{2}\|\dot{\mathbf{D}}-(\dot{\mathbf{L}}^{t+1})^{H} (\dot{\mathbf{X}}^{t}+\frac{\dot{\mathbf{Y}}^{t}}{\mu^{t}})(\dot{\mathbf{R}}^{t+1})^{H}\|_{F}^{2},  
\label{UPD_D}
\end{equation}
Substituting (\ref{widetilde_D1}) into (\ref{UPD_D}), then the above problem can be reformulated as follows:
\begin{equation}
\dot{\mathbf{D}}^{t+1}=
\mathop{\text{arg min}}\limits_{\dot{\mathbf{D}}}\|\dot{\mathbf{D}}\|_{\ast}++\frac{\mu^{t}}{2}\|\dot{\mathbf{D}}-\dot{\widetilde{\mathbf{D}}}^{t}\|_{F}^{2}.
\label{UPD_D2}
\end{equation}
The closed solution to the problem (\ref{UPD_D2}) is \cite{chen2019low}
\begin{equation}
\dot{\mathbf{D}}^{t+1}=
\mathbf{\mathcal{D}}_{\frac{1}{\mu^{t}}}(\dot{\widetilde{\mathbf{D}}}^{t})=\dot{\mathbf{U}}\mathbf{\mathcal{S}}_{\frac{1}{\mu^{t}}}(\Sigma)\dot{\mathbf{V}}^{H},
\label{UPD_D3}
\end{equation}
 where $\dot{\widetilde{\mathbf{D}}}^{t}=\dot{\mathbf{U}}\Sigma\dot{\mathbf{V}}^{H}$ is the QSVD of $\dot{\widetilde{\mathbf{D}}}^{t}$, and $\mathbf{\mathcal{S}}_{\frac{1}{\mu^{t}}}(\Sigma)=\text{diag}(\text{max}\{\sigma_{p}(\dot{\widetilde{\mathbf{D}}}^{t})-\frac{1}{\mu^{t}},0\})$ is the soft thresholding operator, and $\sigma_{p}(\dot{\widetilde{\mathbf{D}}}^{t})$ is the p-th positive singular value of $\dot{\widetilde{\mathbf{D}}}^{t}$.\\
\indent
\textbf{Updating $\dot{\mathbf{X}}$, $\dot{\mathbf{W}}$, $\dot{\mathbf{Y}}$, $\dot{\mathbf{Z}}$, and $\mu$}:
In the $t+1$-th iteration, fixing the variables $\dot{\mathbf{L}}^{t+1}$, $\dot{\mathbf{D}}^{t+1}$, $\dot{\mathbf{R}}^{t+1}$, $\dot{\mathbf{X}}^{t+1}$ is the optimal solution of the problem as follows:
\begin{equation}
  \begin{split}
\dot{\mathbf{X}}^{t+1}&=\mathop{\text{arg min}}\limits_{\dot{\mathbf{X}}}\mathfrak{R}(\langle \dot{\mathbf{Y}}^{t}, \, \dot{\mathbf{X}}-\dot{\mathbf{L}}^{t+1} \dot{\mathbf{D}}^{t+1} \dot{\mathbf{R}}^{t+1} \rangle)+\frac{\mu^{t}}{2}\|\dot{\mathbf{X}}-\dot{\mathbf{L}}^{t+1}  \dot{\mathbf{D}}^{t+1} \dot{\mathbf{R}}^{t+1} \|_{F}^{2}\\
& +\mathfrak{R}(\langle \dot{\mathbf{Z}}^{t} , \, \dot{\mathbf{W}}^{t} -\mathcal{T}(\dot{\mathbf{X}})\rangle)+\frac{\mu^{t}}{2}\|\dot{\mathbf{W}}^{t} -\mathcal{T}(\dot{\mathbf{X}})\|_{F}^{2}, \\
&=\mathop{\text{arg min}}\limits_{\dot{\mathbf{X}}}\frac{\mu^{t}}{2}\|\dot{\mathbf{X}}+\frac{\dot{\mathbf{Y}}^{t}}{\mu^t}-\dot{\mathbf{L}}^{t+1} \dot{\mathbf{D}}^{t+1}\dot{\mathbf{R}}^{t+1}\|_{F}^{2}+\frac{\mu^{t}}{2}\|\dot{\mathbf{W}}^{t}+\frac{\dot{\mathbf{Z}}^{t}}{\mu^t}-\mathcal{T}(\dot{\mathbf{X}})\|_{F}^{2}.
\label{UPD_X}
  \end{split}
\end{equation}
Because the existence of the term $\mathcal{T}(\dot{\mathbf{X}})$ in problem (\ref{UPD_X}), the variable $\dot{\mathbf{X}}$ cannot be directly separated from the other variables. Fortunately, the Parseval theorem in the quaternion system isolates the variable $\dot{\mathbf{X}}$ from the operator $\mathcal{T}(\cdot)$, allowing us to reformulate the issue.
Similar to the real domain, the Parseval theorem in the quaternion system \cite{bahri2008uncertainty} asserts that the total energy of the signal remains unchanged after a unitary transformation, such as quaternion discrete Fourier transform (QDFT), and quaternion discrete cosine transform (QDCT). Therefore, by introducing the corresponding inverse transform to the last term of (\ref{UPD_X}), we have
\begin{equation}
    \|\dot{\mathbf{W}}^{t}+\frac{\dot{\mathbf{Z}}^{t}}{\mu^t}-\mathcal{T}(\dot{\mathbf{X}})\|_{F}^{2}= \|\mathcal{I}_{\mathcal{T}}(\dot{\mathbf{W}}^{t}+\frac{\dot{\mathbf{Z}}^{t}}{\mu^t})-\dot{\mathbf{X}}\|_{F}^{2},
    \label{intro_inver}
\end{equation}
where $\mathcal{I}_{\mathcal{T}}(\cdot)$ is the inverse transform of $\mathcal{T}(\cdot)$.
After substituting (\ref{intro_inver}) into (\ref{UPD_X}) and ignoring terms unrelated to the variable $\dot{\mathbf{X}}$, the optimization problem used to update $\dot{\mathbf{X}}$ can be written as follows
\begin{equation}
\dot{\mathbf{X}}^{t+1}=\mathop{\text{arg min}}\limits_{\dot{\mathbf{X}}}
\|\frac{1}{2}(\dot{\mathbf{L}}^{t+1} \dot{\mathbf{D}}^{t+1}\dot{\mathbf{R}}^{t+1}-\frac{\dot{\mathbf{Y}}^{t}}{\mu^t}+\mathcal{I}_{\mathcal{T}}(\dot{\mathbf{W}}^{t}+\frac{\dot{\mathbf{Z}}^{t}}{\mu^t}))-\dot{\mathbf{X}}\|_{F}^{2}.
\label{UPD_X1}
\end{equation}
Thus, we can obtain that
\begin{equation}
\dot{\mathbf{X}}^{t+1}=
\frac{1}{2}(\dot{\mathbf{L}}^{t+1} \dot{\mathbf{D}}^{t+1}\dot{\mathbf{R}}^{t+1}-\frac{\dot{\mathbf{Y}}^{t}}{\mu^t}+\mathcal{I}_{\mathcal{T}}(\dot{\mathbf{W}}^{t}+\frac{\dot{\mathbf{Z}}^{t}}{\mu^t})).
\label{sou_X}
\end{equation}
Combined with the constraint that ${P}_{\Omega}(\dot{\mathbf{L}}\dot{\mathbf{D}}\dot{\mathbf{R}})={P}_{\Omega}(\dot{\mathbf{M}})$, the value of $\dot{\mathbf{X}}$ should remain unchanged on the observed entries set $\Omega$, i.e.,
\begin{equation}
\dot{\mathbf{X}}^{t+1}=
{P}_{\Omega^{c}}(\dot{\mathbf{X}}^{t+1})+{P}_{\Omega}(\dot{\mathbf{M}}),
\label{sou_X1}
\end{equation}
where $\Omega^{c}$ is the set of locations corresponding to missing entries.\\
\indent
Next, in the $t+1$-th iteration, the variable $\dot{\mathbf{W}}^{t+1}$ is updated by solving the following problem:
\begin{equation}
\begin{aligned}
\dot{\mathbf{W}}^{t+1}&=\mathop{\text{arg min}}\limits_{\dot{\mathbf{W}}}
\lambda \|\dot{\mathbf{W}}\|_{1}+\mathfrak{R}(\langle \dot{\mathbf{Z}}^{t}, \, \dot{\mathbf{W}}-\mathcal{T}(\dot{\mathbf{X}}^{t+1})\rangle)+\frac{\mu^{t}}{2}\|\dot{\mathbf{W}}-\mathcal{T}(\dot{\mathbf{X}}^{t+1})\|_{F}^{2}\\
&=\mathop{\text{arg min}}\limits_{\dot{\mathbf{W}}}\lambda \|\dot{\mathbf{W}}\|_{1}
+\frac{\mu^{t}}{2}\|\dot{\mathbf{W}}+\frac{\dot{\mathbf{Z}}^{t}}{\mu^t}-\mathcal{T}(\dot{\mathbf{X}}^{t+1})\|_{F}^{2}. \label{UPD_W}
\end{aligned}
\end{equation}
The optimal solution to (\ref{UPD_W}) is 
\begin{equation}
\dot{\mathbf{W}}^{t+1}=\mathcal{S}_{\frac{4\lambda}{\mu^{t}}}(\mathcal{T}(\dot{\mathbf{X}}^{t+1})-\frac{\dot{\mathbf{Z}}^{t}}{\mu^t}),
\label{sou_W}
\end{equation}
where $\mathcal{S}_{p}(\dot{\mathbf{x}})=\frac{\dot{\mathbf{x}}}{|\dot{\mathbf{x}}|}\text{max}\{|\dot{\mathbf{x}}|-p,0\}$ is the element-wise soft thresholding operator \cite{yang2022quaternion}.\\
\indent
Finally, the variables $\dot{\mathbf{Y}}^{t+1}$, $\dot{\mathbf{Z}}^{t+1}$, and the penalty parameter $\mu^{t+1}$ are updated as follows:
\begin{equation}
\dot{\mathbf{Y}}^{t+1}=\dot{\mathbf{Y}}^{t}+\mu^{t}(\dot{\mathbf{X}}^{t+1}-\dot{\mathbf{L}}^{t+1}\dot{\mathbf{D}}^{t+1}\dot{\mathbf{R}}^{t+1}),
\label{UPD_Y}
\end{equation}
\begin{equation}
\dot{\mathbf{Z}}^{t+1}=\dot{\mathbf{Z}}^{t}+\mu^{t}(\dot{\mathbf{W}}^{t+1}-\mathcal{T}(\dot{\mathbf{X}}^{t+1})),
\label{UPD_Z}
\end{equation}
and
\begin{equation}
\mu^{t+1}=\gamma\mu^{t}.
\label{UPD_mu}
\end{equation}

\begin{table}\footnotesize
	\caption{The QQR-QNN-SR-based quaternion matrix completion method.}
	\hrule
	\label{tab_algorithm2}
	\begin{algorithmic}[1]
		\Require The observed quaternion matrix data 
                $\dot{\mathbf{M}}\in\mathbb{H}^{M\times N}$ with $\mathcal{P}_{\Omega^{c}}(\dot{\mathbf{M}})=\mathbf{0}$; $\lambda$; $\mu_{\max}$; $\gamma$ and $r>0$.
		\State \textbf{Initialize} $t=0$; $\mu^{0}$;  $\varepsilon>0$; 
                $\text{It}_{\text{max}}>0$; $\dot{\mathbf{L}}^{0}=eye(M,r)$; $\dot{\mathbf{R}}^{0}=eye(r,N)$; $\dot{\mathbf{D}}^{0}=eye(r,r)$; $\dot{\mathbf{X}}^{0}=\dot{\mathbf{M}}$; $\dot{\mathbf{W}}^{0}=\mathbf{0}$.
		\State \textbf{Repeat}
		\State \textbf{Step 1.} $\dot{\mathbf{L}}^{t+1}$, $\dot{\mathbf{R}}^{t+1}$: (\ref{upd_L}) and (\ref{upd_R}), respectively.
		\State	\textbf{Step 2.}	
		      $\dot{\mathbf{D}}^{t+1}=
                \mathbf{\mathcal{D}}_{\frac{1}{\mu^{t}}}(\dot{\widetilde{\mathbf{D}}}^{t})=\dot{\mathbf{U}}\mathbf{\mathcal{S}}_{\frac{1}{\mu^{t}}}(\Sigma)\dot{\mathbf{V}}^{H}.$
		\State \textbf{Step 3.} $\dot{\mathbf{X}}^{\tau+1}=\mathcal{P}_{\Omega^{c}}\big(\frac{1} 
                 {2}(\dot{\mathbf{L}}^{t+1} \dot{\mathbf{D}}^{t+1}\dot{\mathbf{R}}^{t+1}-\frac{\dot{\mathbf{Y}}^{t}}{\mu^t}+\mathcal{I}_{\mathcal{T}}(\dot{\mathbf{W}}^{t}+\frac{\dot{\mathbf{Z}}^{t}}{\mu^t}))\big)+\mathcal{P}_{\Omega}(\dot{\mathbf{M}})$.
		\State \textbf{Step 4.} $\dot{\mathbf{W}}^{t+1}=\mathcal{S}_{\frac{4\lambda}{\mu^{t}}} 
                  (\mathcal{T}(\dot{\mathbf{X}}^{t+1})-\frac{\dot{\mathbf{Z}}^{t}}{\mu^t})$.
		\State $\dot{\mathbf{Y}}^{t+1}=\dot{\mathbf{Y}}^{t}+\mu^{t} 
                  (\dot{\mathbf{X}}^{t+1}-\dot{\mathbf{L}}^{t+1}\dot{\mathbf{D}}^{t+1}\dot{\mathbf{R}}^{t+1}).$
            \State $\dot{\mathbf{Z}}^{t+1}=\dot{\mathbf{Z}}^{t}+\mu^{t} 
                  (\dot{\mathbf{W}}^{t+1}-\mathcal{T}(\dot{\mathbf{X}}^{t+1}))$.
		\State $\mu^{t+1}={\rm{min}}(\gamma\mu^{t}, \mu_{max})$.
		\State \textbf{Until convergence}
		\Ensure   $\dot{\mathbf{L}}^{t+1}$, $\dot{\mathbf{D}}^{t+1}$, $\dot{\mathbf{R}}^{t+1}$, $\dot{\mathbf{X}}^{t+1}$, and $\dot{\mathbf{W}}^{t+1}$.
	\end{algorithmic}
	\hrule
\end{table}
The whole procedure of the proposed method is summarized in Table \ref{tab_algorithm2}.

\subsection{Complexity analysis}
In this section, we analyze the computational complexity of the proposed QQR-QNN-SR method. As shown in Table \ref{tab_algorithm2}, the main computational cost of each iteration of QQR-QNN-SR involves the update cost of two types of variables, one is the cost of updating variables $\dot{\mathbf{L}}$, $\dot{\mathbf{D}}$, and $\dot{\mathbf{R}}$, and the other is the cost of updating variables $\dot{\mathbf{X}}$ and $\dot{\mathbf{W}}$.
The complexity when updating variables of the first type is dominated by the quaternion QR decomposition operation of two quaternion matrices of size $M\times r$ and $N\times r$ whose complexity is about $\mathcal{O}(r^2(M+N)-r^{3})$, and the calculation of the QSVD of an $r\times r$ quaternion matrix whose complexity is about $\mathcal{O}(r^{3})$, where $r<\text{min}\{M, N\}$.
However, methods directly based on the quaternion nuclear norm, such as LRQA-G \cite{chen2019low}, need to calculate the QSVD of the quaternion matrix with size $M\times N$, and the complexity is about $\mathcal{O}(\text{min}(MN^2, M^2N))$. It is clear that the computational cost of QR decomposition of two quaternion matrices of size $M\times r$ and $N\times r$ is smaller than that of QSVD of the quaternion matrix with size $M\times N$. In the second part of QQR-QNN-SR, the computational complexity is dominated by the transform operator $\mathcal{T}$. And the complexity is about $\mathcal{O}(M^2N^2+MN)$. Therefore, the total computational complexity of each iteration of the proposed QQR-QNN-SR method is dominated by the second part with the complexity being about $\mathcal{O}(M^2N^2+MN)$.

\section{Experimental results and discussion}
In this section, we first give a test of the convergence of CQSVD-QQR. The effectiveness of the proposed QQR-QNN-SR method is then demonstrated by conducting numerical experiments on color images.\\
\indent 
We perform all the experiments on a MATLAB 2019b platform equipped with an i7-9700 CPU and 16 GB of RAM.
\subsection{Convergence of CQSVD-QQR}
  First, a synthetic quaternion matrix $\dot{\mathbf{X}}$ is generated as follows:
\begin{equation}
    \dot{\mathbf{X}}= \dot{\mathbf{M}}_{1}^{m \times r_1}\dot{\mathbf{M}}_{2}^{r_1 \times n},
\end{equation}
where $r_1 \in [1, n]$ is the rank of $\dot{\mathbf{X}}$, and $ \dot{\mathbf{M}}_{1}^{m \times r_1}$ and  $\dot{\mathbf{M}}_{2}^{r_1 \times n}$ are two quaternion random matrices.
Then, let $m=n=300, \ r_1=250,\ r=120$, with the largest $r$ singular values. Firstly, we use the proposed CQSVD-QQR to obtain the quaternion Tri-Factorization of $\dot{\mathbf{X}}$. And we compare the results using the root-mean-square error (RMSE), which is defined as
\begin{equation}
    \text{RMSE}= \sqrt{\frac{\|\dot{\mathbf{X}}-\dot{\mathbf{Y}}\|_{F}^{2}}{mn}},
\end{equation}
where $\dot{\mathbf{X}}, \ \dot{\mathbf{Y}}\in \mathbb{H}^{m \times n}$, $\dot{\mathbf{X}}$ is the original data, and $\dot{\mathbf{Y}}$ is the reconstructed result.
After that, RMSE for the model CQSVD-QQR with respect to the variable $\dot{\mathbf{X}}_{\text{LDR}}$ can be obtained.
Similarly, we can calculate RMSE for the model QSVD with respect to $\dot{\mathbf{X_{\text{QSVD}}}}$, where $\dot{\mathbf{X}}_{\text{LDR}}$ and $\dot{\mathbf{X_{\text{QSVD}}}}$ denote the reconstructed results of CQSVD-QQR and QSVD, respectively. \\
\begin{figure}
    \centering
    \includegraphics[width=8cm,height=5cm]{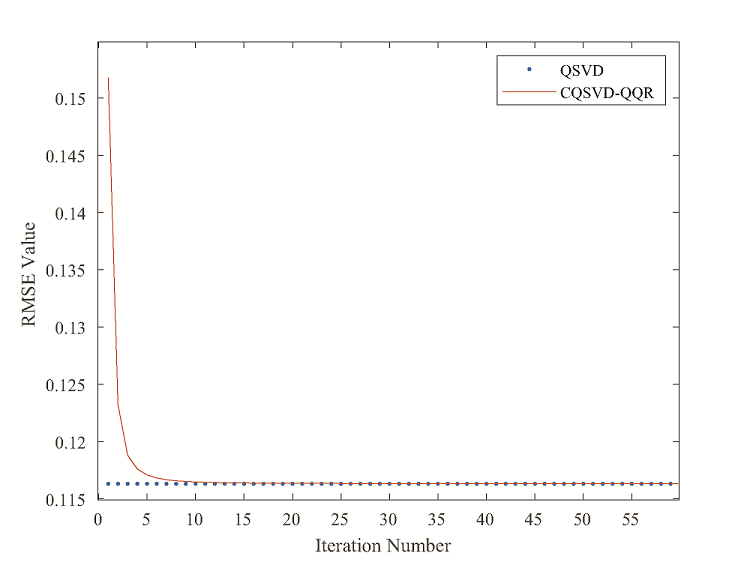}
    \caption{Comparison of RMSE values of CQSVD-QQR and QSVD.}
    \label{fig:1}
\end{figure}
\begin{figure}
    \centering
    \includegraphics[width=15.8cm,height=3.6cm]{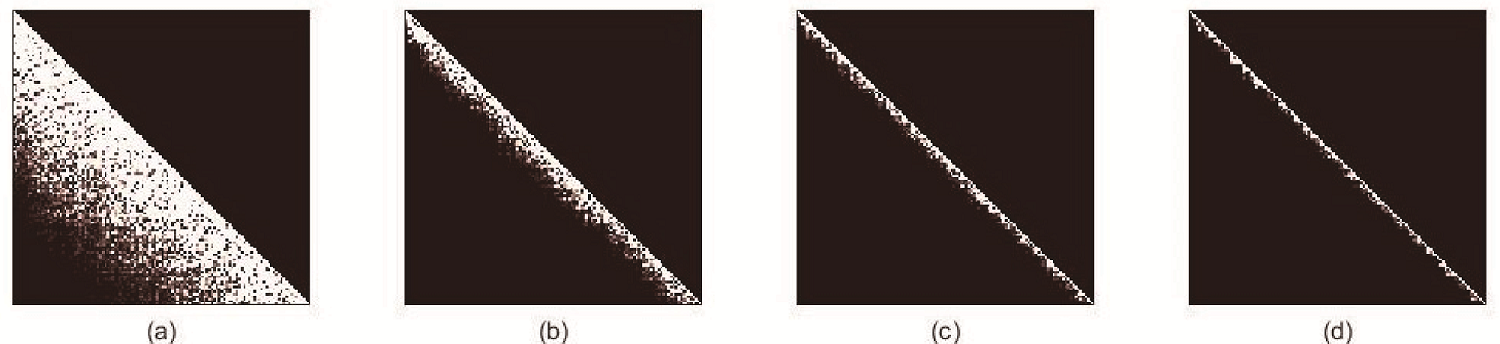}
    \caption{Convergence procedure of $\dot{\mathbf{D}}_{\tau}$ in CQSVD-QQR. (a) $k=5$. (b) $k=20$. (c) $k=40$. (d) $k=60$.}
    \label{fig:2}
\end{figure}
\indent
As shown in Fig. \ref{fig:1} that CQSVD-QQR will converge to a certain value as the number of iterations increases. Moreover, the accuracy of CQSVD-QQR is close to that of the QSVD method after a few iterative steps.\\
\indent
Additionally, the variable $\dot{\mathbf{D}}_{\tau}$ in CQSVD-QQR will converge to a diagonal matrix. Let $\dot{\mathbf{T}}_{\tau} \in \mathbb{R}^{r \times r}$ be a square matrix with each element $\dot{\mathbf{T}}_{\tau}(u,v)=|\dot{\mathbf{D}}_{\tau}(u,v)|$, for $i, j=1, \dots, r$.  
In order to demonstrate the convergence of $\dot{\mathbf{D}}_{\tau}$, we compute $\dot{\mathbf{D}}_{5}$, $\dot{\mathbf{D}}_{20}$, $\dot{\mathbf{D}}_{40}$, and $\dot{\mathbf{D}}_{60}$. And the convergence can be shown by plotting $10\dot{\mathbf{T}}_{\tau}$ ($\tau=5, \ 20,\ 40,\ 60$), as shown in Fig. \ref{fig:2}.

\subsection{Experimental results of QQR-QNN-SR}
This section tests the proposed QQR-QNN-SR utilizing several natural color images and color medical images. We perform numerical experiments to complete (quaternion) matrices with random missing entries and matrices with random block missing entries. The ratio of the missing entries (MR) ranges from a minimum of $50\%$ to a maximum of $90\%$ to better demonstrate the efficacy of our technique.\\
\indent
\textbf{Compared method:} We compare the proposed QQR-QNN-SR with several state-of-the-art completion methods, including IRLNM-QR \cite{liu2018fast}, WNNM \cite{gu2017weighted}, MC-NC \cite{nie2018matrix}, TNNR \cite{hu2012fast}, TNN-SR \cite{dong2018low}, LRQMC \cite{miao2021color}, LRQA-G \cite{chen2019low}, QLNF \cite{yang2022quaternion}, TQLNA \cite{yang2022quaternion}. The first five methods are real matrix-based completion algorithms, and the last four are low-rank completion algorithms based on quaternion matrices.

\textbf{Quantitative assessment and natural color image for testing:} To evaluate the performance of the proposed QQR-QNN-SR, we use two commonly used measures, including peak signal-to-noise ratio (PSNR), and the structure similarity (SSIM). Higher PSNR and SSIM indicate better recovery performance. Among all the methods compared, the best numerical results are bolded. \\
\indent
Eight widely used natural color images with size $256\times 256$ in Fig. \ref{fig:3} are selected as the test images in our experiments. Like the processing in \cite{dong2018low}, among all the methods compared, those based on real matrix completion first process the three color channels separately, and then combine the results of the three channels to obtain the final result. 
\begin{figure}[htbp]
     \centering
     \begin{subfigure}[]{0.15\textwidth}
         \centering
         \includegraphics[width=2.5cm,height=2.5cm]{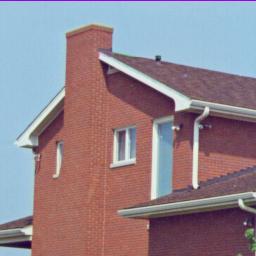}
         \caption*{Image (1)}    
    \end{subfigure}
     \quad
     \begin{subfigure}[]{0.15\textwidth}
         \centering
         \includegraphics[width=2.5cm,height=2.5cm]{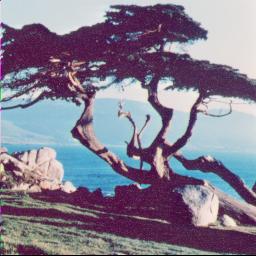}
         \caption*{Image (2)} 
     \end{subfigure}
     \quad
     \begin{subfigure}[]{0.15\textwidth}
         \centering
         \includegraphics[width=2.5cm,height=2.5cm]{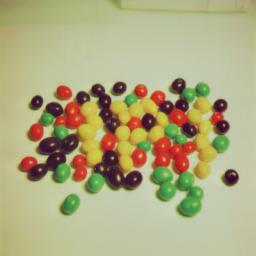}
         \caption*{Image (3)}
     \end{subfigure}
      \quad
     \begin{subfigure}[]{0.15\textwidth}
         \centering
         \includegraphics[width=2.5cm,height=2.5cm]{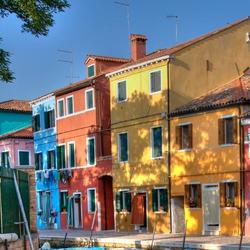}
         \caption*{Image (4)}
     \end{subfigure}

     \begin{subfigure}[]{0.15\textwidth}
         \centering
         \includegraphics[width=2.5cm,height=2.5cm]{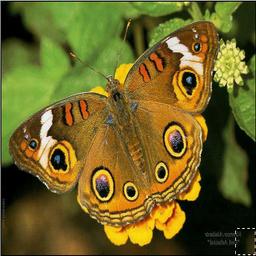}
         \caption*{Image (5)}
     \end{subfigure}
     \quad
      \begin{subfigure}[]{0.15\textwidth}
         \centering
         \includegraphics[width=2.5cm,height=2.5cm]{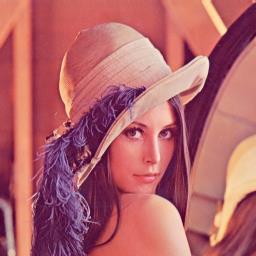}
         \caption*{Image (6)}
     \end{subfigure}
     \quad
      \begin{subfigure}[]{0.15\textwidth}
         \centering
         \includegraphics[width=2.5cm,height=2.5cm]{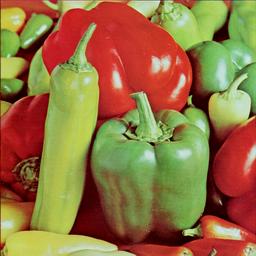}
         \caption*{Image (7)}
     \end{subfigure}
     \quad
      \begin{subfigure}[]{0.15\textwidth}
         \centering
         \includegraphics[width=2.5cm,height=2.5cm]{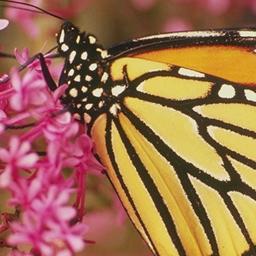}
         \caption*{Image (8)}
     \end{subfigure}
        \caption{The 8 color images with size $256\times 256\times 3$.}
        \label{fig:3}
\end{figure}

\indent
\textbf{Simulations with different parameters of QQR-QNN-SR:}
Since the settings of parameters $\lambda, \ \mu^{0}, \  r$ are closely related to the performance of the proposed method, we investigate the influence of different values of these parameters on the recovery results within a certain range of values.\\
\indent
Firstly, the settings of $\mu^{0}=\{10^{-4},\ 5\times 10^{-4},\ 10^{-3},\ 5\times 10^{-3},\ 10^{-2},\ 5\times 10^{-2}, \ 10^{-1},\ 5\times 10^{-1}, \ 10^{0},\ 10^{1}, \ 10^{2} \}$ are tested with other parameters ($\lambda=10^{-1}, \ r=85, \ \gamma=1.15$) fixed. The results for the eight color images are shown in \cref{fig:4}. After comprehensively considering the recovery results under the different missing ratios of pixels, it can be found from \cref{fig:4} that the proper settings of $\mu^{0}$ are between $[10^{-2}, \ 10^{-1}]$, and the best value of $\mu^{0}$ is $5\times 10^{-2}$.\\
\begin{figure}[htbp]
	\centering
	\begin{subfigure}[]{0.25\textwidth}
		\centering
		\includegraphics[width=4.5cm,height=3cm]{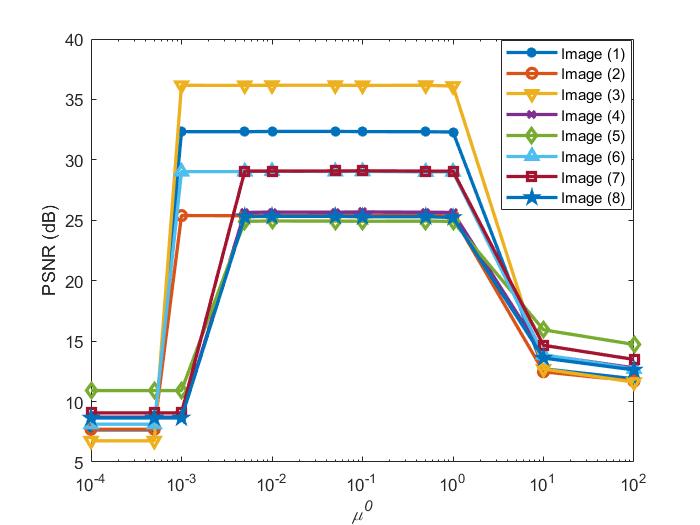}
		\caption{MR=$50\%$, PSNR, $\mu^{0}$}
	\end{subfigure}
	\quad
	\begin{subfigure}[]{0.25\textwidth}
		\centering
		\includegraphics[width=4.5cm,height=3cm]{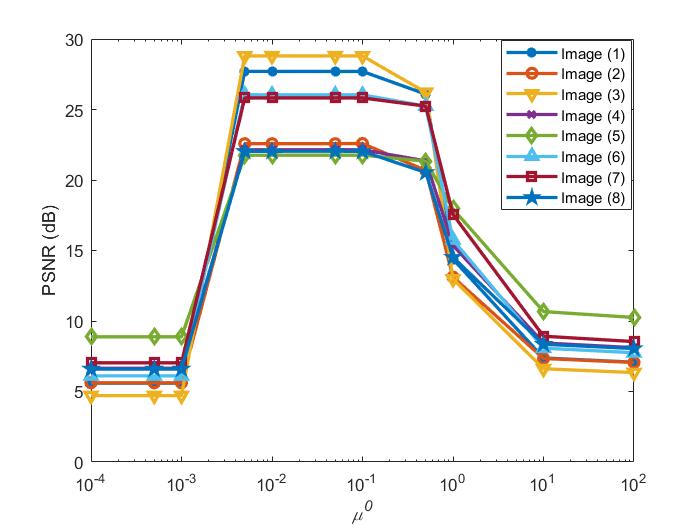}
		\caption{MR=$80\%$, PSNR, $\mu^{0}$}
	\end{subfigure}
	\quad
	\begin{subfigure}[]{0.25\textwidth}
		\centering
		\includegraphics[width=4.5cm,height=3cm]{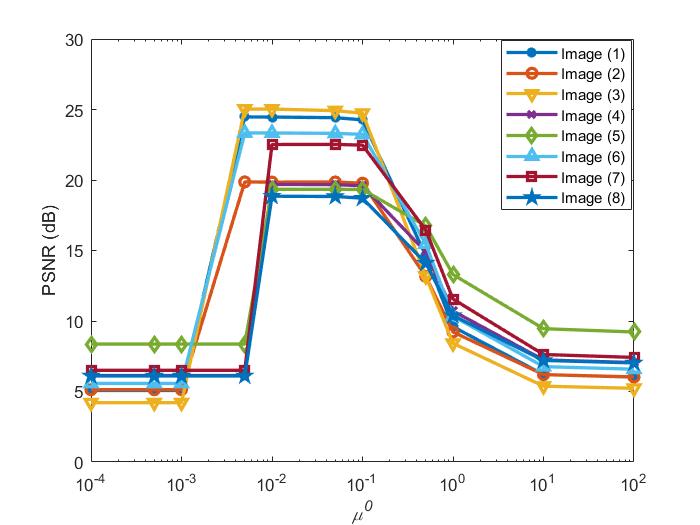}
		\caption{MR=$90\%$, PSNR, $\mu^{0}$}
	\end{subfigure}
	\quad
	
	\begin{subfigure}[]{0.25\textwidth}
		\centering
		\includegraphics[width=4.5cm,height=3cm]{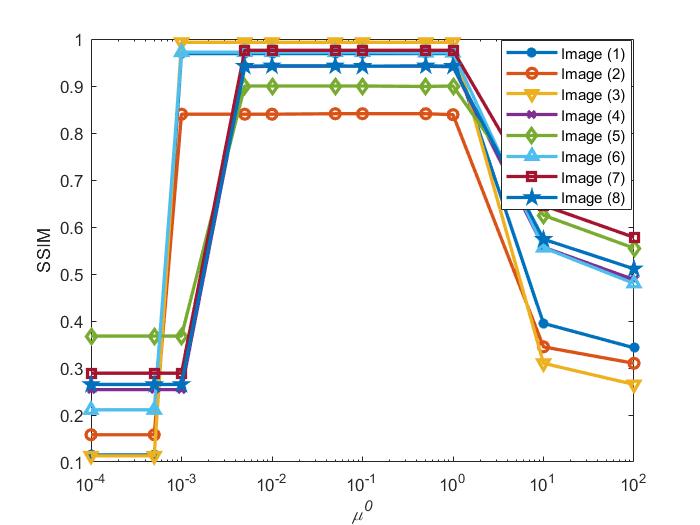}
		\caption{MR=$50\%$, SSIM, $\mu^{0}$}
	\end{subfigure}
	\quad
	\begin{subfigure}[]{0.25\textwidth}
		\centering
		\includegraphics[width=4.5cm,height=3cm]{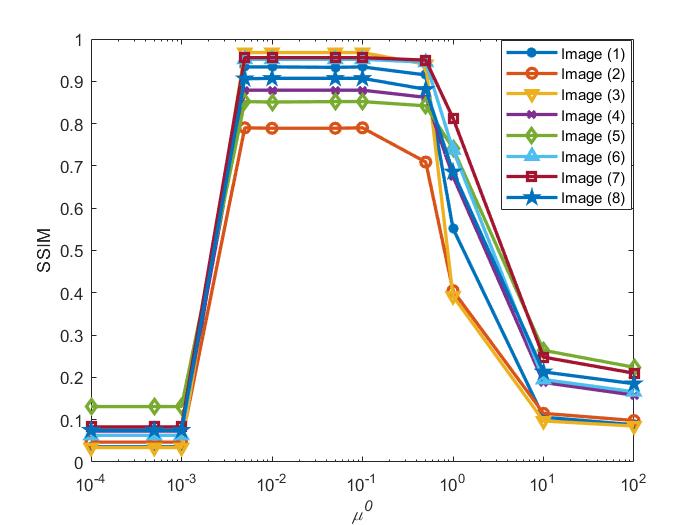}
		\caption{MR=$80\%$, SSIM, $\mu^{0}$}
	\end{subfigure}
	\quad
	\begin{subfigure}[]{0.25\textwidth}
		\centering
		\includegraphics[width=4.5cm,height=3cm]{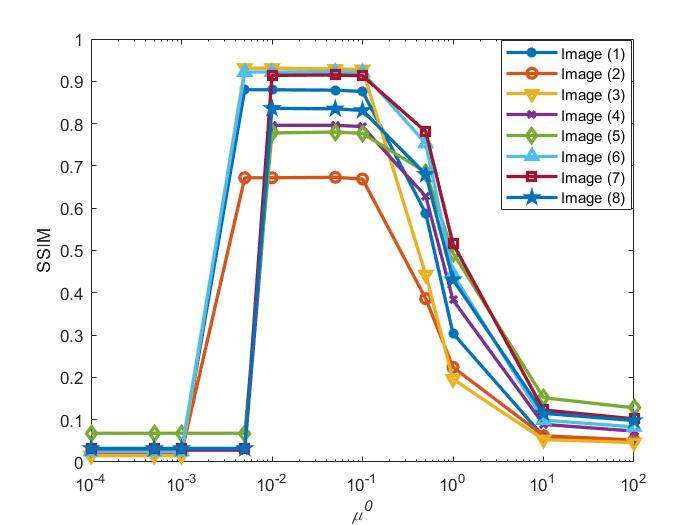}
		\caption{MR=$90\%$, SSIM, $\mu^{0}$}
	\end{subfigure}
	\quad
	
	\begin{subfigure}[]{0.25\textwidth}
		\centering
		\includegraphics[width=4.5cm,height=3cm]{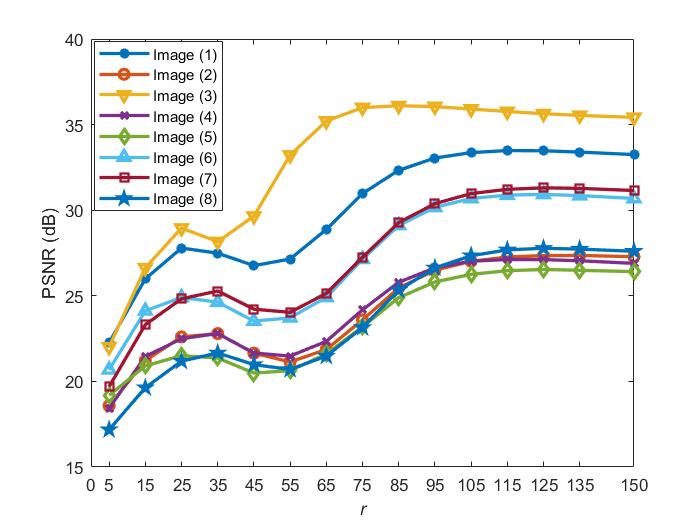}
		\caption{MR=$50\%$, PSNR, $r$}
	\end{subfigure}
	\quad
	\begin{subfigure}[]{0.25\textwidth}
		\centering
		\includegraphics[width=4.5cm,height=3cm]{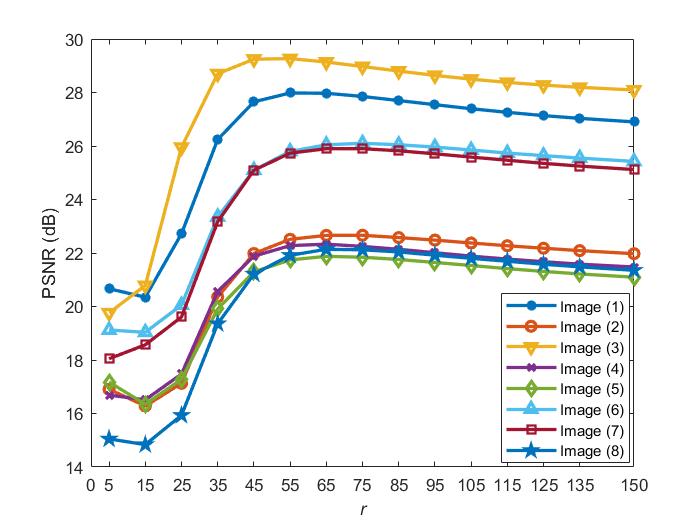}
		\caption{MR=$80\%$, PSNR, $r$}
	\end{subfigure}
	\quad
	\begin{subfigure}[]{0.25\textwidth}
		\centering
		\includegraphics[width=4.5cm,height=3cm]{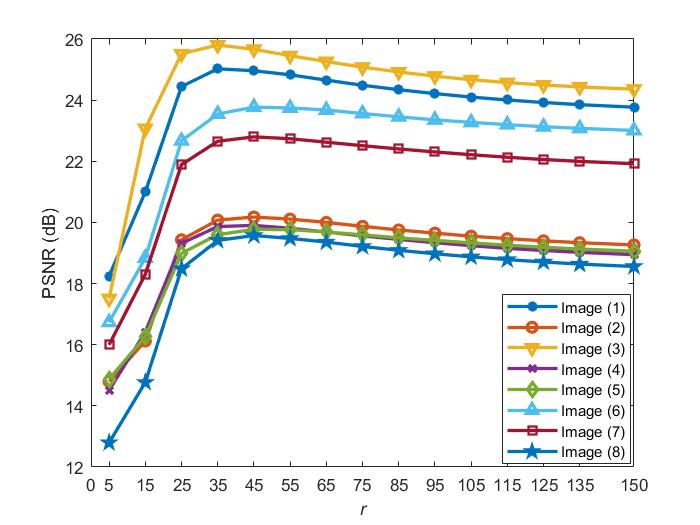}
		\caption{MR=$90\%$, PSNR, $r$}
	\end{subfigure}
	\quad
	
	\begin{subfigure}[]{0.25\textwidth}
		\centering
		\includegraphics[width=4.5cm,height=3cm]{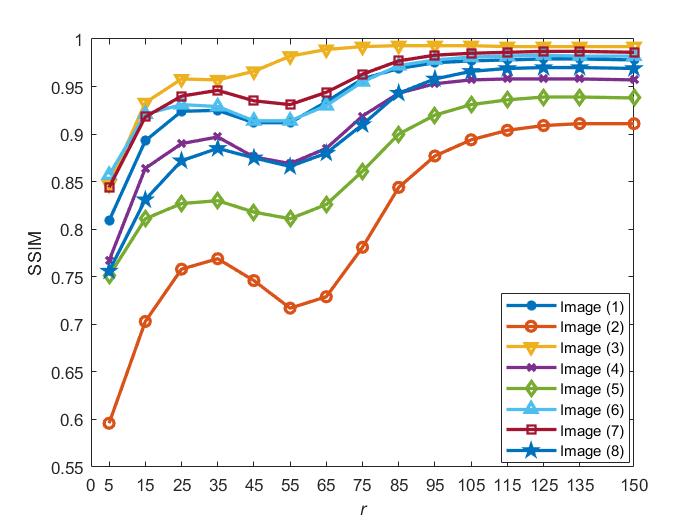}
		\caption{MR=$50\%$, SSIM, $r$}
	\end{subfigure}
	\quad
	\begin{subfigure}[]{0.25\textwidth}
		\centering
		\includegraphics[width=4.5cm,height=3cm]{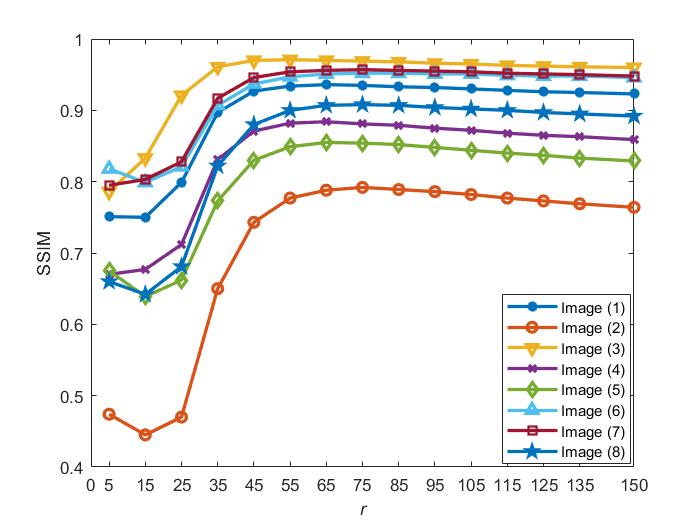}
		\caption{MR=$80\%$, SSIM, $r$}
	\end{subfigure}
	\quad
	\begin{subfigure}[]{0.25\textwidth}
		\centering
		\includegraphics[width=4.5cm,height=3cm]{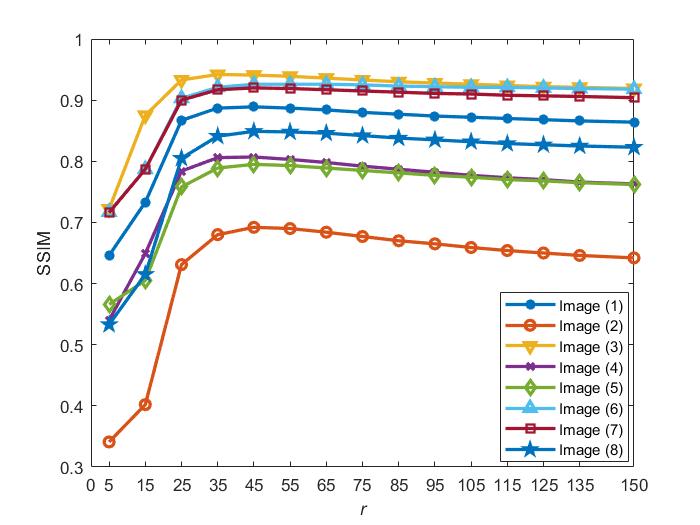}
		\caption{MR=$90\%$, SSIM, $r$}
	\end{subfigure}
	\quad 
	
	\begin{subfigure}[]{0.25\textwidth}
		\centering
		\includegraphics[width=4.5cm,height=3cm]{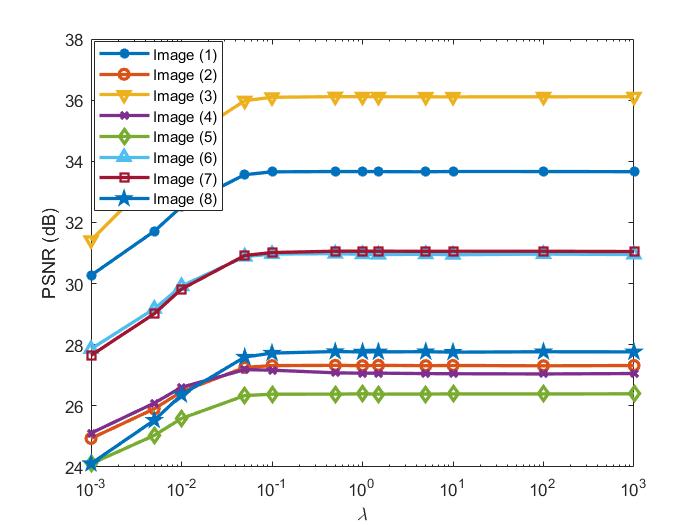}
		\caption{MR=$50\%$, PSNR, $\lambda$}
	\end{subfigure}
	\quad
	\begin{subfigure}[]{0.25\textwidth}
		\centering
		\includegraphics[width=4.5cm,height=3cm]{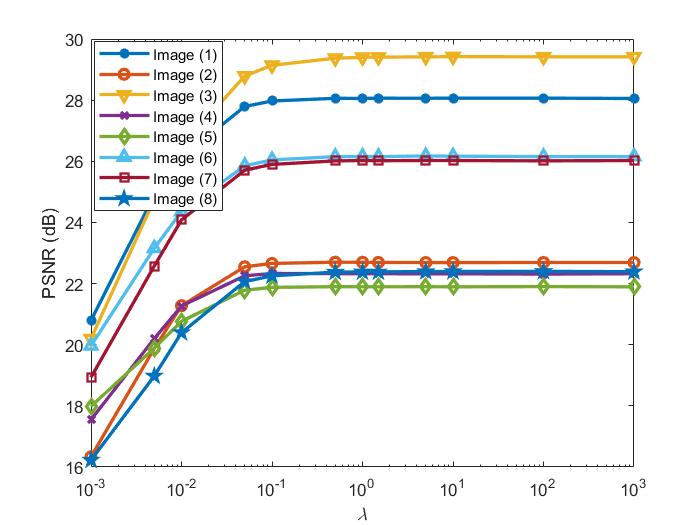}
		\caption{MR=$80\%$, PSNR, $\lambda$}
	\end{subfigure}
	\quad
	\begin{subfigure}[]{0.25\textwidth}
		\centering
		\includegraphics[width=4.5cm,height=3cm]{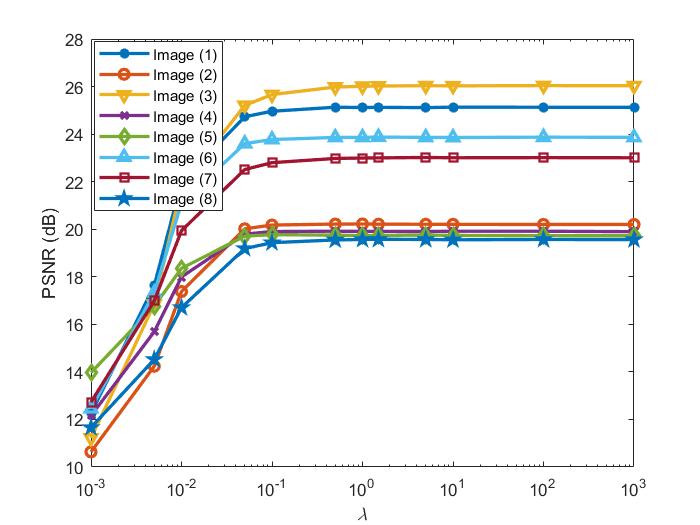}
		\caption{MR=$90\%$, PSNR, $\lambda$}
	\end{subfigure}
	\quad
	
	\begin{subfigure}[]{0.25\textwidth}
		\centering
		\includegraphics[width=4.5cm,height=3cm]{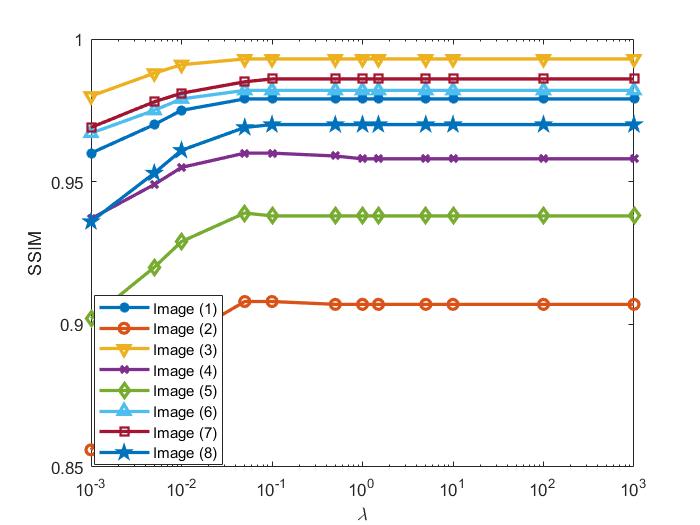}
		\caption{MR=$50\%$, SSIM, $\lambda$}
	\end{subfigure}
	\quad
	\begin{subfigure}[]{0.25\textwidth}
		\centering
		\includegraphics[width=4.5cm,height=3cm]{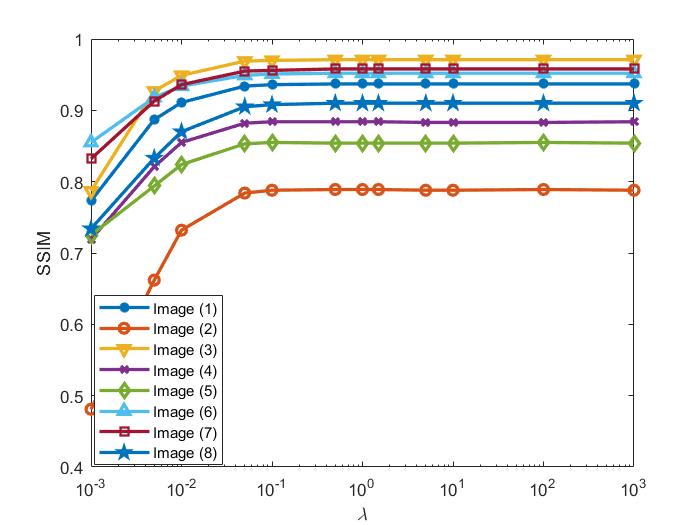}
		\caption{MR=$80\%$, SSIM, $\lambda$}
	\end{subfigure}
	\quad
	\begin{subfigure}[]{0.25\textwidth}
		\centering
		\includegraphics[width=4.5cm,height=3cm]{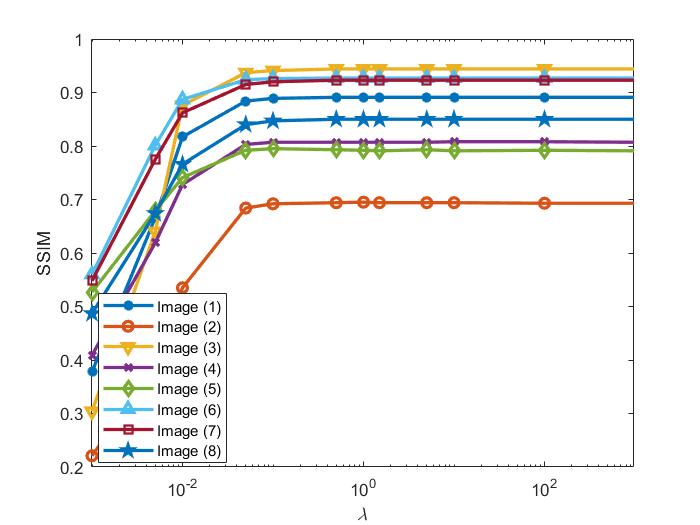}
		\caption{MR=$90\%$, SSIM, $\lambda$}
	\end{subfigure}
	\caption{(a)-(f) The PSNR values (in dB) and SSIM values obtained by the proposed QQR-QNN-SR method using different $\mu^{0}$'s with $\lambda=10^{-1}$ and $r=85$. (g)-(l) The PSNR values (in dB) and SSIM values obtained by the proposed QQR-QNN-SR method using different $r$'s with $\lambda=10^{-1}$ and $\mu^{0}=5\times 10^{-2}$. (m)-(r) The PSNR values (in dB) and SSIM values obtained by the proposed QQR-QNN-SR method using different $\lambda$'s with $\mu^{0}=5\times 10^{-2}$, $r=125$ for MR=$50\%$, $r=65$ for MR=$80\%$, and $r=45$ for MR=$90\%$.}
	\label{fig:4}
\end{figure}
\indent
Secondly, by fixing the values of $\lambda$, $\mu^{0}$, and $\gamma$ to $\lambda=10^{-1}, \ \mu^{0}=5\times 10^{-2}, \ \text{and} \ \gamma=1.15$, the PSNR and SSIM values of the recovered images of the test images obtained by taking different values of $r$, where $r=\{5,\	15,	\ 25,	\ 35,	\ 45,\	55,\	65,\	75,\	85,\	95,\	105,\	115,\	125,\	135,\	150\}$), are shown in the \cref{fig:4}. When the MR is larger, the value of $r$ corresponding to the best recovery result is smaller. This is also in line with the fact that the larger the missing rate, the less original image information contained in the observed image, and the more low-rank the corresponding image is. Specifically, in the case of MR=$50\%$, MR=$70\%$, MR=$80\%$, and MR=$90\%$, the values of $r$ that can obtain the best recovery results are 125, 85, 65, and 45, respectively.\\
\indent
Finally, set $\lambda=\{10^{-3},\ 5\times 10^{-3}, \ 10^{-2},\ 5\times 10^{-2},\ 10^{-1},\ 5\times 10^{-1},\ 10^{0},\ 1.5\times10^{0}, \ 5\times 10^{0},\ 10^{1}, \ 10^{2},$ $\ 10^{3}\}$ and fix other parameters ($\mu^{0}=5\times 10^{-2}, \ r=\{125, 65, 45\} \ \text{correspond to MR}=\{50\%, 80\%, 90\%\}$, respectively, and $\gamma=1.15$) to study the influence of the settings of parameter $\lambda$ on performance of the results. The quantitatively evaluated results are shown in \cref{fig:4}. It can be seen from \cref{fig:4} that when $\lambda \geq 10^{-1}$, the values of PSNR and SSIM no longer increase rapidly and tend to decrease slowly. We can see that relatively good results can be achieved when $\lambda=10^{-1}$.\\
\indent
 \textbf{Experiments on natural color images:} In this simulation, the proposed QQR-QNN-SR method was compared with several state-of-the-art completion methods. For QQR-QNN-SR, we set $\lambda=10^{-1},\ \mu^{0}=5\times 10^{-2}, \gamma =1.15$. And when MR=$\{90\%, \ 80\%, \ 70\%, \ 50\%\}$, $r$ is correspondingly set as 45, 65, 85, and 125. The larger the value of MR, the more missing pixels. The higher the MR value is, the smaller the corresponding rank is.\\
  \indent
 The recovery results of different methods when MR$=70\%$ and the corresponding quantitative evaluation results are shown in \cref{Figuretable1}. As shown in \cref{Figuretable1}, the proposed QQR-QNN-SR method not only achieves the best visual results but also the best quantitative results compared to other methods. Compared with other methods, the two methods TNN-SR and QQR-QNN-SR can achieve better completion results, which illustrates the importance of the sparse prior. And compared with method TNN-SR, the completion effect of method QQR-QNN-SR is better, which is mainly due to the superiority of quaternion to characterize color images. To better demonstrate the superiority of our proposed method, the visual results of Image (2) and Image (8) recovered by the QQR-QNN-SR method and several state-of-the-art methods when MR=$90\%$ are shown in Fig. \ref{fig:0.1IMAGE28}. Comparing the visual results recovered by our proposed method with those obtained by other state-of-the-art methods, it can be found that our method can recover more details of the original images.
\begin{figure}[htbp]
	\begin{minipage}[h]{0.06\linewidth}
		\centering
	\begin{subfigure}{1\textwidth}
		\centering
		\includegraphics[width=1.3cm,height=1.3cm]{random_missing_natural/Image1.jpg}
	\end{subfigure} \hfill\\
	\begin{subfigure}{1\textwidth}
		\centering
		\includegraphics[width=1.3cm,height=1.3cm]{random_missing_natural/Image2.jpg}
	\end{subfigure} \hfill\\
	\begin{subfigure}{1\textwidth}
		\centering
		\includegraphics[width=1.3cm,height=1.3cm]{random_missing_natural/Image3.jpg}
	\end{subfigure} \hfill\\
	\begin{subfigure}{1\textwidth}
		\centering
		\includegraphics[width=1.3cm,height=1.3cm]{random_missing_natural/Image4.jpg}
	\end{subfigure} \hfill\\
	\begin{subfigure}{1\textwidth}
		\centering
		\includegraphics[width=1.3cm,height=1.3cm]{random_missing_natural/Image5.jpg}
	\end{subfigure} \hfill\\
	\begin{subfigure}{1\textwidth}
		\centering
		\includegraphics[width=1.3cm,height=1.3cm]{random_missing_natural/Image6.jpg}
	\end{subfigure} \hfill\\
	\begin{subfigure}{1\textwidth}
		\centering
		\includegraphics[width=1.3cm,height=1.3cm]{random_missing_natural/Image7.jpg}
	\end{subfigure} \hfill\\
	\begin{subfigure}{1\textwidth}
		\centering
		\includegraphics[width=1.3cm,height=1.3cm]{random_missing_natural/Image8.jpg}
	\end{subfigure} 
	\hfill\\
	\subcaption*{(a)}
	\label{a}
\end{minipage} 
    \hfill 
   	\begin{minipage}[h]{0.06\linewidth}
   			\centering
   	\begin{subfigure}{1\textwidth}
   		\centering
   		\includegraphics[width=1.3cm,height=1.3cm]{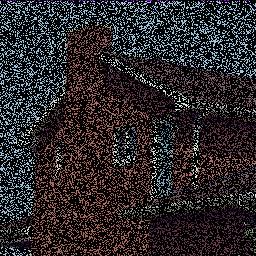}
   	\end{subfigure} \hfill\\
   	\begin{subfigure}{1\textwidth}
   		\centering
   		\includegraphics[width=1.3cm,height=1.3cm]{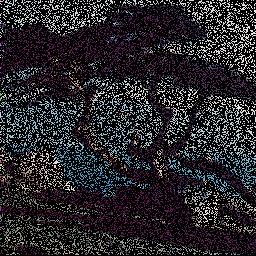}
   	\end{subfigure} \hfill\\
   	\begin{subfigure}{1\textwidth}
   		\centering
   		\includegraphics[width=1.3cm,height=1.3cm]{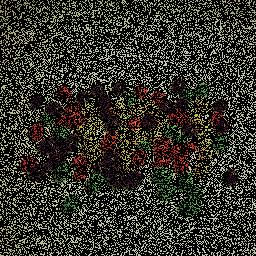}
   	\end{subfigure} \hfill\\
   	\begin{subfigure}{1\textwidth}
   		\centering
   		\includegraphics[width=1.3cm,height=1.3cm]{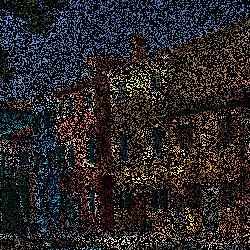}
   	\end{subfigure} \hfill\\
   	\begin{subfigure}{1\textwidth}
   		\centering
   		\includegraphics[width=1.3cm,height=1.3cm]{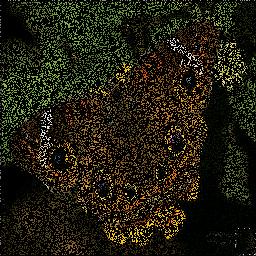}
   	\end{subfigure} \hfill\\
   	\begin{subfigure}{1\textwidth}
   		\centering
   		\includegraphics[width=1.3cm,height=1.3cm]{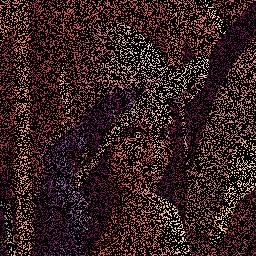}
   	\end{subfigure} \hfill\\
   	\begin{subfigure}{1\textwidth}
   		\centering
   		\includegraphics[width=1.3cm,height=1.3cm]{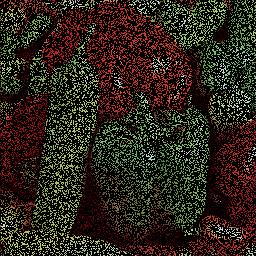}
   	\end{subfigure} \hfill\\
   	\begin{subfigure}{1\textwidth}
   		\centering
   		\includegraphics[width=1.3cm,height=1.3cm]{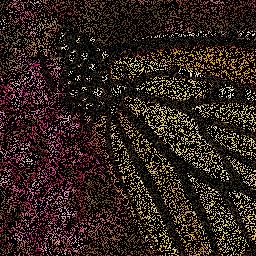}
   	\end{subfigure} 
   	\hfill\\
   	\subcaption*{(b)}
   	\label{a}
   \end{minipage} 
   \hfill 
   	\begin{minipage}[h]{0.06\linewidth}
   			\centering
   	\begin{subfigure}{1\textwidth}
   		\centering
   		\includegraphics[width=1.3cm,height=1.3cm]{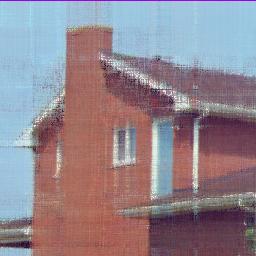}
   	\end{subfigure} \hfill\\
   	\begin{subfigure}{1\textwidth}
   		\centering
   		\includegraphics[width=1.3cm,height=1.3cm]{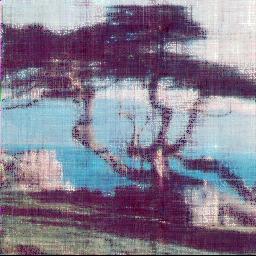}
   	\end{subfigure} \hfill\\
   	\begin{subfigure}{1\textwidth}
   		\centering
   		\includegraphics[width=1.3cm,height=1.3cm]{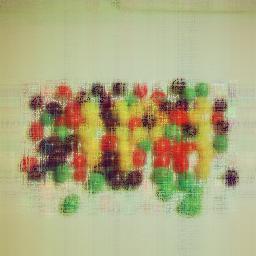}
   	\end{subfigure} \hfill\\
   	\begin{subfigure}{1\textwidth}
   		\centering
   		\includegraphics[width=1.3cm,height=1.3cm]{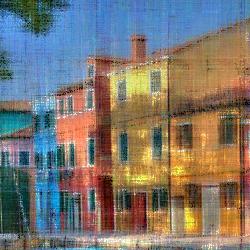}
   	\end{subfigure} \hfill\\
   	\begin{subfigure}{1\textwidth}
   		\centering
   		\includegraphics[width=1.3cm,height=1.3cm]{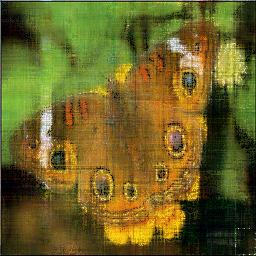}
   	\end{subfigure} \hfill\\
   	\begin{subfigure}{1\textwidth}
   		\centering
   		\includegraphics[width=1.3cm,height=1.3cm]{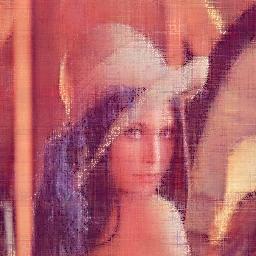}
   	\end{subfigure} \hfill\\
   	\begin{subfigure}{1\textwidth}
   		\centering
   		\includegraphics[width=1.3cm,height=1.3cm]{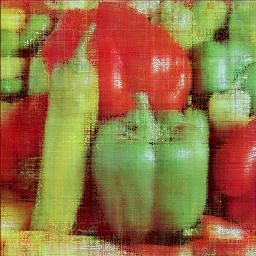}
   	\end{subfigure} \hfill\\
   	\begin{subfigure}{1\textwidth}
   		\centering
   		\includegraphics[width=1.3cm,height=1.3cm]{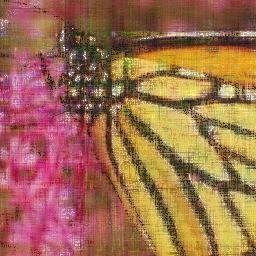}
   	\end{subfigure} 
   	\hfill\\
   	\subcaption*{(c)}
   	\label{a}
   \end{minipage} 
   \hfill 
   \begin{minipage}[h]{0.06\linewidth}
   		\centering
   	\begin{subfigure}{1\textwidth}
   		\centering
   		\includegraphics[width=1.3cm,height=1.3cm]{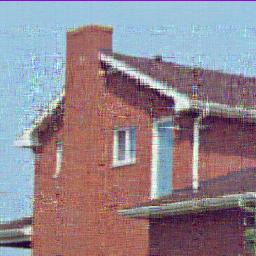}
   	\end{subfigure} \hfill\\
   	\begin{subfigure}{1\textwidth}
   		\centering
   		\includegraphics[width=1.3cm,height=1.3cm]{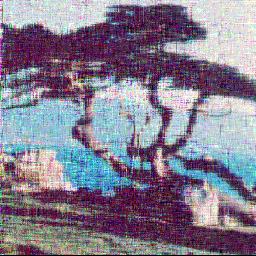}
   	\end{subfigure} \hfill\\
   	\begin{subfigure}{1\textwidth}
   		\centering
   		\includegraphics[width=1.3cm,height=1.3cm]{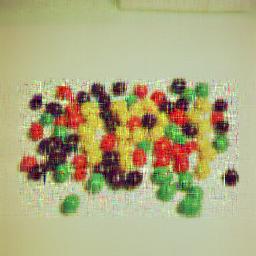}
   	\end{subfigure} \hfill\\
   	\begin{subfigure}{1\textwidth}
   		\centering
   		\includegraphics[width=1.3cm,height=1.3cm]{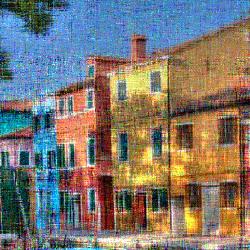}
   	\end{subfigure} \hfill\\
   	\begin{subfigure}{1\textwidth}
   		\centering
   		\includegraphics[width=1.3cm,height=1.3cm]{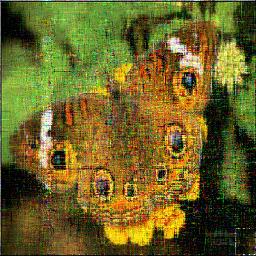}
   	\end{subfigure} \hfill\\
   	\begin{subfigure}{1\textwidth}
   		\centering
   		\includegraphics[width=1.3cm,height=1.3cm]{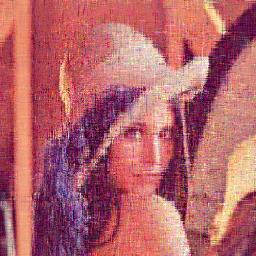}
   	\end{subfigure} \hfill\\
   	\begin{subfigure}{1\textwidth}
   		\centering
   		\includegraphics[width=1.3cm,height=1.3cm]{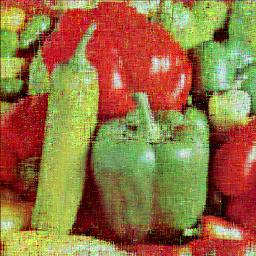}
   	\end{subfigure} \hfill\\
   	\begin{subfigure}{1\textwidth}
   		\centering
   		\includegraphics[width=1.3cm,height=1.3cm]{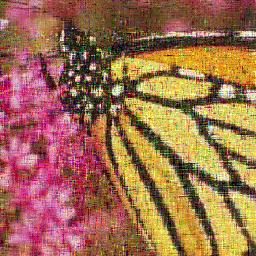}
   	\end{subfigure}                             
   	\hfill\\
   	\subcaption*{(d)}
   	\label{a}
   \end{minipage} 
   \hfill 
   	\begin{minipage}[h]{0.06\linewidth}
   			\centering
   	\begin{subfigure}{1\textwidth}
   		\centering
   		\includegraphics[width=1.3cm,height=1.3cm]{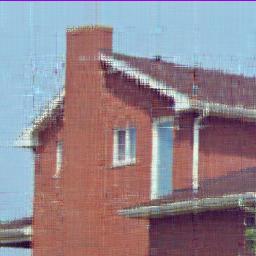}
   	\end{subfigure} \hfill\\
   	\begin{subfigure}{1\textwidth}
   		\centering
   		\includegraphics[width=1.3cm,height=1.3cm]{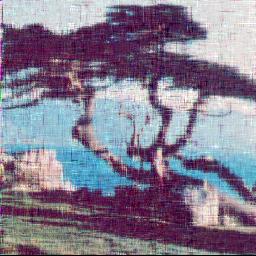}
   	\end{subfigure} \hfill\\
   	\begin{subfigure}{1\textwidth}
   		\centering
   		\includegraphics[width=1.3cm,height=1.3cm]{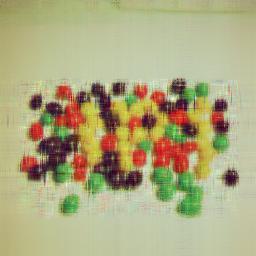}
   	\end{subfigure} \hfill\\
   	\begin{subfigure}{1\textwidth}
   		\centering
   		\includegraphics[width=1.3cm,height=1.3cm]{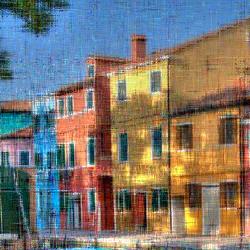}
   	\end{subfigure} \hfill\\
   	\begin{subfigure}{1\textwidth}
   		\centering
   		\includegraphics[width=1.3cm,height=1.3cm]{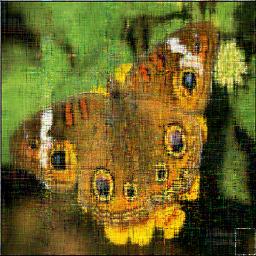}
   	\end{subfigure} \hfill\\
   	\begin{subfigure}{1\textwidth}
   		\centering
   		\includegraphics[width=1.3cm,height=1.3cm]{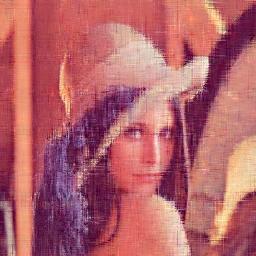}
   	\end{subfigure} \hfill\\
   	\begin{subfigure}{1\textwidth}
   		\centering
   		\includegraphics[width=1.3cm,height=1.3cm]{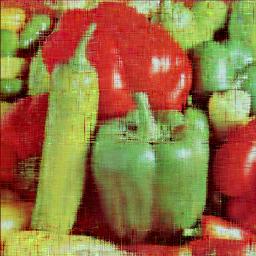}
   	\end{subfigure} \hfill\\
   	\begin{subfigure}{1\textwidth}
   		\centering
   		\includegraphics[width=1.3cm,height=1.3cm]{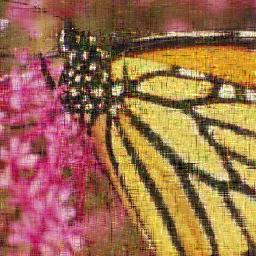}
   	\end{subfigure} 
   	\hfill\\
   	\subcaption*{(e)}
   	\label{a}
   \end{minipage} 
   \hfill 
   \begin{minipage}[h]{0.06\linewidth}
   		\centering
   	\begin{subfigure}{1\textwidth}
   		\centering
   		\includegraphics[width=1.3cm,height=1.3cm]{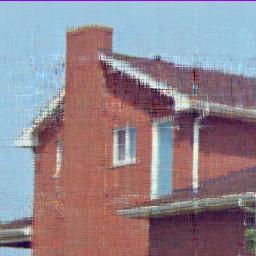}
   	\end{subfigure} \hfill\\
   	\begin{subfigure}{1\textwidth}
   		\centering
   		\includegraphics[width=1.3cm,height=1.3cm]{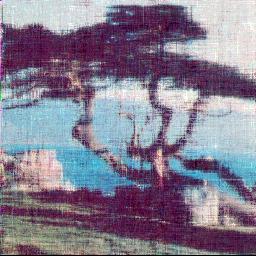}
   	\end{subfigure} \hfill\\
   	\begin{subfigure}{1\textwidth}
   		\centering
   		\includegraphics[width=1.3cm,height=1.3cm]{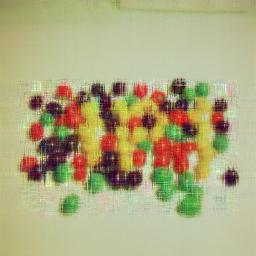}
   	\end{subfigure} \hfill\\
   	\begin{subfigure}{1\textwidth}
   		\centering
   		\includegraphics[width=1.3cm,height=1.3cm]{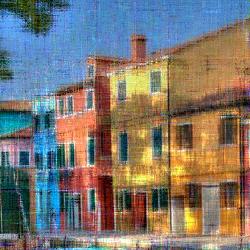}
   	\end{subfigure} \hfill\\
   	\begin{subfigure}{1\textwidth}
   		\centering
   		\includegraphics[width=1.3cm,height=1.3cm]{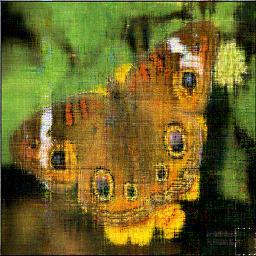}
   	\end{subfigure} \hfill\\
   	\begin{subfigure}{1\textwidth}
   		\centering
   		\includegraphics[width=1.3cm,height=1.3cm]{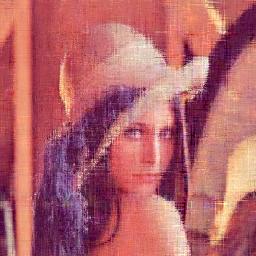}
   	\end{subfigure} \hfill\\
   	\begin{subfigure}{1\textwidth}
   		\centering
   		\includegraphics[width=1.3cm,height=1.3cm]{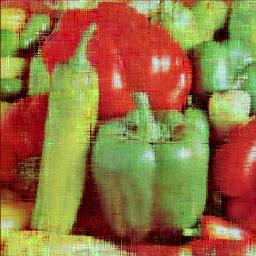}
   	\end{subfigure} \hfill\\
   	\begin{subfigure}{1\textwidth}
   		\centering
   		\includegraphics[width=1.3cm,height=1.3cm]{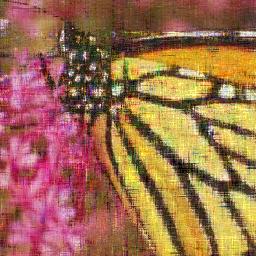}
   	\end{subfigure} 
   	\hfill\\
   	\subcaption*{(f)}
   	\label{a}
   \end{minipage} 
   \hfill 
   \begin{minipage}[h]{0.06\linewidth}
   		\centering
   	\begin{subfigure}{1\textwidth}
   		\centering
   		\includegraphics[width=1.3cm,height=1.3cm]{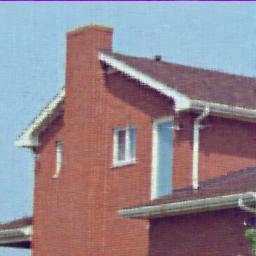}
   	\end{subfigure} \hfill\\
   	\begin{subfigure}{1\textwidth}
   		\centering
   		\includegraphics[width=1.3cm,height=1.3cm]{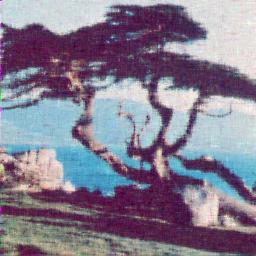}
   	\end{subfigure} \hfill\\
   	\begin{subfigure}{1\textwidth}
   		\centering
   		\includegraphics[width=1.3cm,height=1.3cm]{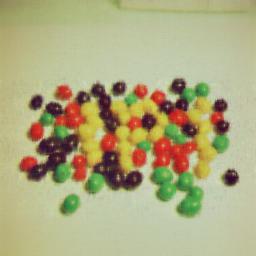}
   	\end{subfigure} \hfill\\
   	\begin{subfigure}{1\textwidth}
   		\centering
   		\includegraphics[width=1.3cm,height=1.3cm]{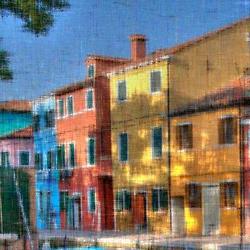}
   	\end{subfigure} \hfill\\
   	\begin{subfigure}{1\textwidth}
   		\centering
   		\includegraphics[width=1.3cm,height=1.3cm]{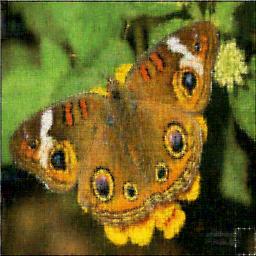}
   	\end{subfigure} \hfill\\
   	\begin{subfigure}{1\textwidth}
   		\centering
   		\includegraphics[width=1.3cm,height=1.3cm]{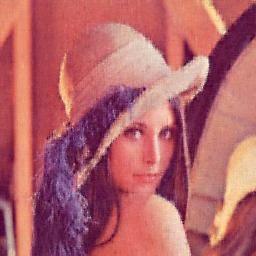}
   	\end{subfigure} \hfill\\
   	\begin{subfigure}{1\textwidth}
   		\centering
   		\includegraphics[width=1.3cm,height=1.3cm]{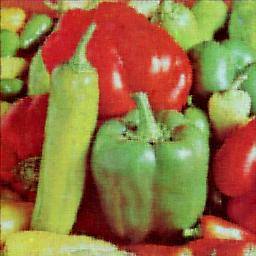}
   	\end{subfigure} \hfill\\
   	\begin{subfigure}{1\textwidth}
   		\centering
   		\includegraphics[width=1.3cm,height=1.3cm]{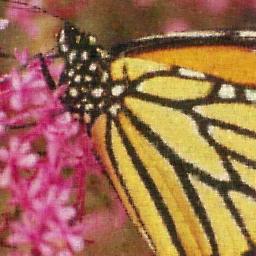}
   	\end{subfigure} 
   	\hfill\\
   	\subcaption*{(g)}
   	\label{a}
   \end{minipage} 
   \hfill 
   \begin{minipage}[h]{0.06\linewidth}
   		\centering
   	\begin{subfigure}{1\textwidth}
   		\centering
   		\includegraphics[width=1.3cm,height=1.3cm]{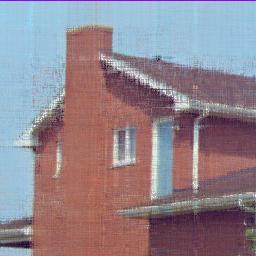}
   	\end{subfigure} \hfill\\
   	\begin{subfigure}{1\textwidth}
   		\centering
   		\includegraphics[width=1.3cm,height=1.3cm]{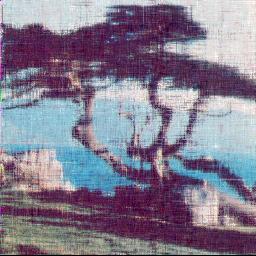}
   	\end{subfigure} \hfill\\
   	\begin{subfigure}{1\textwidth}
   		\centering
   		\includegraphics[width=1.3cm,height=1.3cm]{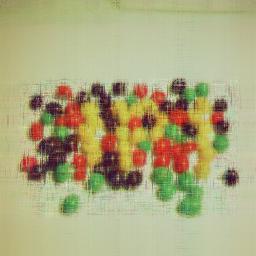}
   	\end{subfigure} \hfill\\
   	\begin{subfigure}{1\textwidth}
   		\centering
   		\includegraphics[width=1.3cm,height=1.3cm]{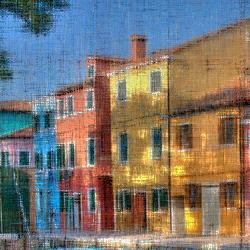}
   	\end{subfigure} \hfill\\
   	\begin{subfigure}{1\textwidth}
   		\centering
   		\includegraphics[width=1.3cm,height=1.3cm]{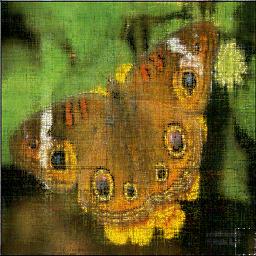}
   	\end{subfigure} \hfill\\
   	\begin{subfigure}{1\textwidth}
   		\centering
   		\includegraphics[width=1.3cm,height=1.3cm]{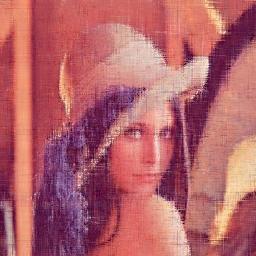}
   	\end{subfigure} \hfill\\
   	\begin{subfigure}{1\textwidth}
   		\centering
   		\includegraphics[width=1.3cm,height=1.3cm]{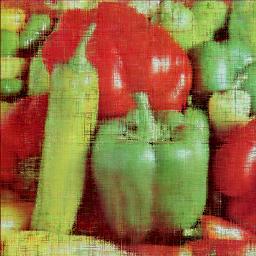}
   	\end{subfigure} \hfill\\
   	\begin{subfigure}{1\textwidth}
   		\centering
   		\includegraphics[width=1.3cm,height=1.3cm]{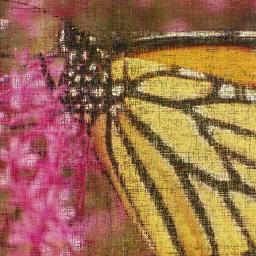}
   	\end{subfigure} 
   	\hfill\\
   	\subcaption*{(h)}
   	\label{a}
   \end{minipage} 
   \hfill 
   	\begin{minipage}[h]{0.06\linewidth}
   			\centering
   	\begin{subfigure}{1\textwidth}
   		\centering
   		\includegraphics[width=1.3cm,height=1.3cm]{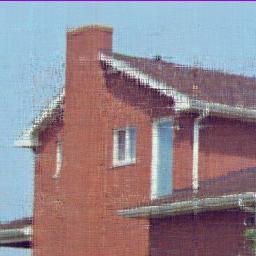}
   	\end{subfigure} \hfill\\
   	\begin{subfigure}{1\textwidth}
   		\centering
   		\includegraphics[width=1.3cm,height=1.3cm]{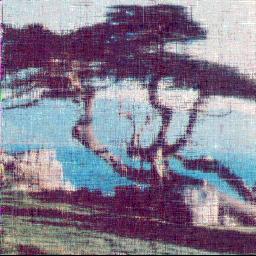}
   	\end{subfigure} \hfill\\
   	\begin{subfigure}{1\textwidth}
   		\centering
   		\includegraphics[width=1.3cm,height=1.3cm]{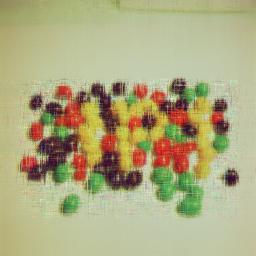}
   	\end{subfigure} \hfill\\
   	\begin{subfigure}{1\textwidth}
   		\centering
   		\includegraphics[width=1.3cm,height=1.3cm]{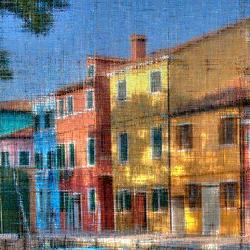}
   	\end{subfigure} \hfill\\
   	\begin{subfigure}{1\textwidth}
   		\centering
   		\includegraphics[width=1.3cm,height=1.3cm]{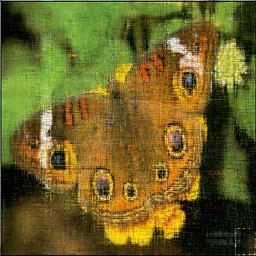}
   	\end{subfigure} \hfill\\
   	\begin{subfigure}{1\textwidth}
   		\centering
   		\includegraphics[width=1.3cm,height=1.3cm]{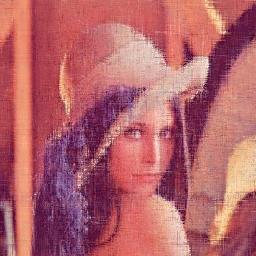}
   	\end{subfigure} \hfill\\
   	\begin{subfigure}{1\textwidth}
   		\centering
   		\includegraphics[width=1.3cm,height=1.3cm]{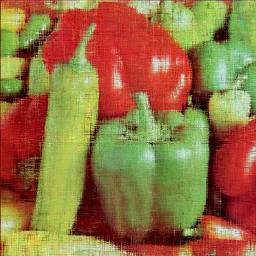}
   	\end{subfigure} \hfill\\
   	\begin{subfigure}{1\textwidth}
   		\centering
   		\includegraphics[width=1.3cm,height=1.3cm]{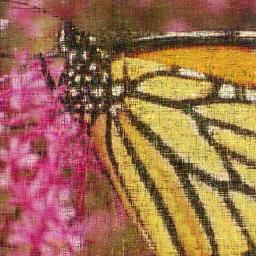}
   	\end{subfigure} 
   	\hfill\\
   	\subcaption*{(i)}
   	\label{a}
   \end{minipage} 
   \hfill 
   \begin{minipage}[h]{0.06\linewidth}
   		\centering
   	\begin{subfigure}{1\textwidth}
   		\centering
   		\includegraphics[width=1.3cm,height=1.3cm]{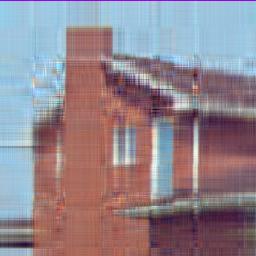}
   	\end{subfigure} \hfill\\
   	\begin{subfigure}{1\textwidth}
   		\centering
   		\includegraphics[width=1.3cm,height=1.3cm]{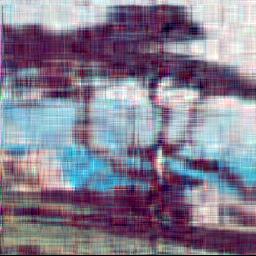}
   	\end{subfigure} \hfill\\
   	\begin{subfigure}{1\textwidth}
   		\centering
   		\includegraphics[width=1.3cm,height=1.3cm]{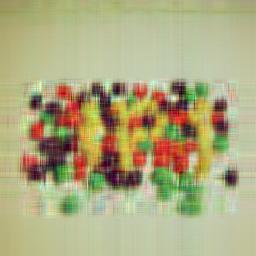}
   	\end{subfigure} \hfill\\
   	\begin{subfigure}{1\textwidth}
   		\centering
   		\includegraphics[width=1.3cm,height=1.3cm]{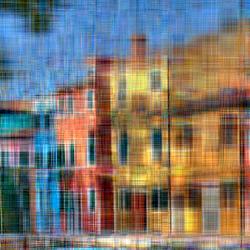}
   	\end{subfigure} \hfill\\
   	\begin{subfigure}{1\textwidth}
   		\centering
   		\includegraphics[width=1.3cm,height=1.3cm]{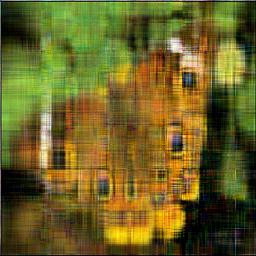}
   	\end{subfigure} \hfill\\
   	\begin{subfigure}{1\textwidth}
   		\centering
   		\includegraphics[width=1.3cm,height=1.3cm]{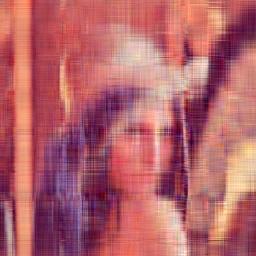}
   	\end{subfigure} \hfill\\
   	\begin{subfigure}{1\textwidth}
   		\centering
   		\includegraphics[width=1.3cm,height=1.3cm]{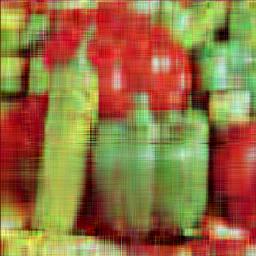}
   	\end{subfigure} \hfill\\
   	\begin{subfigure}{1\textwidth}
   		\centering
   		\includegraphics[width=1.3cm,height=1.3cm]{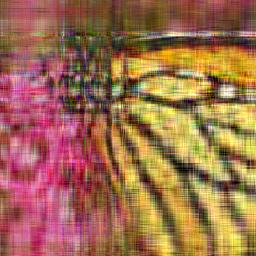}
   	\end{subfigure} 
   	\hfill\\
   	\subcaption*{(j)}
   	\label{a}
   \end{minipage} 
   \hfill 
   \begin{minipage}[h]{0.06\linewidth}
   		\centering
   	\begin{subfigure}{1\textwidth}
   		\centering
   		\includegraphics[width=1.3cm,height=1.3cm]{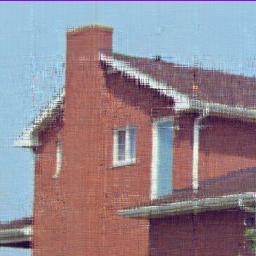}
   	\end{subfigure} \hfill\\
   	\begin{subfigure}{1\textwidth}
   		\centering
   		\includegraphics[width=1.3cm,height=1.3cm]{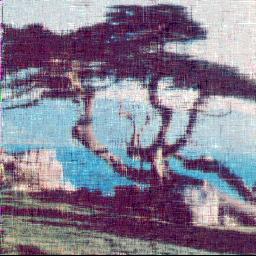}
   	\end{subfigure} \hfill\\
   	\begin{subfigure}{1\textwidth}
   		\centering
   		\includegraphics[width=1.3cm,height=1.3cm]{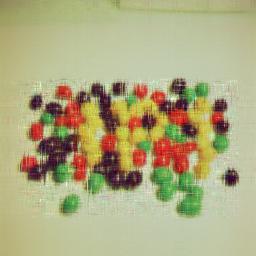}
   	\end{subfigure} \hfill\\
   	\begin{subfigure}{1\textwidth}
   		\centering
   		\includegraphics[width=1.3cm,height=1.3cm]{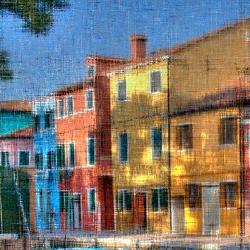}
   	\end{subfigure} \hfill\\
   	\begin{subfigure}{1\textwidth}
   		\centering
   		\includegraphics[width=1.3cm,height=1.3cm]{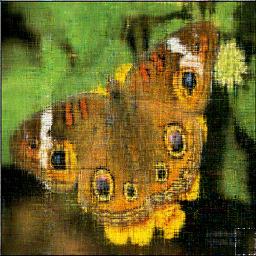}
   	\end{subfigure} \hfill\\
   	\begin{subfigure}{1\textwidth}
   		\centering
   		\includegraphics[width=1.3cm,height=1.3cm]{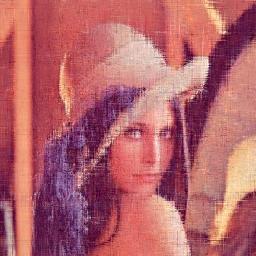}
   	\end{subfigure} \hfill\\
   	\begin{subfigure}{1\textwidth}
   		\centering
   		\includegraphics[width=1.3cm,height=1.3cm]{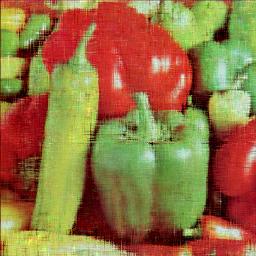}
   	\end{subfigure} \hfill\\
   	\begin{subfigure}{1\textwidth}
   		\centering
   		\includegraphics[width=1.3cm,height=1.3cm]{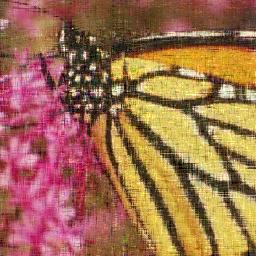}
   	\end{subfigure} 
   	\hfill\\
   	\subcaption*{(k)}
   	\label{a}
   \end{minipage} 
   \hfill 
   \begin{minipage}[h]{0.06\linewidth}
   	\centering
   	\begin{subfigure}{1\textwidth}
   		\centering
   		\includegraphics[width=1.3cm,height=1.3cm]{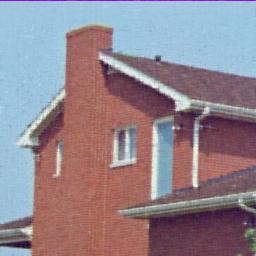}
   	\end{subfigure} \hfill\\
   	\begin{subfigure}{1\textwidth}
   		\centering
   		\includegraphics[width=1.3cm,height=1.3cm]{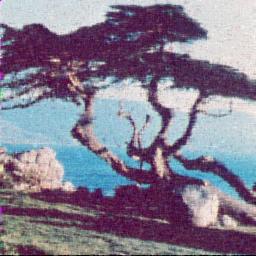}
   	\end{subfigure} \hfill\\
   	\begin{subfigure}{1\textwidth}
   		\centering
   		\includegraphics[width=1.3cm,height=1.3cm]{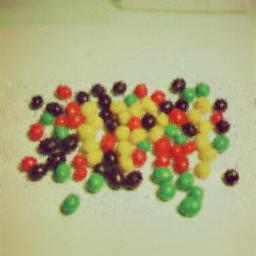}
   	\end{subfigure} \hfill\\
   	\begin{subfigure}{1\textwidth}
   		\centering
   		\includegraphics[width=1.3cm,height=1.3cm]{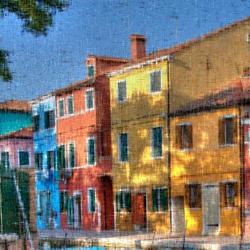}
   	\end{subfigure} \hfill\\
   	\begin{subfigure}{1\textwidth}
   		\centering
   		\includegraphics[width=1.3cm,height=1.3cm]{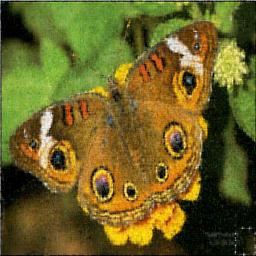}
   	\end{subfigure} \hfill\\
   	\begin{subfigure}{1\textwidth}
   		\centering
   		\includegraphics[width=1.3cm,height=1.3cm]{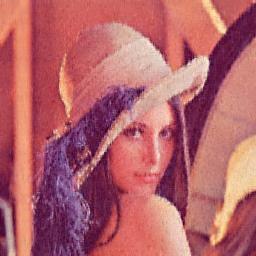}
   	\end{subfigure} \hfill\\
   	\begin{subfigure}{1\textwidth}
   		\centering
   		\includegraphics[width=1.3cm,height=1.3cm]{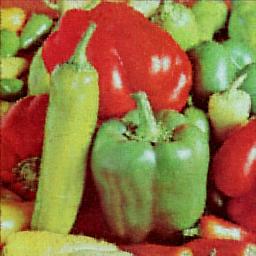}
   	\end{subfigure} \hfill\\
   	\begin{subfigure}{1\textwidth}
   		\centering
   		\includegraphics[width=1.3cm,height=1.3cm]{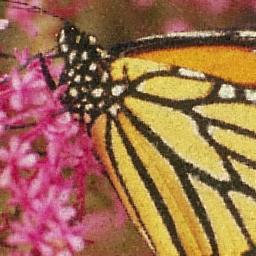}
   	\end{subfigure} 
   	\hfill\\
   	\subcaption*{(l)}
   	\label{a}
   \end{minipage} 
   \hfill \\
	\begin{minipage}{1\linewidth}
	\begin{table}[H]
		\centering
		\resizebox{15.5cm}{4.5cm}{
			\begin{tabular}{|c|c|c|c|c|c|c|c|c|c|c|}		
				\hline
				Methods:& IRLNM-QR  & WNNM  & MC-NC  & TNNR & TNN-SR &LRQMC & LRQA-G  & QLNF &TQLNA& QQR-QNN-SR  
				\\ \toprule
				\hline
				Images:  &\multicolumn{10}{c|}{${\rm{MR}}=50\%$}\\
				\hline
				Image (1)&  27.701/0.938 & 27.919/0.922	& 29.703/0.957	& 30.171/0.953 &33.276/0.978 &	30.221/0.960 &	30.107/0.958 & 24.308/0.858	& 30.867/0.959 &	\textbf{33.470}/\textbf{0.979}\\
				Image (2)&23.481/0.842	& 22.099/0.732&	25.470/0.880	&24.873/0.812	&25.484/0.903&	25.374/0.873&	24.992/0.858&	19.622/0.604	&25.378/0.823	&\textbf{27.351}/\textbf{0.909}\\
				Image (3)& 27.308/0.955&	29.334/0.962	&30.277/0.974	&31.115/0.975	&35.482/0.992	&31.183/0.979&	31.064/0.978&	24.664/0.916&	31.965/0.980	&\textbf{35.636}/\textbf{0.992}\\
				Image (4)&23.610/0.915&	22.298/0.884	&25.296/0.940 & 25.068/0.934&	26.664/0.955	&25.567/0.942&	25.169/0.936&	19.976/0.800&	25.594/0.940&	\textbf{27.113}/\textbf{0.958}\\
				Image (5)&23.243/0.888	&21.235/0.808	&24.292/0.897	&24.003/0.877	&26.013/0.926	&24.399/0.912&	24.132/0.901&	19.822/0.766	&24.426/0.882&	\textbf{26.538}/\textbf{0.939}\\
				Image (6)&26.135/0.956	&25.027/0.932	&27.804/0.967	&27.784/0.964	&30.686/0.982	&28.083/0.968	&27.773/0.966	&22.660/0.911	&28.121/0.966	&\textbf{30.922}/\textbf{0.983}\\
				Image (7)&25.178/0.949	&25.019/0.944	&27.989/0.972	&27.656/0.969&	30.663/0.985&	28.434/0.975	&27.779/0.970&	21.415/0.890	&28.483/0.974&	\textbf{31.307}/\textbf{0.987}\\
				Image (8)&22.125/0.906	&21.271/0.874	&24.617/0.940	&24.041/0.929	&27.471/0.969&	24.338/0.938	&24.158/0.934	&18.013/0.796	&24.529/0.935	&\textbf{27.888}/\textbf{0.971}\\
				\hline
				Aver. &24.848&24.275&	26.931&	26.839&	29.467	&27.200	&26.897&	21.310&	27.420	&\textbf{30.028}\\ \toprule
				Images:  &\multicolumn{10}{c|}{${\rm{MR}}=70\%$}\\
				\hline
				Image (1)& 24.310/0.881 & 23.195/0.816& 25.895/0.904& 25.299/0.883& 29.867/0.955& 25.044/0.893& 25.656/0.897& 23.205/0.831& 25.970/0.895& \textbf{30.283}/\textbf{0.958}\\
				Image (2)& 20.263/0.716 & 17.991/0.526& 21.273/0.726& 20.734/0.663& 23.247/0.834& 21.204/0.731& 21.051/0.710& 18.964/0.554& 21.310/0.678& \textbf{24.360}/\textbf{0.843}  \\
				Image (3)& 23.723/0.912 & 23.650/0.890& 25.867/0.936& 25.242/0.923& 31.078/0.980& 25.250/0.930& 25.776/0.935& 23.600/0.896& 26.172/0.936& \textbf{31.658}/\textbf{0.983}\\
				Image (4)& 20.396/0.832 & 18.175/0.748& 21.218/0.856& 20.871/0.842& 23.389/0.906& 20.928/0.847& 21.215/0.852& 19.266/0.774& 21.557/0.859& \textbf{24.039}/\textbf{0.918}\\
				Image (5)& 20.279/0.808 & 17.438/0.674& 20.175/0.791& 20.415/0.783& 23.006/0.878& 20.390/0.811& 20.647/0.811& 19.088/0.738& 20.849/0.797& \textbf{23.470}/\textbf{0.891}\\
				Image (6)& 22.714/0.916 & 20.740/0.842& 23.902/0.925& 23.460/0.915& 27.301/0.963& 23.899/0.927& 23.556/0.920& 21.868/0.894& 23.791/0.920& \textbf{27.649}/\textbf{0.965}\\
				Image (7)& 21.559/0.898 & 20.502/0.860& 23.562/0.928& 22.920/0.916& 26.940/0.966& 23.031/0.921& 23.265/0.922& 20.766/0.873& 23.746/0.928& \textbf{27.670}/\textbf{0.971}\\
				Image (8)& 18.670/0.826 & 16.757/0.722& 19.629/0.840& 19.164/0.822& 23.603/0.931& 19.117/0.838& 19.620/0.843& 17.267/0.763& 19.930/0.845& \textbf{24.139}/\textbf{0.937 }\\
				\hline
				Aver. &21.489 &	19.806 &22.690 &22.263 &	26.054 &22.358 	&22.598 &	20.503 &22.916 &\textbf{26.659} \\ \toprule
				\hline
				Images  &\multicolumn{10}{c|}{${\rm{MR}}=80\%$}\\
				\hline	
				Image (1)&22.457/0.831  &	20.238/0.714 &	23.218/0.840 &	22.783/0.821 &	27.546/0.930 &	23.267/0.847 &	23.146/0.837 &	22.061/0.791 &	23.208/0.827 &	\textbf{27.972}/\textbf{0.936} \\
				Image (2)&18.554/0.621 &	15.330/0.368 &	17.766/0.524 &	18.540/0.550 &	21.838/0.777 &	18.296/0.586 &	18.882/0.590 &	17.867/0.482 &	18.911/0.550 &	\textbf{22.659}/\textbf{0.788} \\
				Image (3)&21.665/0.872 &	20.183/0.811 &	22.493/0.873 &	22.085/0.867 &	28.581/0.965 &	22.749/0.887 &	22.781/0.886 &	21.951/0.858 &	23.051/0.883 &\textbf{29.140}/\textbf{0.970} \\
				Image (4)&18.554/0.760 &	15.574/0.617 &	17.920/0.727 &	18.535/0.751 &	21.800/0.868 &	19.093/0.778 &	19.066/0.774 &	18.299/0.729 &	19.255/0.778 &	\textbf{22.331}/\textbf{0.884} \\
				Image (5)&18.599/0.750 &	15.455/0.579 &	17.358/0.678 &	18.381/0.716 &	21.366/0.842 &	18.375/0.744 &	18.723/0.743 &	18.010/0.695 &	18.786/0.729 &	\textbf{21.876}/\textbf{0.855} \\
				Image (6)&20.996/0.885 &	18.637/0.772 &	21.429/0.878 &	21.309/0.874 &	25.655/0.948 &	21.725/0.892 &	21.638/0.886 &	20.883/0.869 &	21.756/0.884 &\textbf{26.045}/\textbf{0.951} \\
				Image (7)&19.650/0.857 &	17.415/0.761 &	20.065/0.855 &	20.386/0.863 &	25.245/0.950 &	19.898/0.859 &	20.848/0.874 &	19.858/0.847 &	21.258/0.882 &\textbf{25.898}/\textbf{0.956} \\
				Image (8)&16.947/0.767 &	13.941/0.578 &	16.013/0.703&	17.006/0.758 &	22.177/0.898 &	17.239/0.776 &	17.407/0.770 &	16.160/0.709 &	17.461/0.764 &	\textbf{22.177}/\textbf{0.908} \\
				\hline
				Aver. & 19.678 &	17.097 &	19.533&	19.878 &	24.209 &	20.080 &	20.311 	&19.386 	&20.461	&\textbf{24.762} \\ \toprule	
				\hline
				Images  &\multicolumn{10}{c|}{${\rm{MR}}=90\%$}\\
				\hline
				Image (1)&19.364/0.700 & 16.715/0.544 & 16.827/0.535 &	19.654/0.693 &	24.940/0.885 &	19.271/0.710 &20.053/0.722 &19.605/0.689 &	19.739/0.699 &	\textbf{24.962}/\textbf{0.889} \\
				Image (2)& 15.872/0.434 & 12.203/0.196 & 12.601/0.213 & 15.602/0.372 &	19.797/0.679 &15.731/0.421 & 15.824/0.398 &	15.274/0.320 &	15.123/0.320 &	\textbf{20.176}/\textbf{0.692} \\
				Image (3)&18.574/0.762 &	15.504/0.661 &	16.423/0.564& 19.055/0.787 & 25.568/0.936 &	19.313/0.801 &	19.388/0.798 &	19.000/0.770 &	19.197/0.777 &	\textbf{25.663}/\textbf{0.941} \\
				Image (4)&15.417/0.585 &12.183/0.399 &	13.050/0.407 &	15.679/0.587 &	19.598/0.789 &	15.630/0.598 &	15.891/0.608 &	15.142/0.550 &	15.341/0.571 &	\textbf{19.903}/\textbf{0.807} \\
				Image (5)&16.177/0.647 &12.683/0.430 &	13.895/0.483 &	15.991/0.627& 	19.470/0.782 &	16.351/0.661 &	16.296/0.640 &	15.745/0.593 &	15.443/0.586 & 	\textbf{19.771}/\textbf{0.795} \\
				Image (6)&18.057/0.799 &	14.913/0.607 &	14.911/0.629 &	18.343/0.795 &	23.567/0.924 &	17.979/0.813 &	18.789/0.813 &	18.557/0.800 &	18.320/0.782 &	\textbf{23.772}/\textbf{0.926} \\
				Image (7)& 	16.787/0.767 & 	13.579/0.582 & 14.290/0.613 & 	16.713/0.753 	& 22.417/0.913 & 16.346/0.746 & 17.162/0.764 & 16.946/0.749 & 17.018/0.757 & \textbf{22.799}/\textbf{0.920} \\
				Image (8)& 	13.986/0.626 & 10.782/0.389 & 	12.174/0.469 & 14.651/0.658 & 	19.090/0.838 & 	14.415/0.661 & 	14.472/0.642 & 13.437/0.565 & 13.384/0.571  & \textbf{19.364}/\textbf{0.846 }\\
				\hline
				Aver.&16.779 &	13.570 &	14.271 &	16.961 & 	21.806 	& 16.879 &	17.234 & 	16.713 &	16.696& \textbf{22.051}\\ \toprule
		\end{tabular}}
	\end{table}
     \subcaption*{(m)}
	\end{minipage}
   \setcounter{figure}{5} 
	\caption{The completion results with $\text{MR}=70\%$. (a) is the original image. (b) is the observed image ($\text{MR}=70\%$). (c)-(l) are the completion results of IRLNM-QR, WNNM, MC-NC, TNNR, TNN-SR, LRQMC, LRQA-G, QLNF, TQLNA, and QQR-QNN-SR, respectively. (m) Quantitative quality indexes (PSNR/SSIM) of different methods.}
	\label{Figuretable1}
\end{figure}	
\begin{figure}[htbp]
	\centering
	\includegraphics[width=12cm,height=8cm]{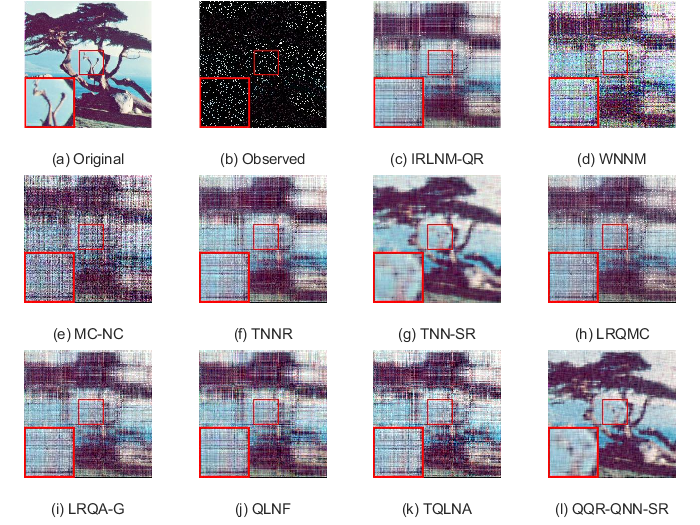}
	\includegraphics[width=12cm,height=8cm]{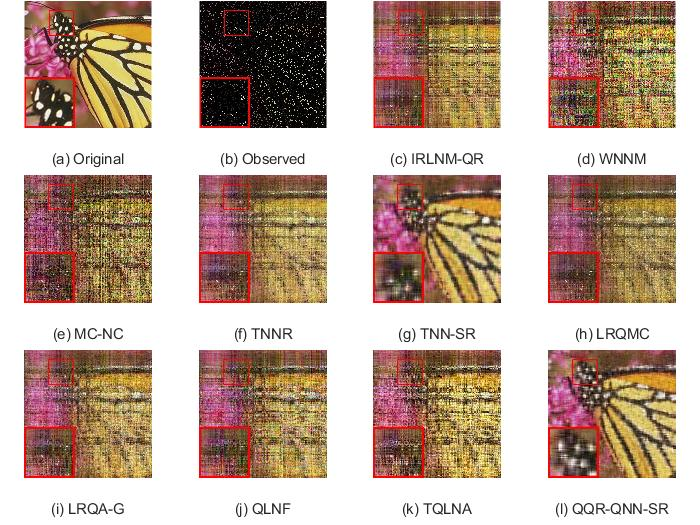}
	\caption{The completion results with $\text{MR}=90\%$ on Image (2) and Image (8). (a) Original image. (b) Observed image ($\text{MR}=90\%$). (c)-(l) are the completion results of IRLNM-QR, WNNM, MC-NC, TNNR, TNN-SR, LRQMC, LRQA-G, QLNF, TQLNA, and QQR-QNN-SR, respectively.}
	\label{fig:0.1IMAGE28}
\end{figure}

 \textbf{Experiments on color medical images:} Medical imaging technology has an irreplaceable role and wide application in clinical diagnosis. In the medical field, quaternion completion facilitates medical diagnosis based on medical images such as roentgen-ray computed tomography (CT), magnetic resonance imaging (MRI), single photon/positron emission computed tomography (SPECT/PET), and their overlay. For example, the medical color image recovery of the human brain based on quaternion completion methods can help doctors to better diagnose various brain diseases, such as Alzheimer's disease (AD). Specifically, because AD often causes atrophy of the region of the hippocampus in patients, MRIs of their hippocampus may become blurred. In order to understand the degree of atrophy of the patient's hippocampus, the patient's healthy hippocampus data needs to be estimated and compared with the atrophic hippocampus data. Therefore, the MRI data of the hippocampus region completed by the quaternion completion method can better help doctors diagnose AD. Furthermore, recovering the data of specific regions of interest in medical images through the quaternion completion method can better help doctors diagnose diseases. In addition, due to equipment or human factors (such as patient body movements during medical examinations), the acquired medical image will be incomplete, which in turn affects the doctor's diagnosis of the disease. Hence, quaternion completion can also be used to solve such problems.\\
 \indent
 In the experiments, the test images are eight color $256\times256$ medical images from the publicly available medical image database “The Whole Brain Atlas”$\footnote{http://www.med.harvard.edu/AANLIB/home.html}$ provided by Harvard Medical School. The eight color images are shown in Fig. \ref{fig:randommissing}. To more comprehensively verify the effectiveness of our method, we not only implement experiments with random missing with MR=$90\%$ but also random block missing based on random block masks. For the random block missing, 2 random rhombus blocks of each image are masked, as shown in Fig. \ref{fig:blockmissing}. And the lengths of the two diagonals of each rhombus are around 44 and 32. \\
 \indent
 In the experiment with MR=$90\%$, the parameters of QQR-QNN-SR are set as $\lambda=10^{-1},\ \mu^{0}=5\times 10^{-2}, \gamma =1.15$, and $r=45$. In the experiment with 2 random rhombus blocks missing, we set the parameters of QQR-QNN-SR as $\lambda=5\times10^{-1},\ \mu^{0}=5\times 10^{-2}, \gamma =1.6$, and $r=190$. We compare the PSNR and SSIM values of the color medical images recovered by our method and IRLNM-QR, WNNM, MC-NC, TNNR, TNN-SR, LRQMC, LRQA-G, QLNF, and TQLNA, and the numerical results of different methods are shown in Fig. \ref{fig:randommissing}. Color medical images recovered by different methods are shown in Fig. \ref{fig:randommissing} and Fig. \ref{fig:blockmissing}. When MR=$90\%$, improving the quality of recovered images is not a simple task. And as can be seen from the Fig. \ref{fig:randommissing}, our method achieves the highest PSNR and SSIM values in almost all recovered images compared to other methods. Additionally, Fig. \ref{fig:randommissing} shows the visual advantage of our approach. Compared with recovering the images with random missing, recovering the images with the loss of random blocks is a more challenging problem to handle because the corruptions of the pixels are not evenly distributed and some important texture information contained in random blocks is lost as a whole. In such experiments, as can be seen from Fig. \ref{fig:blockmissing}, compared with other methods, our method achieves competitive results both numerically and visually.

\begin{figure}[htbp]
\begin{minipage}[h]{0.06\textwidth}
	\centering
	\begin{subfigure}{1\textwidth}
		\centering
		\includegraphics[width=1.3cm,height=1.3cm]{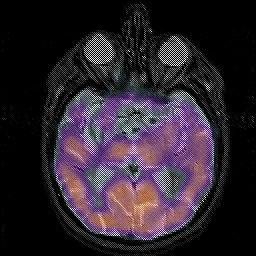}
	\end{subfigure} 
	\hfill\\
	\begin{subfigure}{1\textwidth}
		\centering
		\includegraphics[width=1.3cm,height=1.3cm]{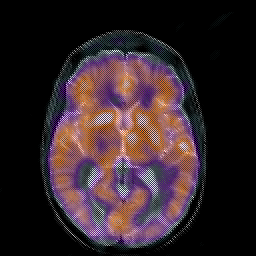}
	\end{subfigure} 
	\hfill\\
	\begin{subfigure}{1\textwidth}
		\centering
		\includegraphics[width=1.3cm,height=1.3cm]{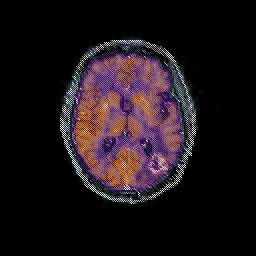}
	\end{subfigure} 
	\hfill\\
	\begin{subfigure}{1\textwidth}
		\centering
		\includegraphics[width=1.3cm,height=1.3cm]{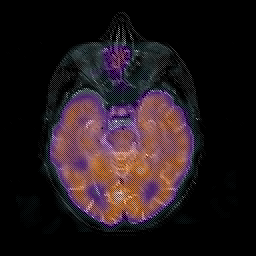}
	\end{subfigure} 
	\hfill\\
	\begin{subfigure}{1\textwidth}
		\centering
		\includegraphics[width=1.3cm,height=1.3cm]{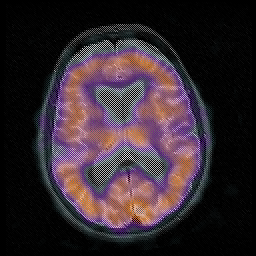}
	\end{subfigure} 
	\hfill\\
	\begin{subfigure}{1\textwidth}
		\centering
		\includegraphics[width=1.3cm,height=1.3cm]{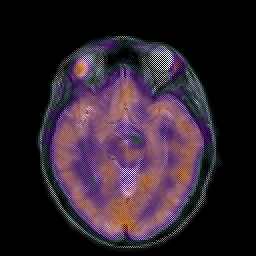}
	\end{subfigure} 
	\hfill\\
	\begin{subfigure}{1\textwidth}
		\centering
		\includegraphics[width=1.3cm,height=1.3cm]{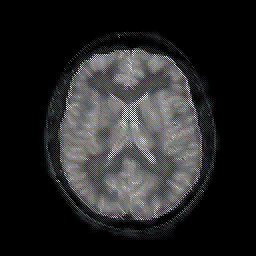}
	\end{subfigure} 
	\hfill\\
	\begin{subfigure}{1\textwidth}
		\centering
		\includegraphics[width=1.3cm,height=1.3cm]{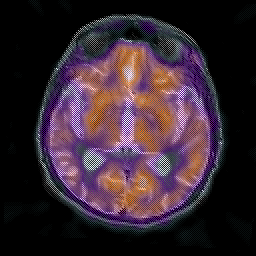}
	\end{subfigure} 
	\hfill\\
	\subcaption*{(a)}
\end{minipage}
\hfill
\begin{minipage}[h]{0.06\textwidth}
	\centering
	\begin{subfigure}{1\textwidth}
		\centering
		\includegraphics[width=1.3cm,height=1.3cm]{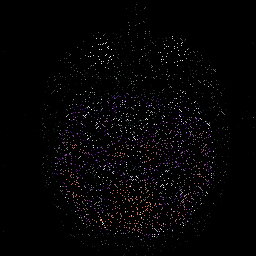}
	\end{subfigure} 
	\hfill\\
	\begin{subfigure}{1\textwidth}
		\centering
		\includegraphics[width=1.3cm,height=1.3cm]{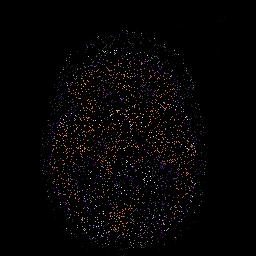}
	\end{subfigure} 
	\hfill\\
	\begin{subfigure}{1\textwidth}
		\centering
		\includegraphics[width=1.3cm,height=1.3cm]{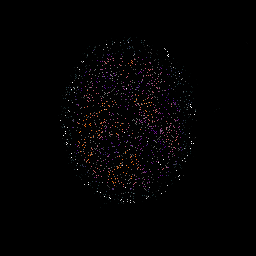}
	\end{subfigure} 
	\hfill\\
	\begin{subfigure}{1\textwidth}
		\centering
		\includegraphics[width=1.3cm,height=1.3cm]{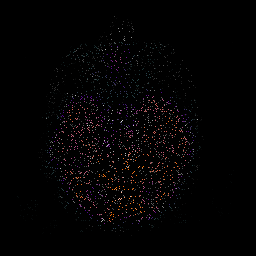}
	\end{subfigure} 
	\hfill\\
	\begin{subfigure}{1\textwidth}
		\centering
		\includegraphics[width=1.3cm,height=1.3cm]{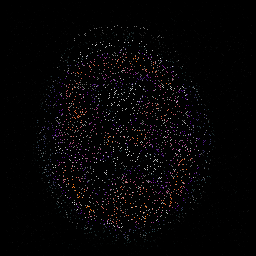}
	\end{subfigure} 
	\hfill\\
	\begin{subfigure}{1\textwidth}
		\centering
		\includegraphics[width=1.3cm,height=1.3cm]{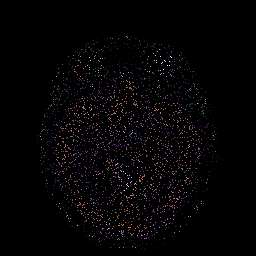}
	\end{subfigure} 
	\hfill\\
	\begin{subfigure}{1\textwidth}
		\centering
		\includegraphics[width=1.3cm,height=1.3cm]{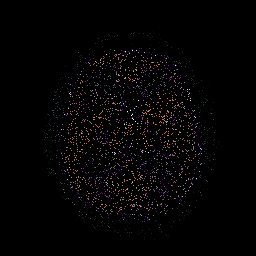}
	\end{subfigure} 
	\hfill\\
	\begin{subfigure}{1\textwidth}
		\centering
		\includegraphics[width=1.3cm,height=1.3cm]{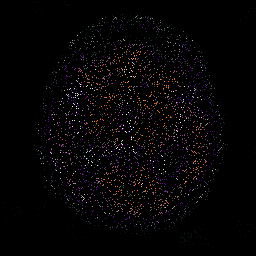}
	\end{subfigure} 
	\hfill\\
	\subcaption*{(b)}
\end{minipage}
\hfill
\begin{minipage}[h]{0.06\textwidth}
	\centering
	\begin{subfigure}{1\textwidth}
		\centering
		\includegraphics[width=1.3cm,height=1.3cm]{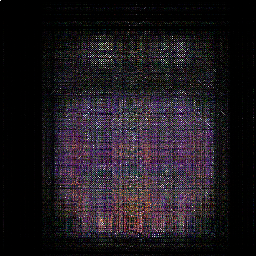}
	\end{subfigure} 
	\hfill\\
	\begin{subfigure}{1\textwidth}
		\centering
		\includegraphics[width=1.3cm,height=1.3cm]{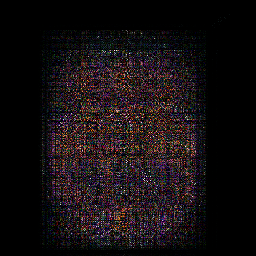}
	\end{subfigure} 
	\hfill\\
	\begin{subfigure}{1\textwidth}
		\centering
		\includegraphics[width=1.3cm,height=1.3cm]{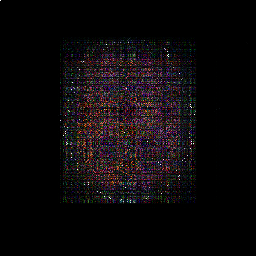}
	\end{subfigure} 
	\hfill\\
	\begin{subfigure}{1\textwidth}
		\centering
		\includegraphics[width=1.3cm,height=1.3cm]{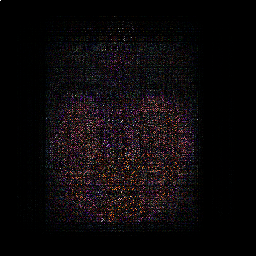}
	\end{subfigure} 
	\hfill\\
	\begin{subfigure}{1\textwidth}
		\centering
		\includegraphics[width=1.3cm,height=1.3cm]{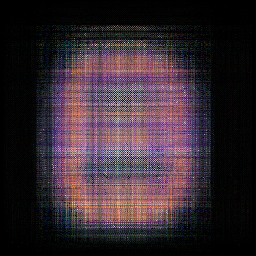}
	\end{subfigure} 
	\hfill\\
	\begin{subfigure}{1\textwidth}
		\centering
		\includegraphics[width=1.3cm,height=1.3cm]{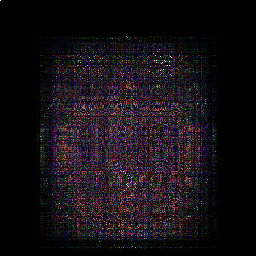}
	\end{subfigure} 
	\hfill\\
	\begin{subfigure}{1\textwidth}
		\centering
		\includegraphics[width=1.3cm,height=1.3cm]{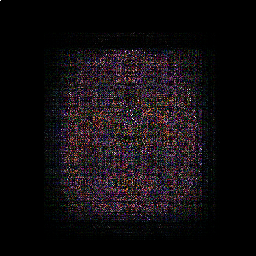}
	\end{subfigure} 
	\hfill\\
	\begin{subfigure}{1\textwidth}
		\centering
		\includegraphics[width=1.3cm,height=1.3cm]{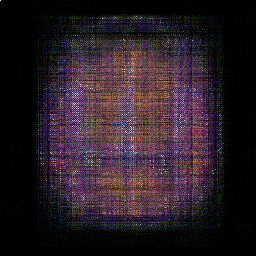}
	\end{subfigure} 
	\hfill\\
	\subcaption*{(c)}
\end{minipage}
\hfill
\begin{minipage}[h]{0.06\textwidth}
	\centering
	\begin{subfigure}{1\textwidth}
		\centering
		\includegraphics[width=1.3cm,height=1.3cm]{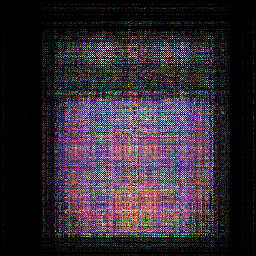}
	\end{subfigure} 
	\hfill\\
	\begin{subfigure}{1\textwidth}
		\centering
		\includegraphics[width=1.3cm,height=1.3cm]{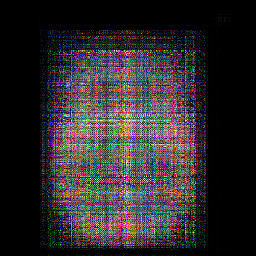}
	\end{subfigure} 
	\hfill\\
	\begin{subfigure}{1\textwidth}
		\centering
		\includegraphics[width=1.3cm,height=1.3cm]{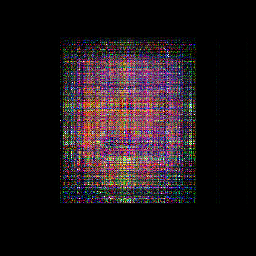}
	\end{subfigure} 
	\hfill\\
	\begin{subfigure}{1\textwidth}
		\centering
		\includegraphics[width=1.3cm,height=1.3cm]{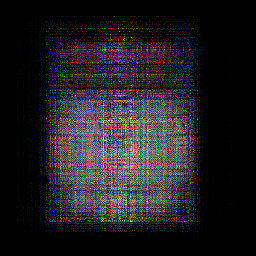}
	\end{subfigure} 
	\hfill\\
	\begin{subfigure}{1\textwidth}
		\centering
		\includegraphics[width=1.3cm,height=1.3cm]{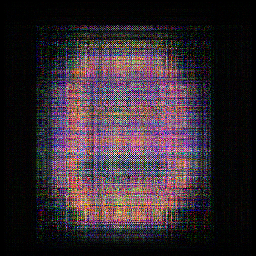}
	\end{subfigure} 
	\hfill\\
	\begin{subfigure}{1\textwidth}
		\centering
		\includegraphics[width=1.3cm,height=1.3cm]{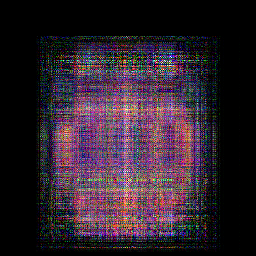}
	\end{subfigure} 
	\hfill\\
	\begin{subfigure}{1\textwidth}
		\centering
		\includegraphics[width=1.3cm,height=1.3cm]{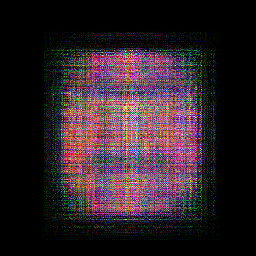}
	\end{subfigure} 
	\hfill\\
	\begin{subfigure}{1\textwidth}
		\centering
		\includegraphics[width=1.3cm,height=1.3cm]{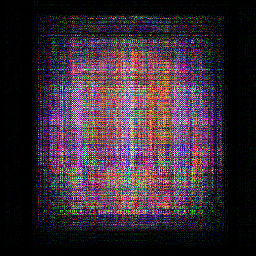}
	\end{subfigure} 
	\hfill\\
	\subcaption*{(d)}
\end{minipage}
\hfill
\begin{minipage}[h]{0.06\textwidth}
	\centering
	\begin{subfigure}{1\textwidth}
		\centering
		\includegraphics[width=1.3cm,height=1.3cm]{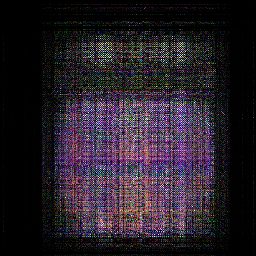}
	\end{subfigure} 
	\hfill\\
	\begin{subfigure}{1\textwidth}
		\centering
		\includegraphics[width=1.3cm,height=1.3cm]{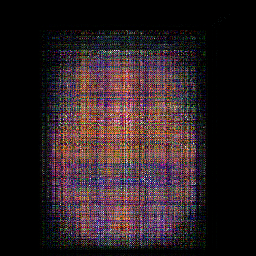}
	\end{subfigure} 
	\hfill\\
	\begin{subfigure}{1\textwidth}
		\centering
		\includegraphics[width=1.3cm,height=1.3cm]{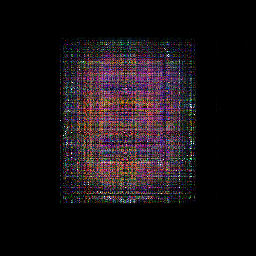}
	\end{subfigure} 
	\hfill\\
	\begin{subfigure}{1\textwidth}
		\centering
		\includegraphics[width=1.3cm,height=1.3cm]{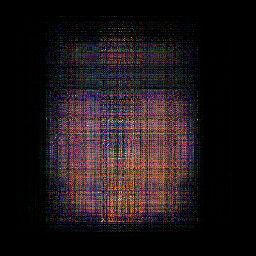}
	\end{subfigure} 
	\hfill\\
	\begin{subfigure}{1\textwidth}
		\centering
		\includegraphics[width=1.3cm,height=1.3cm]{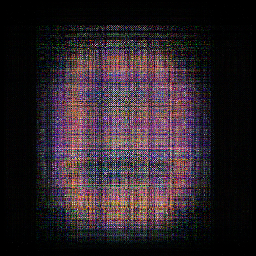}
	\end{subfigure} 
	\hfill\\
	\begin{subfigure}{1\textwidth}
		\centering
		\includegraphics[width=1.3cm,height=1.3cm]{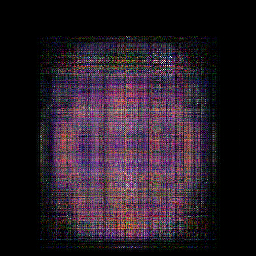}
	\end{subfigure} 
	\hfill\\
	\begin{subfigure}{1\textwidth}
		\centering
		\includegraphics[width=1.3cm,height=1.3cm]{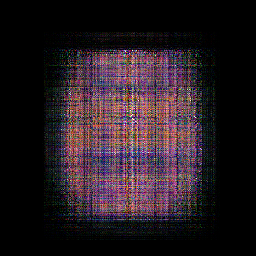}
	\end{subfigure} 
	\hfill\\
	\begin{subfigure}{1\textwidth}
		\centering
		\includegraphics[width=1.3cm,height=1.3cm]{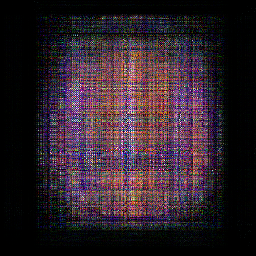}
	\end{subfigure} 
	\hfill\\
	\subcaption*{(e)}
\end{minipage}
\hfill
\begin{minipage}[h]{0.06\textwidth}
	\centering
	\begin{subfigure}{1\textwidth}
		\centering
		\includegraphics[width=1.3cm,height=1.3cm]{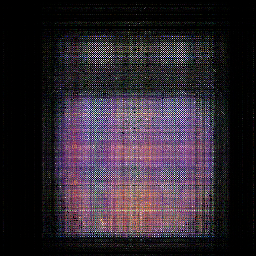}
	\end{subfigure} 
	\hfill\\
	\begin{subfigure}{1\textwidth}
		\centering
		\includegraphics[width=1.3cm,height=1.3cm]{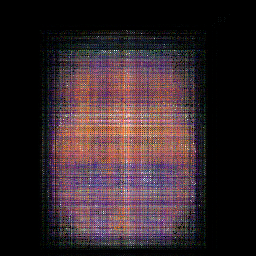}
	\end{subfigure} 
	\hfill\\
	\begin{subfigure}{1\textwidth}
		\centering
		\includegraphics[width=1.3cm,height=1.3cm]{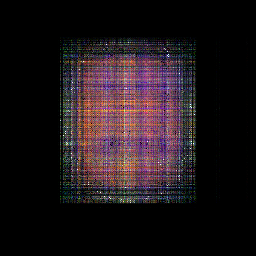}
	\end{subfigure} 
	\hfill\\
	\begin{subfigure}{1\textwidth}
		\centering
		\includegraphics[width=1.3cm,height=1.3cm]{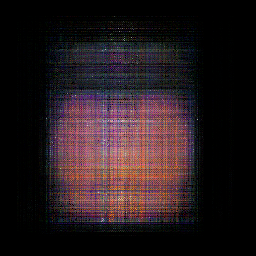}
	\end{subfigure} 
	\hfill\\
	\begin{subfigure}{1\textwidth}
		\centering
		\includegraphics[width=1.3cm,height=1.3cm]{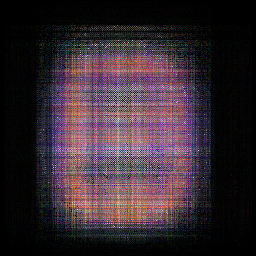}
	\end{subfigure} 
	\hfill\\
	\begin{subfigure}{1\textwidth}
		\centering
		\includegraphics[width=1.3cm,height=1.3cm]{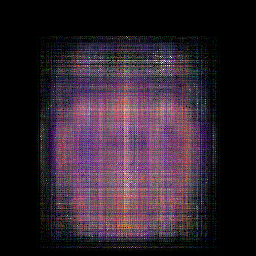}
	\end{subfigure} 
	\hfill\\
	\begin{subfigure}{1\textwidth}
		\centering
		\includegraphics[width=1.3cm,height=1.3cm]{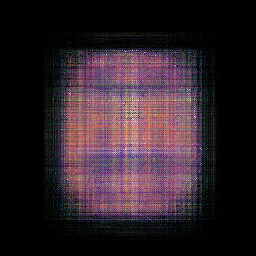}
	\end{subfigure} 
	\hfill\\
	\begin{subfigure}{1\textwidth}
		\centering
		\includegraphics[width=1.3cm,height=1.3cm]{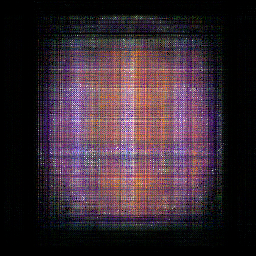}
	\end{subfigure} 
	\hfill\\
	\subcaption*{(f)}
\end{minipage}
\hfill
\begin{minipage}[h]{0.06\textwidth}
	\centering
	\begin{subfigure}{1\textwidth}
		\centering
		\includegraphics[width=1.3cm,height=1.3cm]{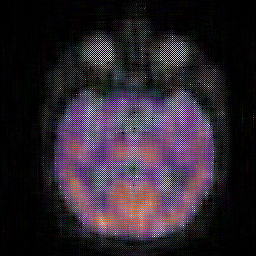}
	\end{subfigure} 
	\hfill\\
	\begin{subfigure}{1\textwidth}
		\centering
		\includegraphics[width=1.3cm,height=1.3cm]{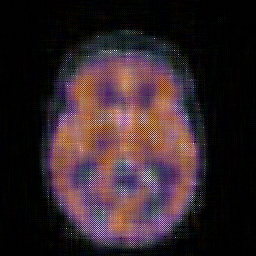}
	\end{subfigure} 
	\hfill\\
	\begin{subfigure}{1\textwidth}
		\centering
		\includegraphics[width=1.3cm,height=1.3cm]{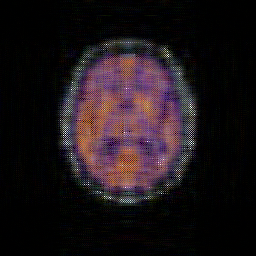}
	\end{subfigure} 
	\hfill\\
	\begin{subfigure}{1\textwidth}
		\centering
		\includegraphics[width=1.3cm,height=1.3cm]{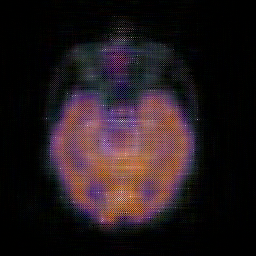}
	\end{subfigure} 
	\hfill\\
	\begin{subfigure}{1\textwidth}
		\centering
		\includegraphics[width=1.3cm,height=1.3cm]{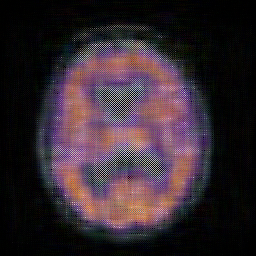}
	\end{subfigure} 
	\hfill\\
	\begin{subfigure}{1\textwidth}
		\centering
		\includegraphics[width=1.3cm,height=1.3cm]{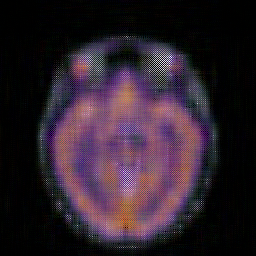}
	\end{subfigure} 
	\hfill\\
	\begin{subfigure}{1\textwidth}
		\centering
		\includegraphics[width=1.3cm,height=1.3cm]{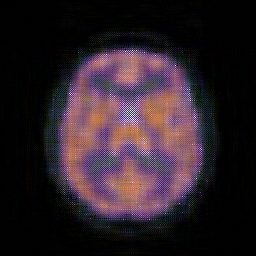}
	\end{subfigure} 
	\hfill\\
	\begin{subfigure}{1\textwidth}
		\centering
		\includegraphics[width=1.3cm,height=1.3cm]{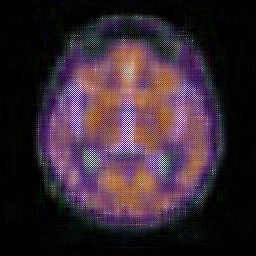}
	\end{subfigure} 
	\hfill\\
	\subcaption*{(g)}
\end{minipage}
\hfill
\begin{minipage}[h]{0.06\textwidth}
	\centering
	\begin{subfigure}{1\textwidth}
		\centering
		\includegraphics[width=1.3cm,height=1.3cm]{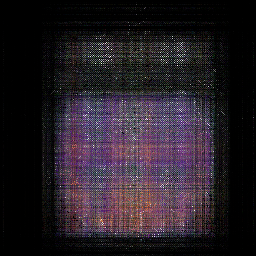}
	\end{subfigure} 
	\hfill\\
	\begin{subfigure}{1\textwidth}
		\centering
		\includegraphics[width=1.3cm,height=1.3cm]{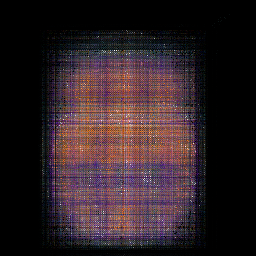}
	\end{subfigure} 
	\hfill\\
	\begin{subfigure}{1\textwidth}
		\centering
		\includegraphics[width=1.3cm,height=1.3cm]{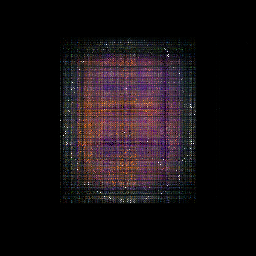}
	\end{subfigure} 
	\hfill\\
	\begin{subfigure}{1\textwidth}
		\centering
		\includegraphics[width=1.3cm,height=1.3cm]{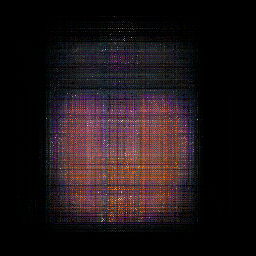}
	\end{subfigure} 
	\hfill\\
	\begin{subfigure}{1\textwidth}
		\centering
		\includegraphics[width=1.3cm,height=1.3cm]{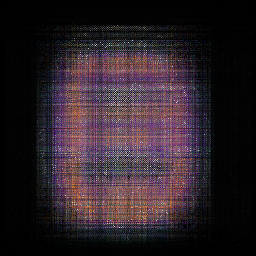}
	\end{subfigure} 
	\hfill\\
	\begin{subfigure}{1\textwidth}
		\centering
		\includegraphics[width=1.3cm,height=1.3cm]{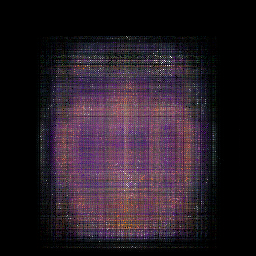}
	\end{subfigure} 
	\hfill\\
	\begin{subfigure}{1\textwidth}
		\centering
		\includegraphics[width=1.3cm,height=1.3cm]{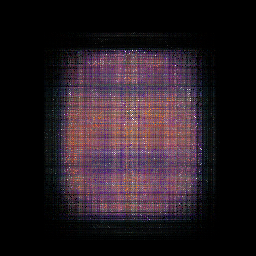}
	\end{subfigure} 
	\hfill\\
	\begin{subfigure}{1\textwidth}
		\centering
		\includegraphics[width=1.3cm,height=1.3cm]{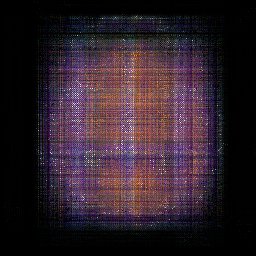}
	\end{subfigure} 
	\hfill\\
	\subcaption*{(h)}
\end{minipage}
\hfill
\begin{minipage}[h]{0.06\textwidth}
	\centering
	\begin{subfigure}{1\textwidth}
		\centering
		\includegraphics[width=1.3cm,height=1.3cm]{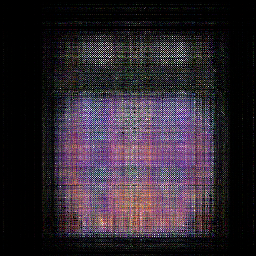}
	\end{subfigure} 
	\hfill\\
	\begin{subfigure}{1\textwidth}
		\centering
		\includegraphics[width=1.3cm,height=1.3cm]{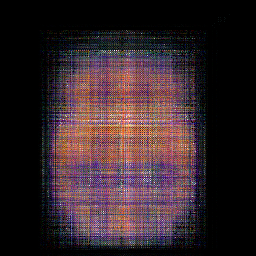}
	\end{subfigure} 
	\hfill\\
	\begin{subfigure}{1\textwidth}
		\centering
		\includegraphics[width=1.3cm,height=1.3cm]{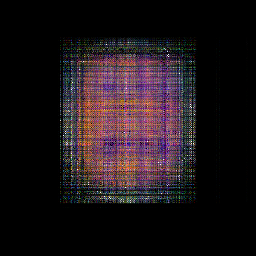}
	\end{subfigure} 
	\hfill\\
	\begin{subfigure}{1\textwidth}
		\centering
		\includegraphics[width=1.3cm,height=1.3cm]{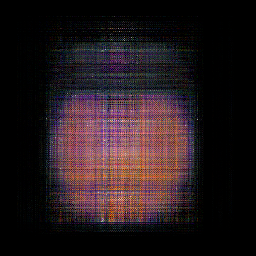}
	\end{subfigure} 
	\hfill\\
	\begin{subfigure}{1\textwidth}
		\centering
		\includegraphics[width=1.3cm,height=1.3cm]{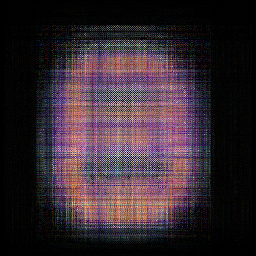}
	\end{subfigure} 
	\hfill\\
	\begin{subfigure}{1\textwidth}
		\centering
		\includegraphics[width=1.3cm,height=1.3cm]{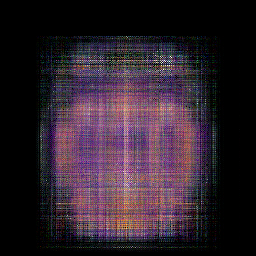}
	\end{subfigure} 
	\hfill\\
	\begin{subfigure}{1\textwidth}
		\centering
		\includegraphics[width=1.3cm,height=1.3cm]{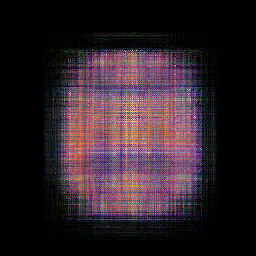}
	\end{subfigure} 
	\hfill\\
	\begin{subfigure}{1\textwidth}
		\centering
		\includegraphics[width=1.3cm,height=1.3cm]{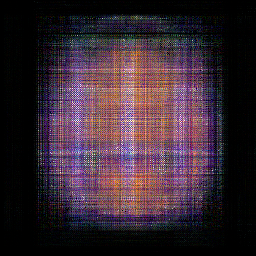}
	\end{subfigure} 
	\hfill\\
	\subcaption*{(i)}
\end{minipage}
\hfill
\begin{minipage}[h]{0.06\textwidth}
	\centering
	\begin{subfigure}{1\textwidth}
		\centering
		\includegraphics[width=1.3cm,height=1.3cm]{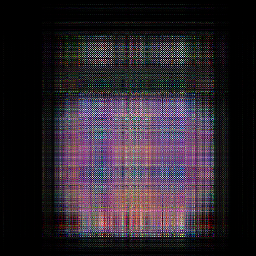}
	\end{subfigure} 
	\hfill\\
	\begin{subfigure}{1\textwidth}
		\centering
		\includegraphics[width=1.3cm,height=1.3cm]{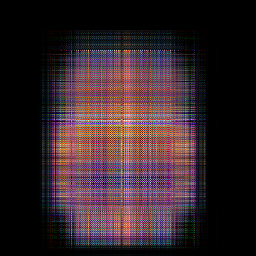}
	\end{subfigure} 
	\hfill\\
	\begin{subfigure}{1\textwidth}
		\centering
		\includegraphics[width=1.3cm,height=1.3cm]{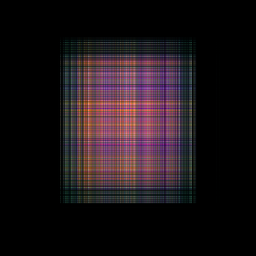}
	\end{subfigure} 
	\hfill\\
	\begin{subfigure}{1\textwidth}
		\centering
		\includegraphics[width=1.3cm,height=1.3cm]{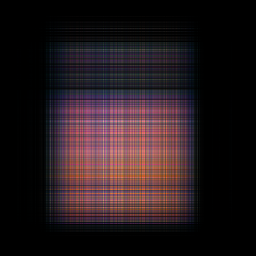}
	\end{subfigure} 
	\hfill\\
	\begin{subfigure}{1\textwidth}
		\centering
		\includegraphics[width=1.3cm,height=1.3cm]{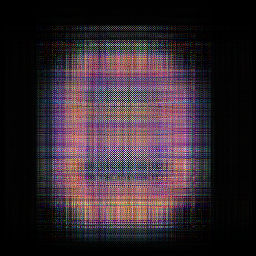}
	\end{subfigure} 
	\hfill\\
	\begin{subfigure}{1\textwidth}
		\centering
		\includegraphics[width=1.3cm,height=1.3cm]{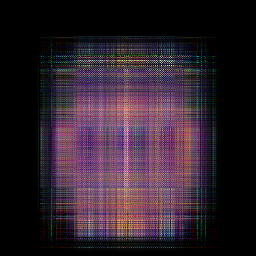}
	\end{subfigure} 
	\hfill\\
	\begin{subfigure}{1\textwidth}
		\centering
		\includegraphics[width=1.3cm,height=1.3cm]{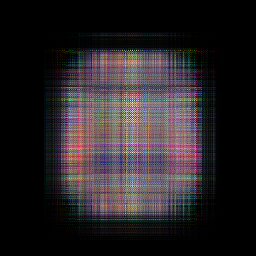}
	\end{subfigure} 
	\hfill\\
	\begin{subfigure}{1\textwidth}
		\centering
		\includegraphics[width=1.3cm,height=1.3cm]{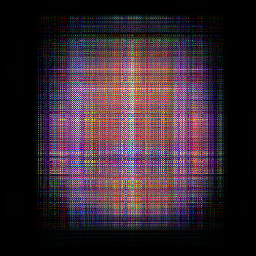}
	\end{subfigure} 
	\hfill\\
	\subcaption*{(j)}
\end{minipage}
\hfill
\begin{minipage}[h]{0.06\textwidth}
	\centering
	\begin{subfigure}{1\textwidth}
		\centering
		\includegraphics[width=1.3cm,height=1.3cm]{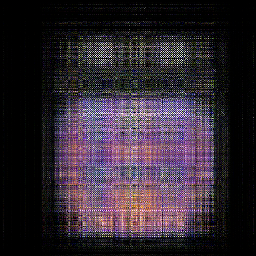}
	\end{subfigure} 
	\hfill\\
	\begin{subfigure}{1\textwidth}
		\centering
		\includegraphics[width=1.3cm,height=1.3cm]{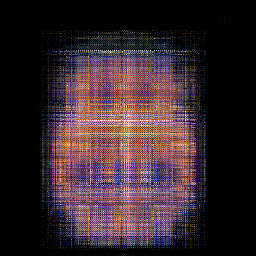}
	\end{subfigure} 
	\hfill\\
	\begin{subfigure}{1\textwidth}
		\centering
		\includegraphics[width=1.3cm,height=1.3cm]{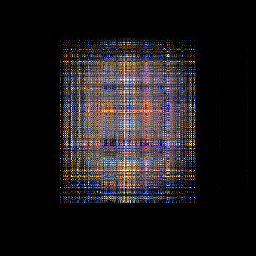}
	\end{subfigure} 
	\hfill\\
	\begin{subfigure}{1\textwidth}
		\centering
		\includegraphics[width=1.3cm,height=1.3cm]{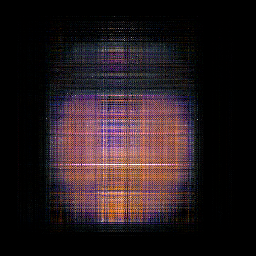}
	\end{subfigure} 
	\hfill\\
	\begin{subfigure}{1\textwidth}
		\centering
		\includegraphics[width=1.3cm,height=1.3cm]{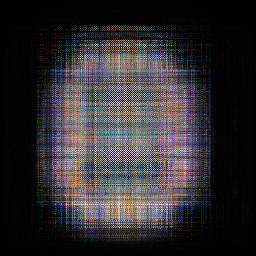}
	\end{subfigure} 
	\hfill\\
	\begin{subfigure}{1\textwidth}
		\centering
		\includegraphics[width=1.3cm,height=1.3cm]{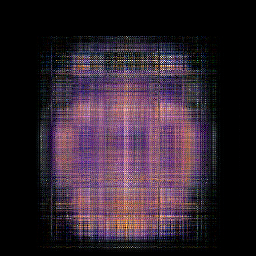}
	\end{subfigure} 
	\hfill\\
	\begin{subfigure}{1\textwidth}
		\centering
		\includegraphics[width=1.3cm,height=1.3cm]{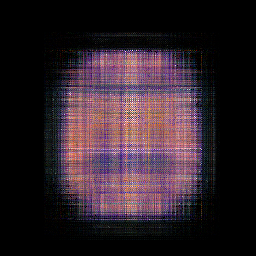}
	\end{subfigure} 
	\hfill\\
	\begin{subfigure}{1\textwidth}
		\centering
		\includegraphics[width=1.3cm,height=1.3cm]{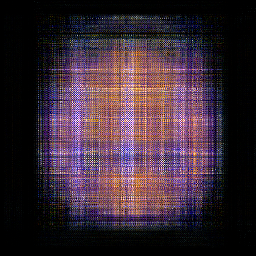}
	\end{subfigure} 
	\hfill\\
	\subcaption*{(k)}
\end{minipage}
\hfill
\begin{minipage}[h]{0.06\textwidth}
	\centering
	\begin{subfigure}{1\textwidth}
		\centering
		\includegraphics[width=1.3cm,height=1.3cm]{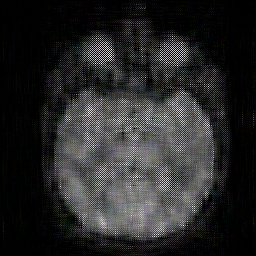}
	\end{subfigure} 
	\hfill\\
	\begin{subfigure}{1\textwidth}
		\centering
		\includegraphics[width=1.3cm,height=1.3cm]{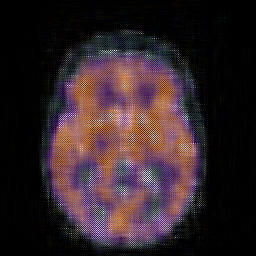}
	\end{subfigure} 
	\hfill\\
	\begin{subfigure}{1\textwidth}
		\centering
		\includegraphics[width=1.3cm,height=1.3cm]{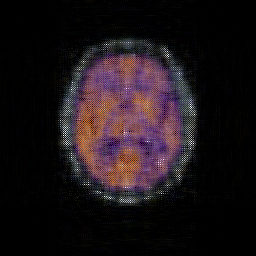}
	\end{subfigure} 
	\hfill\\
	\begin{subfigure}{1\textwidth}
		\centering
		\includegraphics[width=1.3cm,height=1.3cm]{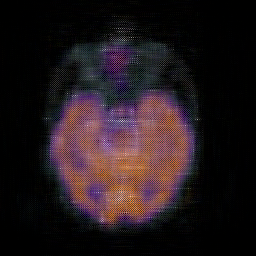}
	\end{subfigure} 
	\hfill\\
	\begin{subfigure}{1\textwidth}
		\centering
		\includegraphics[width=1.3cm,height=1.3cm]{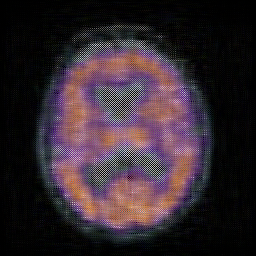}
	\end{subfigure} 
	\hfill\\
	\begin{subfigure}{1\textwidth}
		\centering
		\includegraphics[width=1.3cm,height=1.3cm]{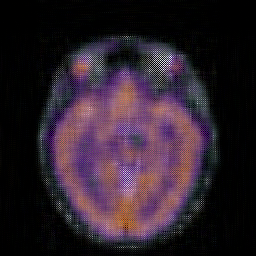}
	\end{subfigure} 
	\hfill\\
	\begin{subfigure}{1\textwidth}
		\centering
		\includegraphics[width=1.3cm,height=1.3cm]{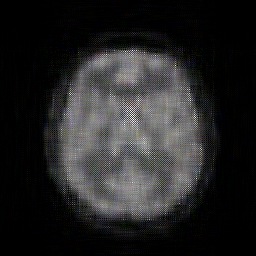}
	\end{subfigure} 
	\hfill\\
	\begin{subfigure}{1\textwidth}
		\centering
		\includegraphics[width=1.3cm,height=1.3cm]{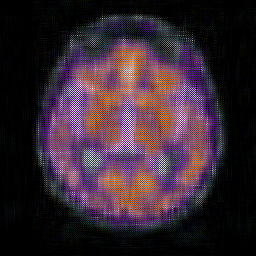}
	\end{subfigure} 
	\hfill\\
	\subcaption*{(l)}
\end{minipage}
	\hfill\\
		\begin{minipage}{1\linewidth}
		   \begin{table}[H]
				\centering
				\resizebox{15.5cm}{2cm}{
					\begin{tabular}{|c|c|c|c|c|c|c|c|c|c|c|}		
						\hline
						Methods:& IRLNM-QR  & WNNM  & MC-NC  & TNNR & TNN-SR &LRQMC & LRQA-G  & QLNF &TQLNA& QQR-QNN-SR  
						\\ \toprule
						\hline
						Images:  &\multicolumn{10}{c|}{${\rm{MR}}=90\%$}\\
						\hline
						Image (9)& 15.090/0.473	& 13.634/0.380	&15.348/0.445	&17.071/0.277	&21.259/0.673	&17.067/0.552	&17.403/0.540	&17.053/0.260	&16.623/0.261 & \textbf{21.375}/\textbf{0.706}\\
						Image (10)&12.624/0.501	&14.347/0.528	&15.719/0.562	&17.541/0.499	&22.202/0.692	&17.861/0.632	&17.755/0.629	&17.523/0.500	&17.208/0.512	&\textbf{22.366}/\textbf{0.733}\\
						Image (11)&15.254/0.685	&14.811/0.684	&16.272/0.694	&17.724/0.616	&20.973/0.667	&17.737/0.731	&17.793/\textbf{0.733}	&17.511/0.588	&16.659/0.609	&\textbf{21.014}/0.724\\
						Image (12)&14.336/0.547	&17.055/0.563	&18.023/0.594	&20.070/0.382	&24.414/0.752	&19.730/0.669	&20.393/0.666	&18.880/0.308	&20.052/0.406	&\textbf{24.573}/\textbf{0.785}\\
						Image (13)&18.235/0.564 &14.861/0.387	&15.918/0.489	&17.947/0.298	&22.672/0.741	&17.534/0.572	&18.334/0.580	& 17.977/0.303	&18.273/0.354	&\textbf{22.989}/\textbf{0.761}\\
						Image (14)&13.908/0.491	&15.239/0.506	&16.772/0.542	&18.458/0.488	&23.742/0.731	&18.386/0.612	&18.642/0.616	&18.214/0.487	&18.557/0.518	&\textbf{24.035}/\textbf{0.771}\\
						Image (15)&12.978/0.545	&15.982/0.557	&17.022/0.585	&19.168/0.549	&23.627/0.743	&18.764/0.659	&19.346	/0.654	&19.179/0.546	&19.539/0.572	&\textbf{23.715}/\textbf{0.780}\\
						Image (16)&15.945/0.507	&13.588/0.358	&15.262/0.454	&16.980/0.218	&21.500/0.662	&16.961/0.547	&17.179/0.527	&16.601/0.230	&16.756/0.247	&\textbf{21.767}/\textbf{0.703}\\
						\hline
						Aver. &14.796		&14.940		&16.292		&18.120		&22.549		&18.005		&18.356		&17.867		&17.958		&\textbf{22.729}\\ \toprule       				
				\end{tabular}}
				\label{Table4}
			\end{table}
		\subcaption*{(m)}
	    \end{minipage}
    \setcounter{figure}{7} 
	\caption{\label{fig:randommissing} (a) Original color medical images. (b) The observed color medical images (MR=$90\%$). (c)-(l) are the completion results of IRLNM-QR, WNNM, MC-NC, TNNR, TNN-SR, LRQMC, LRQA-G, QLNF, TQLNA, and QQR-QNN-SR, respectively. (m) Quantitative quality indexes (PSNR/SSIM) of different methods on color medical images (MR=$90\%$).}
\end{figure}

\begin{figure}[htbp]	
	\begin{minipage}[h]{0.06\textwidth}
		\centering
		\begin{subfigure}{1\textwidth}
			\centering
			\includegraphics[width=1.3cm,height=1.3cm]{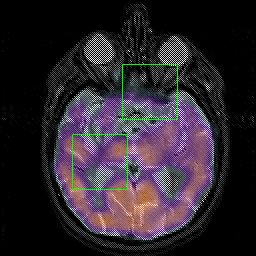}
		\end{subfigure} 
		\hfill\\
		\begin{subfigure}{1\textwidth}
			\centering
			\includegraphics[width=1.3cm,height=1.3cm]{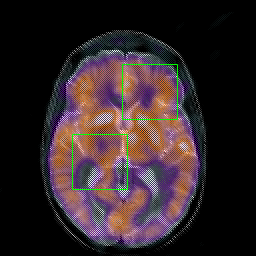}
		\end{subfigure} 
		\hfill\\
		\begin{subfigure}{1\textwidth}
			\centering
			\includegraphics[width=1.3cm,height=1.3cm]{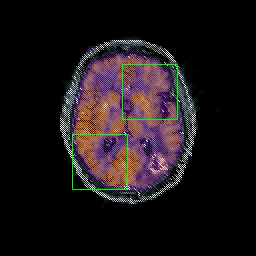}
		\end{subfigure} 
		\hfill\\
		\begin{subfigure}{1\textwidth}
			\centering
			\includegraphics[width=1.3cm,height=1.3cm]{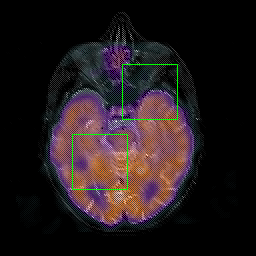}
		\end{subfigure} 
		\hfill\\
		\begin{subfigure}{1\textwidth}
			\centering
			\includegraphics[width=1.3cm,height=1.3cm]{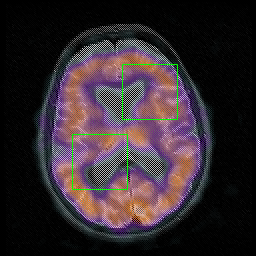}
		\end{subfigure} 
		\hfill\\
		\begin{subfigure}{1\textwidth}
			\centering
			\includegraphics[width=1.3cm,height=1.3cm]{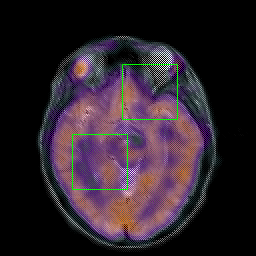}
		\end{subfigure} 
		\hfill\\
		\begin{subfigure}{1\textwidth}
			\centering
			\includegraphics[width=1.3cm,height=1.3cm]{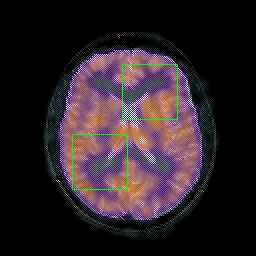}
		\end{subfigure} 
		\hfill\\
		\begin{subfigure}{1\textwidth}
			\centering
			\includegraphics[width=1.3cm,height=1.3cm]{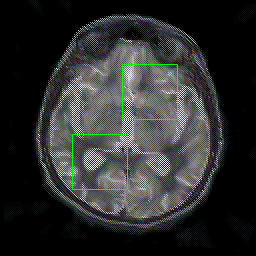}
		\end{subfigure} 
		\hfill\\
		\subcaption*{(a)}
	\end{minipage}
	\hfill
	\begin{minipage}[h]{0.06\textwidth}
		\centering
		\begin{subfigure}{1\textwidth}
			\centering
			\includegraphics[width=1.3cm,height=1.3cm]{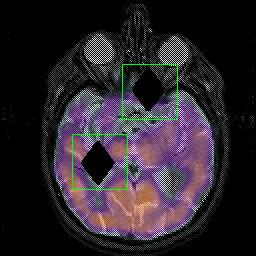}
		\end{subfigure} 
		\hfill\\
		\begin{subfigure}{1\textwidth}
			\centering
			\includegraphics[width=1.3cm,height=1.3cm]{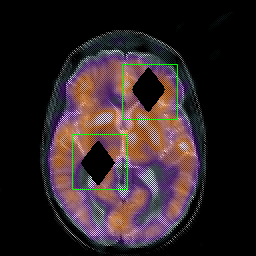}
		\end{subfigure} 
		\hfill\\
		\begin{subfigure}{1\textwidth}
			\centering
			\includegraphics[width=1.3cm,height=1.3cm]{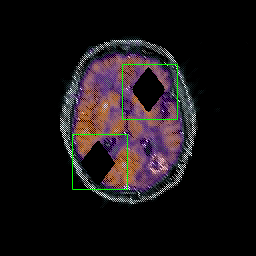}
		\end{subfigure} 
		\hfill\\
		\begin{subfigure}{1\textwidth}
			\centering
			\includegraphics[width=1.3cm,height=1.3cm]{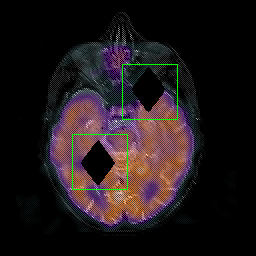}
		\end{subfigure} 
		\hfill\\
		\begin{subfigure}{1\textwidth}
			\centering
			\includegraphics[width=1.3cm,height=1.3cm]{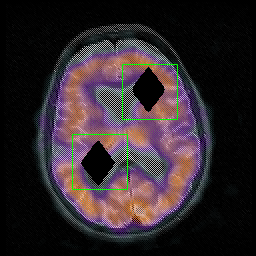}
		\end{subfigure} 
		\hfill\\
		\begin{subfigure}{1\textwidth}
			\centering
			\includegraphics[width=1.3cm,height=1.3cm]{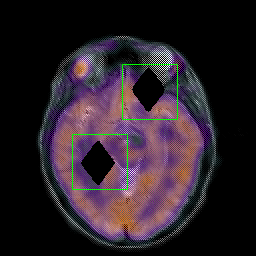}
		\end{subfigure} 
		\hfill\\
		\begin{subfigure}{1\textwidth}
			\centering
			\includegraphics[width=1.3cm,height=1.3cm]{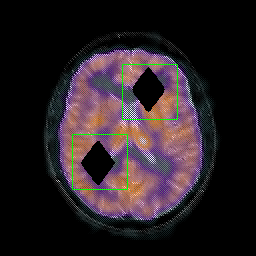}
		\end{subfigure} 
		\hfill\\
		\begin{subfigure}{1\textwidth}
			\centering
			\includegraphics[width=1.3cm,height=1.3cm]{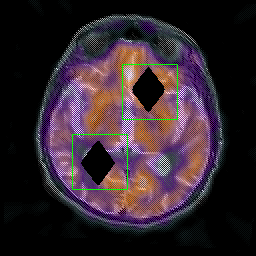}
		\end{subfigure} 
		\hfill\\
		\subcaption*{(b)}
	\end{minipage}
	\hfill
		\begin{minipage}[h]{0.06\textwidth}
		\centering
		\begin{subfigure}{1\textwidth}
			\centering
			\includegraphics[width=1.3cm,height=1.3cm]{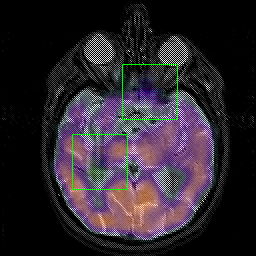}
		\end{subfigure} 
		\hfill\\
		\begin{subfigure}{1\textwidth}
			\centering
			\includegraphics[width=1.3cm,height=1.3cm]{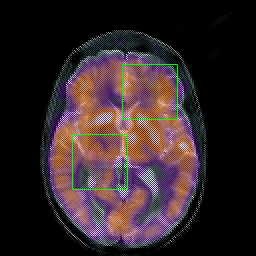}
		\end{subfigure} 
		\hfill\\
		\begin{subfigure}{1\textwidth}
			\centering
			\includegraphics[width=1.3cm,height=1.3cm]{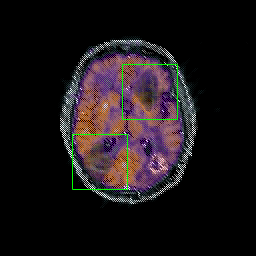}
		\end{subfigure} 
		\hfill\\
		\begin{subfigure}{1\textwidth}
			\centering
			\includegraphics[width=1.3cm,height=1.3cm]{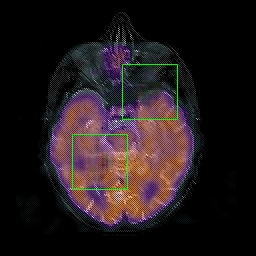}
		\end{subfigure} 
		\hfill\\
		\begin{subfigure}{1\textwidth}
			\centering
			\includegraphics[width=1.3cm,height=1.3cm]{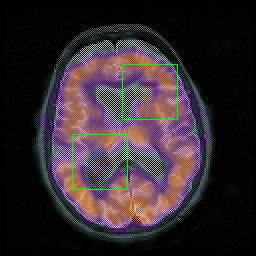}
		\end{subfigure} 
		\hfill\\
		\begin{subfigure}{1\textwidth}
			\centering
			\includegraphics[width=1.3cm,height=1.3cm]{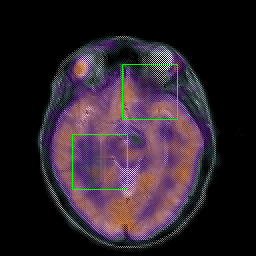}
		\end{subfigure} 
		\hfill\\
		\begin{subfigure}{1\textwidth}
			\centering
			\includegraphics[width=1.3cm,height=1.3cm]{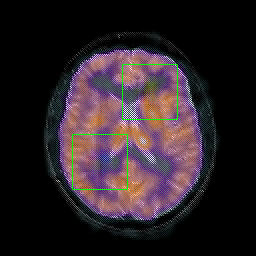}
		\end{subfigure} 
		\hfill\\
		\begin{subfigure}{1\textwidth}
			\centering
			\includegraphics[width=1.3cm,height=1.3cm]{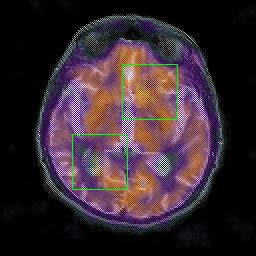}
		\end{subfigure} 
		\hfill\\
		\subcaption*{(c)}
	\end{minipage}
	\hfill
		\begin{minipage}[h]{0.06\textwidth}
		\centering
		\begin{subfigure}{1\textwidth}
			\centering
			\includegraphics[width=1.3cm,height=1.3cm]{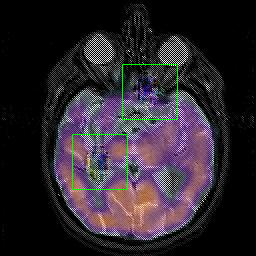}
		\end{subfigure} 
		\hfill\\
		\begin{subfigure}{1\textwidth}
			\centering
			\includegraphics[width=1.3cm,height=1.3cm]{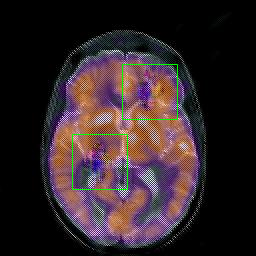}
		\end{subfigure} 
		\hfill\\
		\begin{subfigure}{1\textwidth}
			\centering
			\includegraphics[width=1.3cm,height=1.3cm]{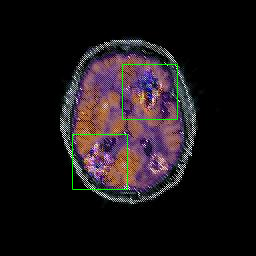}
		\end{subfigure} 
		\hfill\\
		\begin{subfigure}{1\textwidth}
			\centering
			\includegraphics[width=1.3cm,height=1.3cm]{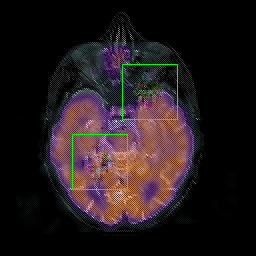}
		\end{subfigure} 
		\hfill\\
		\begin{subfigure}{1\textwidth}
			\centering
			\includegraphics[width=1.3cm,height=1.3cm]{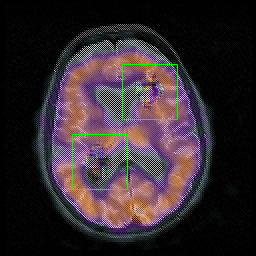}
		\end{subfigure} 
		\hfill\\
		\begin{subfigure}{1\textwidth}
			\centering
			\includegraphics[width=1.3cm,height=1.3cm]{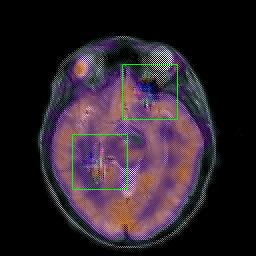}
		\end{subfigure} 
		\hfill\\
		\begin{subfigure}{1\textwidth}
			\centering
			\includegraphics[width=1.3cm,height=1.3cm]{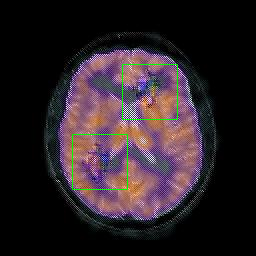}
		\end{subfigure} 
		\hfill\\
		\begin{subfigure}{1\textwidth}
			\centering
			\includegraphics[width=1.3cm,height=1.3cm]{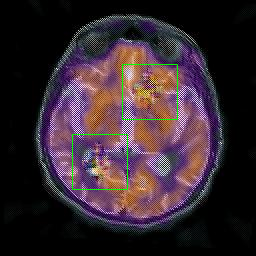}
		\end{subfigure} 
		\hfill\\
		\subcaption*{(d)}
	\end{minipage}
	\hfill
		\begin{minipage}[h]{0.06\textwidth}
		\centering
		\begin{subfigure}{1\textwidth}
			\centering
			\includegraphics[width=1.3cm,height=1.3cm]{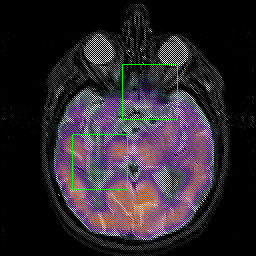}
		\end{subfigure} 
		\hfill\\
		\begin{subfigure}{1\textwidth}
			\centering
			\includegraphics[width=1.3cm,height=1.3cm]{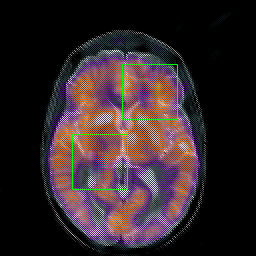}
		\end{subfigure} 
		\hfill\\
		\begin{subfigure}{1\textwidth}
			\centering
			\includegraphics[width=1.3cm,height=1.3cm]{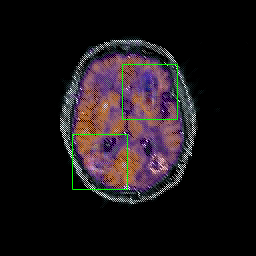}
		\end{subfigure} 
		\hfill\\
		\begin{subfigure}{1\textwidth}
			\centering
			\includegraphics[width=1.3cm,height=1.3cm]{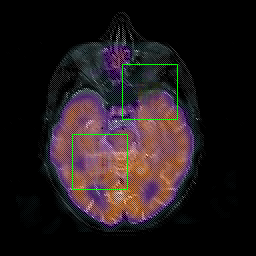}
		\end{subfigure} 
		\hfill\\
		\begin{subfigure}{1\textwidth}
			\centering
			\includegraphics[width=1.3cm,height=1.3cm]{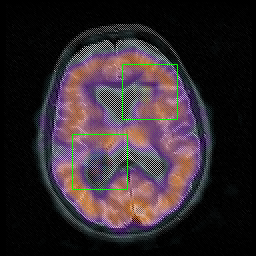}
		\end{subfigure} 
		\hfill\\
		\begin{subfigure}{1\textwidth}
			\centering
			\includegraphics[width=1.3cm,height=1.3cm]{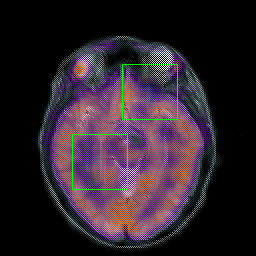}
		\end{subfigure} 
		\hfill\\
		\begin{subfigure}{1\textwidth}
			\centering
			\includegraphics[width=1.3cm,height=1.3cm]{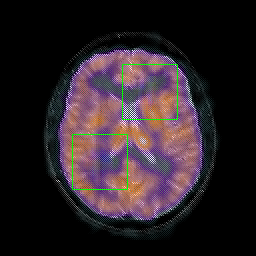}
		\end{subfigure} 
		\hfill\\
		\begin{subfigure}{1\textwidth}
			\centering
			\includegraphics[width=1.3cm,height=1.3cm]{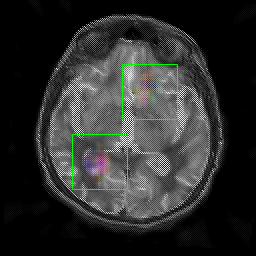}
		\end{subfigure} 
		\hfill\\
		\subcaption*{(e)}
	\end{minipage}
	\hfill
		\begin{minipage}[h]{0.06\textwidth}
		\centering
		\begin{subfigure}{1\textwidth}
			\centering
			\includegraphics[width=1.3cm,height=1.3cm]{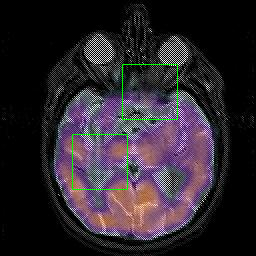}
		\end{subfigure} 
		\hfill\\
		\begin{subfigure}{1\textwidth}
			\centering
			\includegraphics[width=1.3cm,height=1.3cm]{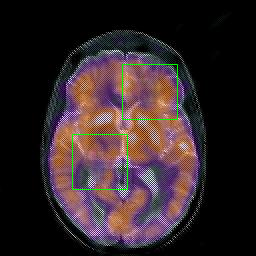}
		\end{subfigure} 
		\hfill\\
		\begin{subfigure}{1\textwidth}
			\centering
			\includegraphics[width=1.3cm,height=1.3cm]{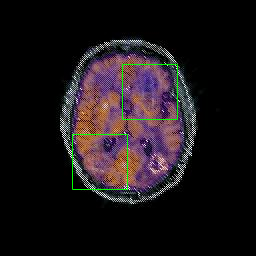}
		\end{subfigure} 
		\hfill\\
		\begin{subfigure}{1\textwidth}
			\centering
			\includegraphics[width=1.3cm,height=1.3cm]{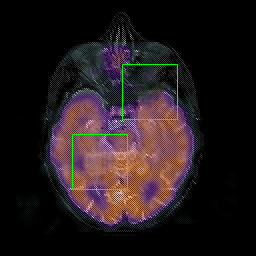}
		\end{subfigure} 
		\hfill\\
		\begin{subfigure}{1\textwidth}
			\centering
			\includegraphics[width=1.3cm,height=1.3cm]{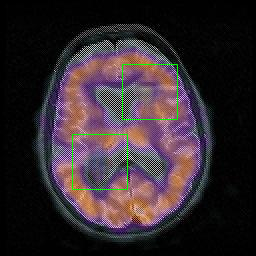}
		\end{subfigure} 
		\hfill\\
		\begin{subfigure}{1\textwidth}
			\centering
			\includegraphics[width=1.3cm,height=1.3cm]{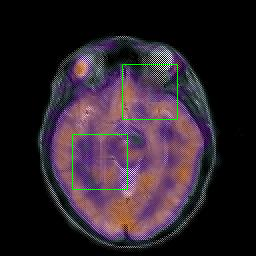}
		\end{subfigure} 
		\hfill\\
		\begin{subfigure}{1\textwidth}
			\centering
			\includegraphics[width=1.3cm,height=1.3cm]{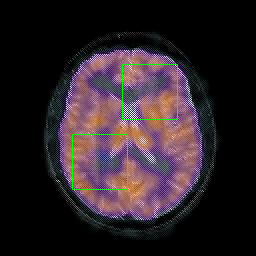}
		\end{subfigure} 
		\hfill\\
		\begin{subfigure}{1\textwidth}
			\centering
			\includegraphics[width=1.3cm,height=1.3cm]{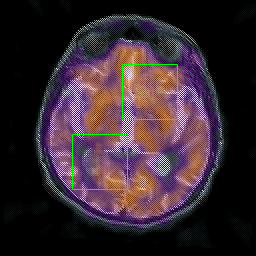}
		\end{subfigure} 
		\hfill\\
		\subcaption*{(f)}
	\end{minipage}
	\hfill
		\begin{minipage}[h]{0.06\textwidth}
		\centering
		\begin{subfigure}{1\textwidth}
			\centering
			\includegraphics[width=1.3cm,height=1.3cm]{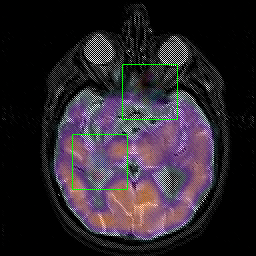}
		\end{subfigure} 
		\hfill\\
		\begin{subfigure}{1\textwidth}
			\centering
			\includegraphics[width=1.3cm,height=1.3cm]{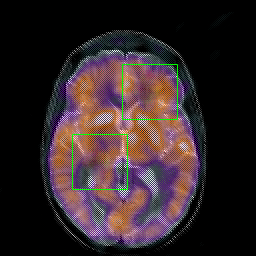}
		\end{subfigure} 
		\hfill\\
		\begin{subfigure}{1\textwidth}
			\centering
			\includegraphics[width=1.3cm,height=1.3cm]{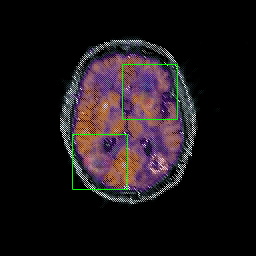}
		\end{subfigure} 
		\hfill\\
		\begin{subfigure}{1\textwidth}
			\centering
			\includegraphics[width=1.3cm,height=1.3cm]{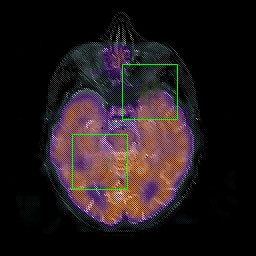}
		\end{subfigure} 
		\hfill\\
		\begin{subfigure}{1\textwidth}
			\centering
			\includegraphics[width=1.3cm,height=1.3cm]{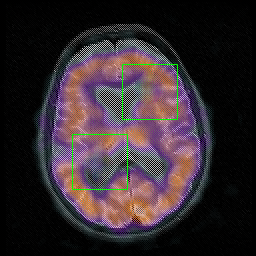}
		\end{subfigure} 
		\hfill\\
		\begin{subfigure}{1\textwidth}
			\centering
			\includegraphics[width=1.3cm,height=1.3cm]{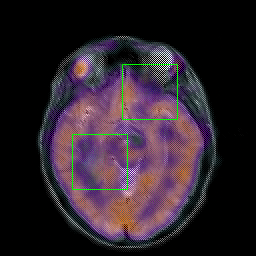}
		\end{subfigure} 
		\hfill\\
		\begin{subfigure}{1\textwidth}
			\centering
			\includegraphics[width=1.3cm,height=1.3cm]{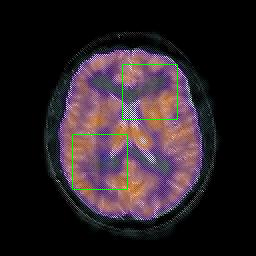}
		\end{subfigure} 
		\hfill\\
		\begin{subfigure}{1\textwidth}
			\centering
			\includegraphics[width=1.3cm,height=1.3cm]{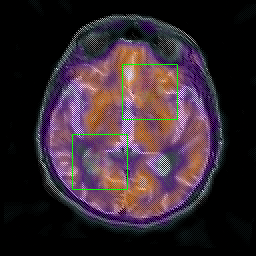}
		\end{subfigure} 
		\hfill\\
		\subcaption*{(g)}
	\end{minipage}
	\hfill
	\begin{minipage}[h]{0.06\textwidth}
		\centering
		\begin{subfigure}{1\textwidth}
			\centering
			\includegraphics[width=1.3cm,height=1.3cm]{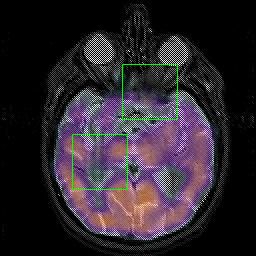}
		\end{subfigure} 
		\hfill\\
		\begin{subfigure}{1\textwidth}
			\centering
			\includegraphics[width=1.3cm,height=1.3cm]{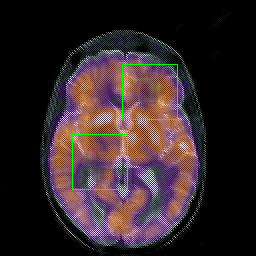}
		\end{subfigure} 
		\hfill\\
		\begin{subfigure}{1\textwidth}
			\centering
			\includegraphics[width=1.3cm,height=1.3cm]{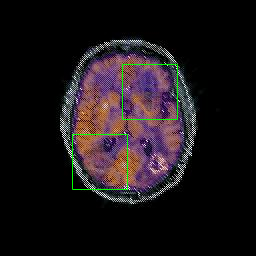}
		\end{subfigure} 
		\hfill\\
		\begin{subfigure}{1\textwidth}
			\centering
			\includegraphics[width=1.3cm,height=1.3cm]{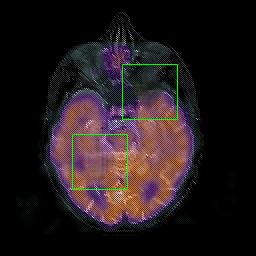}
		\end{subfigure} 
		\hfill\\
		\begin{subfigure}{1\textwidth}
			\centering
			\includegraphics[width=1.3cm,height=1.3cm]{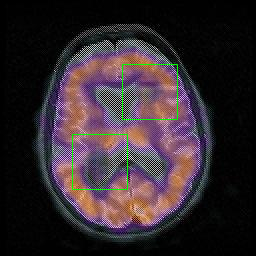}
		\end{subfigure} 
		\hfill\\
		\begin{subfigure}{1\textwidth}
			\centering
			\includegraphics[width=1.3cm,height=1.3cm]{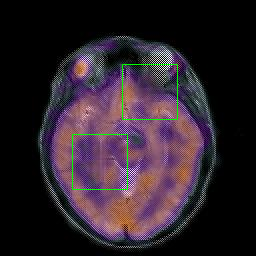}
		\end{subfigure} 
		\hfill\\
		\begin{subfigure}{1\textwidth}
			\centering
			\includegraphics[width=1.3cm,height=1.3cm]{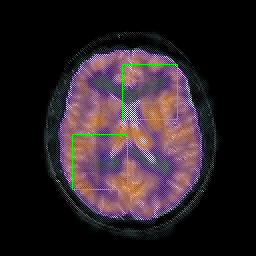}
		\end{subfigure} 
		\hfill\\
		\begin{subfigure}{1\textwidth}
			\centering
			\includegraphics[width=1.3cm,height=1.3cm]{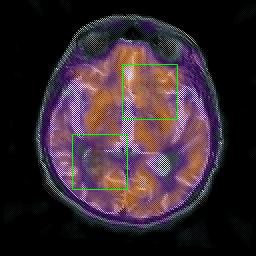}
		\end{subfigure} 
		\hfill\\
		\subcaption*{(h)}
	\end{minipage}
	\hfill
		\begin{minipage}[h]{0.06\textwidth}
		\centering
		\begin{subfigure}{1\textwidth}
			\centering
			\includegraphics[width=1.3cm,height=1.3cm]{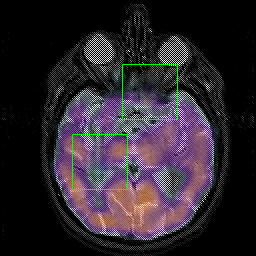}
		\end{subfigure} 
		\hfill\\
		\begin{subfigure}{1\textwidth}
			\centering
			\includegraphics[width=1.3cm,height=1.3cm]{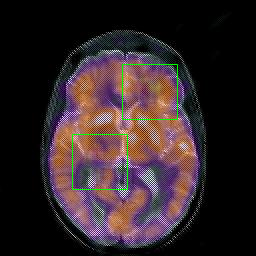}
		\end{subfigure} 
		\hfill\\
		\begin{subfigure}{1\textwidth}
			\centering
			\includegraphics[width=1.3cm,height=1.3cm]{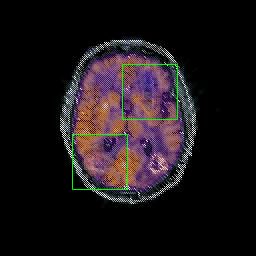}
		\end{subfigure} 
		\hfill\\
		\begin{subfigure}{1\textwidth}
			\centering
			\includegraphics[width=1.3cm,height=1.3cm]{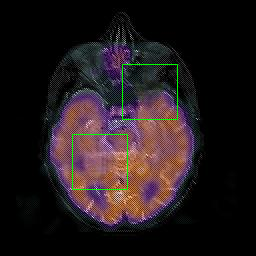}
		\end{subfigure} 
		\hfill\\
		\begin{subfigure}{1\textwidth}
			\centering
			\includegraphics[width=1.3cm,height=1.3cm]{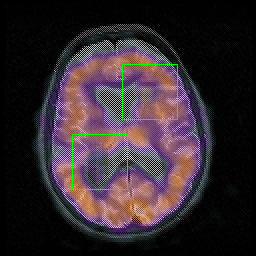}
		\end{subfigure} 
		\hfill\\
		\begin{subfigure}{1\textwidth}
			\centering
			\includegraphics[width=1.3cm,height=1.3cm]{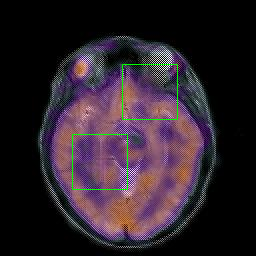}
		\end{subfigure} 
		\hfill\\
		\begin{subfigure}{1\textwidth}
			\centering
			\includegraphics[width=1.3cm,height=1.3cm]{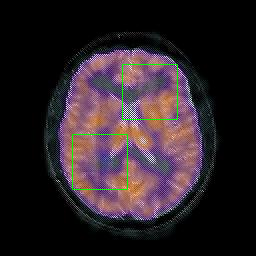}
		\end{subfigure} 
		\hfill\\
		\begin{subfigure}{1\textwidth}
			\centering
			\includegraphics[width=1.3cm,height=1.3cm]{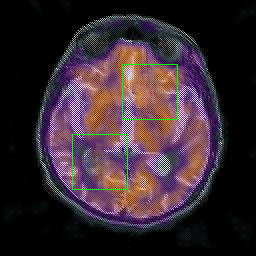}
		\end{subfigure} 
		\hfill\\
		\subcaption*{(i)}
	\end{minipage}
	\hfill
		\begin{minipage}[h]{0.06\textwidth}
		\centering
		\begin{subfigure}{1\textwidth}
			\centering
			\includegraphics[width=1.3cm,height=1.3cm]{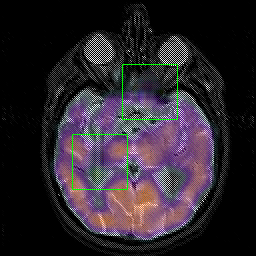}
		\end{subfigure} 
		\hfill\\
		\begin{subfigure}{1\textwidth}
			\centering
			\includegraphics[width=1.3cm,height=1.3cm]{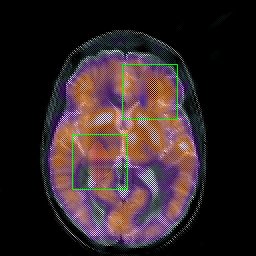}
		\end{subfigure} 
		\hfill\\
		\begin{subfigure}{1\textwidth}
			\centering
			\includegraphics[width=1.3cm,height=1.3cm]{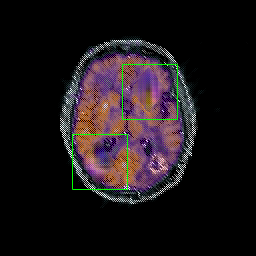}
		\end{subfigure} 
		\hfill\\
		\begin{subfigure}{1\textwidth}
			\centering
			\includegraphics[width=1.3cm,height=1.3cm]{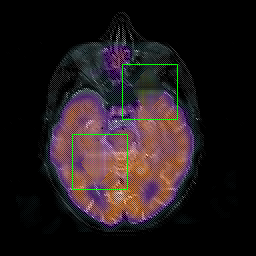}
		\end{subfigure} 
		\hfill\\
		\begin{subfigure}{1\textwidth}
			\centering
			\includegraphics[width=1.3cm,height=1.3cm]{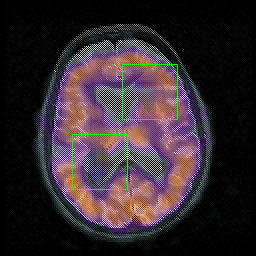}
		\end{subfigure} 
		\hfill\\
		\begin{subfigure}{1\textwidth}
			\centering
			\includegraphics[width=1.3cm,height=1.3cm]{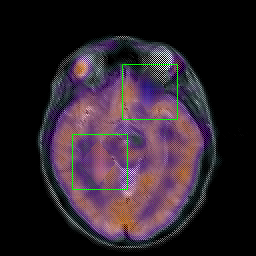}
		\end{subfigure} 
		\hfill\\
		\begin{subfigure}{1\textwidth}
			\centering
			\includegraphics[width=1.3cm,height=1.3cm]{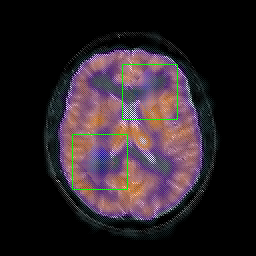}
		\end{subfigure} 
		\hfill\\
		\begin{subfigure}{1\textwidth}
			\centering
			\includegraphics[width=1.3cm,height=1.3cm]{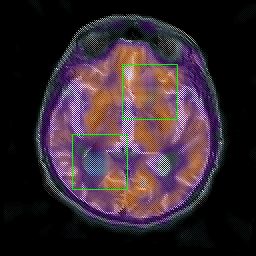}
		\end{subfigure} 
		\hfill\\
		\subcaption*{(j)}
	\end{minipage}
	\hfill
	\begin{minipage}[h]{0.06\textwidth}
		\centering
		\begin{subfigure}{1\textwidth}
			\centering
			\includegraphics[width=1.3cm,height=1.3cm]{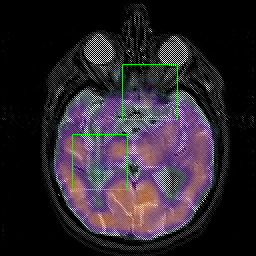}
		\end{subfigure} 
		\hfill\\
		\begin{subfigure}{1\textwidth}
			\centering
			\includegraphics[width=1.3cm,height=1.3cm]{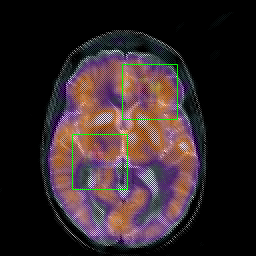}
		\end{subfigure} 
		\hfill\\
		\begin{subfigure}{1\textwidth}
			\centering
			\includegraphics[width=1.3cm,height=1.3cm]{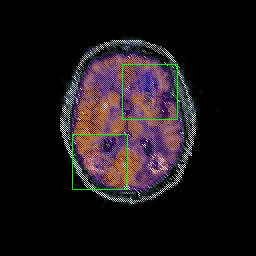}
		\end{subfigure} 
		\hfill\\
		\begin{subfigure}{1\textwidth}
			\centering
			\includegraphics[width=1.3cm,height=1.3cm]{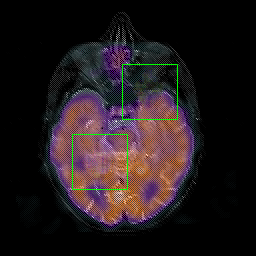}
		\end{subfigure} 
		\hfill\\
		\begin{subfigure}{1\textwidth}
			\centering
			\includegraphics[width=1.3cm,height=1.3cm]{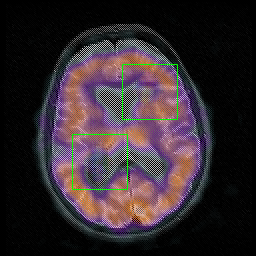}
		\end{subfigure} 
		\hfill\\
		\begin{subfigure}{1\textwidth}
			\centering
			\includegraphics[width=1.3cm,height=1.3cm]{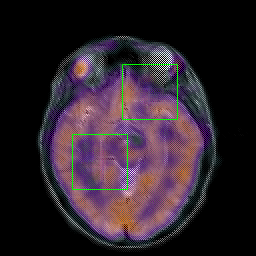}
		\end{subfigure} 
		\hfill\\
		\begin{subfigure}{1\textwidth}
			\centering
			\includegraphics[width=1.3cm,height=1.3cm]{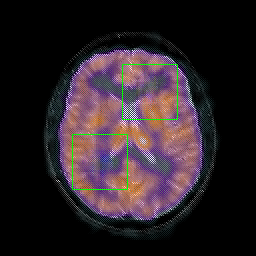}
		\end{subfigure} 
		\hfill\\
		\begin{subfigure}{1\textwidth}
			\centering
			\includegraphics[width=1.3cm,height=1.3cm]{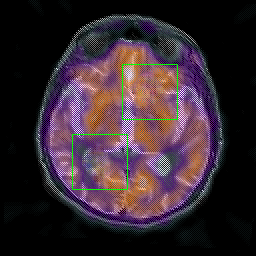}
		\end{subfigure} 
		\hfill\\
		\subcaption*{(k)}
	\end{minipage}
	\hfill
	\begin{minipage}[h]{0.06\textwidth}
		\centering
		\begin{subfigure}{1\textwidth}
			\centering
			\includegraphics[width=1.3cm,height=1.3cm]{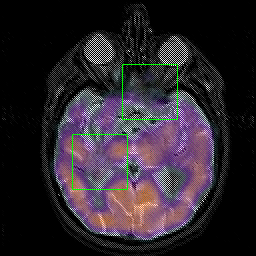}
		\end{subfigure} 
		\hfill\\
		\begin{subfigure}{1\textwidth}
			\centering
			\includegraphics[width=1.3cm,height=1.3cm]{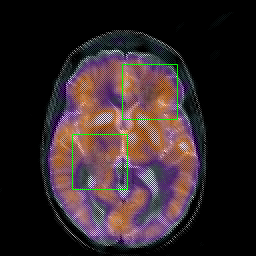}
		\end{subfigure} 
		\hfill\\
		\begin{subfigure}{1\textwidth}
			\centering
			\includegraphics[width=1.3cm,height=1.3cm]{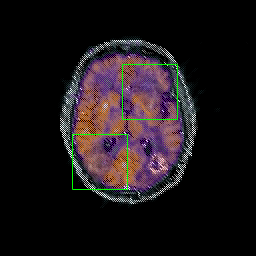}
		\end{subfigure} 
		\hfill\\
		\begin{subfigure}{1\textwidth}
			\centering
			\includegraphics[width=1.3cm,height=1.3cm]{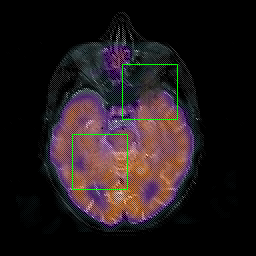}
		\end{subfigure} 
		\hfill\\
		\begin{subfigure}{1\textwidth}
			\centering
			\includegraphics[width=1.3cm,height=1.3cm]{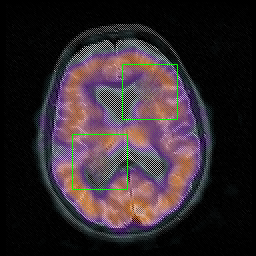}
		\end{subfigure} 
		\hfill\\
		\begin{subfigure}{1\textwidth}
			\centering
			\includegraphics[width=1.3cm,height=1.3cm]{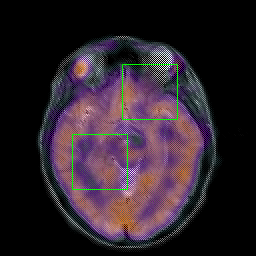}
		\end{subfigure} 
		\hfill\\
		\begin{subfigure}{1\textwidth}
			\centering
			\includegraphics[width=1.3cm,height=1.3cm]{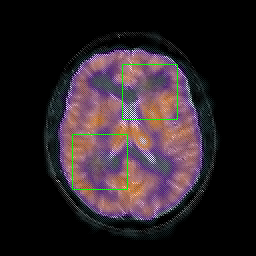}
		\end{subfigure} 
		\hfill\\
		\begin{subfigure}{1\textwidth}
			\centering
			\includegraphics[width=1.3cm,height=1.3cm]{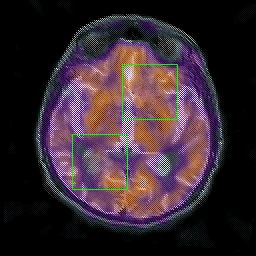}
		\end{subfigure} 
		\hfill\\
		\subcaption*{(l)}
	\end{minipage}
	\hfill\\
		\begin{minipage}{1\linewidth}
		\begin{table}[H]
				\centering
			\resizebox{15.5cm}{2cm}{
				\begin{tabular}{|c|c|c|c|c|c|c|c|c|c|c|}		
					\hline
					Methods:& IRLNM-QR  & WNNM  & MC-NC  & TNNR & TNN-SR &LRQMC & LRQA-G  & QLNF &TQLNA& QQR-QNN-SR  
					\\ \toprule
					\hline
					Images:  &\multicolumn{10}{c|}{Random block missing}\\
					\hline
					Image (9)&31.268/0.988	&27.331/0.982	&32.451/0.989	&33.237/0.836	&33.004/0.989	&33.049/0.990
					&33.283/0.990	&32.957/\textbf{0.991}	&32.656/0.963	&\textbf{33.458}/0.990\\
					Image (10)&32.552/0.992	&29.195/0.988	&32.420/\textbf{0.993}	&33.073/0.914	&33.485/0.992	&32.948/0.992	&33.075/\textbf{0.993}	&30.806/0.991   &32.545/0.956	&\textbf{33.658}/\textbf{0.993}\\
					Image (11)&30.511/0.987	&26.068/0.980	&31.107/0.988	&32.015/0.946	&32.888/0.990	&31.835/0.988	&31.613/0.988	&31.318/0.989	&31.075/0.985	&\textbf{33.522}/\textbf{0.991}\\
					Image (12)&33.247/0.990	&30.071/0.984	&34.193/0.991	&34.126/0.824	&35.141/\textbf{0.992}	&33.072/0.989	&33.181/0.989	&32.871/0.988	&33.054/0.969	&\textbf{35.173}/\textbf{0.992}\\
					Image (13)&30.523/0.990	&28.637/0.986	&31.355/0.990	&31.589/0.986	&32.563/\textbf{0.991}	&31.306/0.989	&31.702/0.990	&30.447/0.988	&31.682/0.988	&\textbf{32.636}/\textbf{0.991}\\
					Image (14)&35.224/0.993	&30.661/0.987	&35.171/0.993	&35.240/0.921	&35.691/\textbf{0.994}	&35.657/0.993	&35.660/\textbf{0.994}	&34.562/0.993	&35.263/0.982	&\textbf{35.749}/\textbf{0.994}\\
					Image (15)&34.387/0.993	&29.966/0.986	&35.247/\textbf{0.994}	&34.636/0.962	&35.542/0.992	&34.813/0.992	&34.820/0.993	&33.725/0.993	&34.756/0.993	&\textbf{35.775}/0.993\\
					Image (16)&29.627/0.991	&26.839/0.985	&29.759/0.990	&30.006/0.859	&31.514/\textbf{0.992}	&29.847/0.988	&30.158/0.989	&30.432/0.990	&30.194/0.980	&\textbf{31.907}/\textbf{0.992}\\
					\hline
					Aver. &32.241	&28.691			&32.792		&32.943		&33.867		&32.764		&32.867		&22.708		&32.668		&\textbf{34.107} \\ \toprule		
			\end{tabular}}
			\label{Table5}
		\end{table}	
	   \subcaption*{(m)}
       \end{minipage}
   \setcounter{figure}{8} 	
	\caption{\label{fig:blockmissing} (a) Original color medical images. (b) The observed color medical images (block missing). (c)-(l) are the completion results of IRLNM-QR, WNNM, MC-NC, TNNR, TNN-SR, LRQMC, LRQA-G, QLNF, TQLNA, and QQR-QNN-SR, respectively. (m) Quantitative quality indexes (PSNR/SSIM) of different methods on color medical images (2 random rhombus blocks missing).}
\end{figure}

\section{Conclusions}
 Using quaternion matrix analysis, we proposed a new Tri-Factorization of a quaternion matrix called CQSVD-QQR, which is based on the quaternion QR decomposition and can approximate the QSVD of a quaternion matrix. The coupling between color channels may be handled naturally and the color information of color images is better retained when color pixels are regarded as vector units rather than scalars in the quaternion representation for color images. Therefore, within the context of the new quaternion matrix trifactorization, we developed a novel method for color image completion in this study, using the tool of quaternion representation of color images. Additionally, when developing the model, we take into account both the low rank of the image and the sparse prior information about the image. The method is based on CQSVD-QQR, quaternion nuclear norm, and sparse regularizer, called QQR-QNN-SR. We solve this problem under the quaternion ADMM framework. Experimental results on color images and color medical images show that the proposed method exhibits superiority in both numerical results and visual effects compared to several state-of-the-art methods. 
 
\section*{Acknowledgments}
This work was supported by University of Macau (MYRG2019-00039-FST), Science and Technology Development Fund, Macao S.A.R (FDCT/0036/2021/AGJ), and Science and Technology Planning Project of Guangzhou City, China (Grant No. 201907010043).

\section*{Declaration of competing interest}

The authors declare that they have no known competing financial interests or personal relationships that could have appeared to influence the work reported in this paper.

\normalem
\bibliographystyle{unsrt}
\bibliography{sample}

\end{document}